\documentclass[11pt]{article}

\usepackage{a4wide}
\usepackage{color}
\usepackage{amsfonts,amsmath,amsthm}
\usepackage{authblk}

\newcommand{\eps}{\varepsilon}

\newcommand{\ex}{\mathbb{E}}

\newcommand{\B}{{\mathcal B}}

\newcommand{\E}{{\mathcal E}}

\newcommand{\C}{{\textsc C}}

\newcommand{\la}{\ell}

\newcommand{\Bin}{\mathrm{Bin}}

\newcommand{\CE}{{\mathcal{E}}}

\renewcommand{\Pr}[1]{\mathbb{P}\left(#1\right)}
\renewcommand{\E}[1]{\mathbb{E}\left(#1\right)}

\newcommand{\Pcl}{\widetilde{H}_{n,p,k}}

\newtheorem{theorem}{Theorem}[section]

\newtheorem{corollary}[theorem]{Corollary}
\newtheorem{proposition}[theorem]{Proposition}
\newtheorem{lemma}[theorem]{Lemma}
\newtheorem{claim}[theorem]{Claim}

\numberwithin{equation}{section}
\numberwithin{firsttheorem}{section}

\title{The Multiple-orientability Thresholds for Random Hypergraphs\footnote{An extended abstract of this work appeared in the \emph{Proceedings of the 22nd ACM-SIAM Symposium on Discrete Algorithms (SODA '11).}}}
\author[1]{Nikolaos Fountoulakis}
\author[2]{Megha Khosla}
\author[3]{Konstantinos Panagiotou}
\affil[1]{School of Mathematics, University of Birmingham, United Kingdom}
\affil[2]{Max Planck Institute for Informatics Saarbr\"ucken, Germany}
\affil[3]{Mathematisches Institut Ludwig-Maximilians-Universit\"at M\"unchen, Germany}

\date{\vspace{-5ex}}
\begin{document}
\maketitle

\begin{abstract} \small
A $k$-uniform hypergraph~$H = (V, E)$ is called~$\ell$-orientable, if there is an assignment of each edge $e\in E$ to one of its vertices $v\in e$ such that no vertex is assigned more than~$\ell$ edges. Let~$H_{n,m,k}$ be a hypergraph, drawn uniformly at random from the set of all~$k$-uniform hypergraphs with~$n$ vertices and~$m$ edges. In this paper we establish the threshold for the $\ell$-orientability of~$H_{n,m,k}$ for all~$k\ge 3$ and $\ell \ge 2$, i.e., we determine a critical quantity~$c_{k, \ell}^*$ such that with probability~$1-o(1)$ the graph~$H_{n,cn,k}$ has an $\ell$-orientation if~$c < c_{k, \ell}^*$, but fails doing so if~$c > c_{k, \ell}^*$.

Our result has various applications including sharp load thresholds for cuckoo hashing, load balancing with guaranteed maximum load, and massive parallel access to hard disk arrays.
\end{abstract}

\section{Introduction}

This paper studies the property of multiple orientability of random hypergraphs. For any integers $k \ge 2$ and $\ell \ge 1$, a $k$-uniform hypergraph is called $\ell$\emph{-orientable}, if for each edge we can select one of its vertices, so that all vertices are selected at most $\ell$ times. This definition generalizes the classical notion of orientability of graphs, where we want to orient the edges under the condition that no vertex has in-degree larger than $\ell$. In this paper, we consider random $k$-uniform hypergraphs $H_{n,m,k}$, for $k\geq 3$, with~$n$ vertices and~$m = \lfloor cn \rfloor$ edges. Our main result establishes the existence of a critical density $c_{k, \ell}^*$ (determined explicitly in Thorem~\ref{thm:main}), such that when $c$ crosses this value the probability that the random hypergraph is $\ell$-orientable drops abruptly from $1-o(1)$ to $o(1)$, as the number of vertices $n$ grows.  

The case $k=2$ and $\ell \ge 1$ is well-understood. In fact, this case corresponds to the classical random graph $G_{n,m}$ drawn uniformly from the set of all graphs with $n$ vertices and $m$ edges. A result of Fernholz	and Ramachandran \cite{1283432} and Cain, Sanders and Wormald~\cite{1283433} implies that there is a constant 
$c_{2, \ell}^\ast$ such that as $n \to \infty$
\[
	\Pr{G_{n,\lfloor cn\rfloor}\text{ is $\ell$-orientable}}
	\to
	\begin{cases}
		0, & \text{ if } c > c_{2, \ell}^\ast \\
		1, & \text{ if } c < c_{2, \ell}^\ast
	\end{cases}.
\]
In other words, there is a critical value such that when the average degree is below this, then with high probability an $\ell$-orientation exists, and otherwise not. We want to remark at this point that the orientation can be found efficiently by solving a matching problem on a suitably defined bipartite graph, but we will not consider computational issues any further in this paper.

Similarly, the case~$\ell = 1$ and $k \ge 3$ arbitrary is also well-understood. The threshold for the 1-orientabilty is known from the work of the first and the third author~\cite{fp10,fp12}, and Frieze and Melsted~\cite{fm12}. In particular, there is a constant~$c_{k, 1}^\ast$ such that as $n \to \infty$
\[
	\Pr{H_{n,\lfloor cn\rfloor, k}\text{ is~$1$-orientable}}
	\to
	\begin{cases}
		0, & \text{ if } c > c_{k, 1}^\ast \\
		1, & \text{ if } c < c_{k, 1}^\ast
	\end{cases}.
\]
In this paper we consider the general case, i.e.,~$k$ and~$\ell$ arbitrary. Our main result is summarized in the following theorem, and settles the threshold for the $\ell$-orientability property of random hypergraphs for all $k$ and $\ell$.
\begin{theorem}
\label{thm:main}
For integers~$k\geq 3$ and~$\ell \geq 2$ let~$\xi^\ast$ be the unique solution of the equation
\begin{equation}\label{eq:kxi}
 k \ell = \frac{\xi^\ast Q(\xi^\ast,\ell)}{Q(\xi^\ast,\ell+1)}, \text{ where } Q(x,y) = 1 - e^{-x}\sum_{j < y}\frac{x^j}{j!}.
\end{equation}
Let~$c_{k, \ell}^\ast= \frac{\xi^\ast}{kQ(\xi^\ast,\ell)^{k-1}}$. Then
\begin{equation}
\label{eq:phnmkl}
	\Pr{H_{n,\lfloor cn\rfloor, k}\text{ is~$\ell$-orientable}}
	\stackrel{(n \to \infty)}{=} 
	\begin{cases}
		0, & \text{ if } c > c_{k, \ell}^\ast \\
		1, & \text{ if } c < c_{k, \ell}^\ast
	\end{cases}.
\end{equation}
\end{theorem}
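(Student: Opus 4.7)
The approach will rest on a Hall-type characterization: $H$ is $\ell$-orientable if and only if every vertex subset $U \subseteq V$ satisfies $e(U) \le \ell |U|$, where $e(U)$ counts the edges of $H$ contained in $U$. (This follows from Hall's theorem applied to the bipartite incidence graph in which each vertex is replicated $\ell$ times.) A subset $U$ violating this inequality will be called \emph{bad}. The problem thus reduces to deciding, as a function of $c$, whether bad subsets exist with high probability.

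For the necessity direction ($c > c_{k,\ell}^\ast$), I plan to exhibit an explicit bad set, namely the $(\ell+1)$-core $\mathcal{C}$ of $H_{n,\lfloor cn\rfloor, k}$. By the standard peeling analysis (Molloy-type theorems, or Pittel--Spencer--Wormald-style differential equations), whenever $c$ is above the core-emergence threshold, $\mathcal{C}$ will w.h.p.\ have $|V(\mathcal{C})| \sim n\,Q(\xi,\ell+1)$ vertices and $|E(\mathcal{C})| \sim \tfrac{n\xi\,Q(\xi,\ell)}{k}$ edges, where $\xi = \xi(c)$ is the largest positive solution of $\xi = kc\,Q(\xi,\ell)^{k-1}$. (The edge count follows from the fact that in the local Poisson approximation a core vertex has conditional degree distributed as $\mathrm{Po}(\xi)$ truncated at $\ell+1$, hence expected degree $\xi Q(\xi,\ell)/Q(\xi,\ell+1)$; multiplying by $|V(\mathcal{C})|$ and dividing by $k$ yields $|E(\mathcal{C})|$.) The core density is therefore $\xi Q(\xi,\ell)/\bigl(k Q(\xi,\ell+1)\bigr)$, which by \eqref{eq:kxi} equals $\ell$ exactly at $c = c_{k,\ell}^\ast$ and strictly exceeds $\ell$ for $c > c_{k,\ell}^\ast$. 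Hence $\mathcal{C}$ will witness a violation of Hall's condition w.h.p.

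For the sufficiency direction ($c < c_{k,\ell}^\ast$), I will apply the first moment method. For a fixed set $U_s$ of size $s = \alpha n$, $e(U_s)$ is hypergeometric with mean $\lfloor cn\rfloor \binom{s}{k}/\binom{n}{k} \sim c n \alpha^k$, and standard Chernoff tail bounds together with Stirling's estimate $\binom{n}{s} \le e^{nH(\alpha)}$ give
\[
\Pr{H_{n,\lfloor cn\rfloor,k}\text{ is not $\ell$-orientable}}
\;\le\; \sum_{s=1}^{n} \binom{n}{s}\,\Pr{e(U_s) > \ell s}
\;\le\; \sum_s \exp\bigl(n\,\Psi(\alpha;c) + o(n)\bigr),
\]
where $\Psi(\alpha;c) = H(\alpha) - c\,D(\ell\alpha/c \,\|\, \alpha^k)$, with $H$ the binary entropy and $D$ the Kullback--Leibler divergence. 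The task is then to show $\sup_{\alpha\in(0,1)}\Psi(\alpha;c) < 0$ for $c < c_{k,\ell}^\ast$. Writing out the first-order optimality condition $\partial_\alpha\Psi = 0$ and making the substitution $\xi = kc\,\alpha^{k-1}$ should reproduce precisely the identity \eqref{eq:kxi} at the critical $c$, so that the sufficiency threshold coincides with the necessity threshold.

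The hard part will be aligning the variational threshold coming from the first-moment rate function with the algebraic condition \eqref{eq:kxi} obtained from the core analysis. This requires a careful KKT analysis to verify that the extremal $\alpha^\ast$ of $\Psi$ corresponds, via $\xi^\ast = kc(\alpha^\ast)^{k-1}$, to the $(\ell+1)$-core parameter $\xi^\ast$. Additional delicacy will be needed for the small-$s$ regime, where Chernoff is crude and a direct combinatorial estimate on the number of small dense subhypergraphs is required; for the large-$s$ regime, where the hypergeometric-to-binomial correction enters; and for checking that the core-emergence threshold lies strictly below $c_{k,\ell}^\ast$, so that the core witness in the necessity argument is legitimate throughout the relevant range of $c$.
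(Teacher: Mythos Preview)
Your necessity argument (the $(\ell+1)$-core witnesses a Hall violation for $c>c_{k,\ell}^\ast$) matches the paper exactly. The gap is in the sufficiency direction: the plain first-moment bound over all vertex subsets is \emph{not} tight enough to reach $c_{k,\ell}^\ast$.

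Your rate function $\Psi(\alpha;c)=H(\alpha)-c\,D(\ell\alpha/c\,\|\,\alpha^k)$ is built from Bernoulli/binomial large deviations, and its stationary equation involves only $\alpha$ and $\alpha^k$; there is no mechanism by which the truncated-Poisson tail $Q(\xi,\ell)$ would appear. So the assertion that $\partial_\alpha\Psi=0$ ``reproduces precisely the identity~\eqref{eq:kxi}'' is not correct. Concretely, take $k=3$, $\ell=2$: here $c_{3,2}^\ast\approx 1.97$ and the $(\ell+1)$-core occupies a fraction $\alpha^\ast=Q(\xi^\ast,3)\approx 0.92$ of the vertices. A uniformly random set of that size has expected internal edge count $c(\alpha^\ast)^3 n\approx 1.54n$, far below the target $\ell\alpha^\ast n\approx 1.84n$. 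One computes $H(0.92)\approx 0.28$ while $c\,D(0.934\,\|\,0.778)\approx 0.18$, so $\Psi(\alpha^\ast;c_{3,2}^\ast)\approx 0.10>0$. By continuity the first moment already blows up for some $c$ strictly below $c_{k,\ell}^\ast$, and the method proves only a strictly smaller threshold. The underlying reason is that the extremal near-bad set (the core) is a highly atypical set of its size, so the union bound over \emph{all} $\binom{n}{\alpha n}$ sets is wasteful.

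What the paper does instead for large $\alpha$ is to restrict attention to the $(\ell+1)$-core, pass to a configuration model whose degrees are i.i.d.\ truncated Poisson, and run the first moment there over subsets $U$ of the core indexed by both their relative size $\beta$ and their relative total degree $q$; the combinatorics of the random matching supplies an extra factor that, combined with the truncated-Poisson large-deviation rate $I_{\xi^\ast}$, yields a two-variable exponent $f(\beta,q)$ that \emph{does} stay negative up to $c_{k,\ell}^\ast$ (Lemmas~\ref{lem:BU}--\ref{lem:final_prob} and Claim~\ref{cl:techn}). The truncated-Poisson rate function is exactly where $Q(\xi,\ell)$ enters. Your small-$s$ treatment is fine (and essentially what Lemma~\ref{lem:smallU1} does), but for sets of size comparable to the core you need this refined in-core calculation, not the crude Chernoff bound on $e(U_s)$.
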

A similar result by using completely different techniques was also shown recently in a slightly different context by Gao and 
Wormald~\cite{gw10}, with the restriction that the product~$k \ell$ is large. So, our result fills the remaining gap, and treats especially the
cases of small~$k$ and arbitrary~$\ell$, which are most interesting in practical applications. Further generalizations of the concept of
orientability of hypergraphs have been considered after our work in~\cite{inp:Lelarge} and \cite{inp:LecLelargeMas}, where tight results are also obtained.

\subsection{Applications}

\paragraph{Cuckoo Hashing} The paradigm of many choices has influenced significantly the design of efficient data structures and, most notably, hash tables. \emph{Cuckoo hashing}, introduced by Pagh and Rodler~\cite{inp:pr01}, is a technique that extends this concept.
We consider here a slight variation of the original idea, see also the paper~\cite{inc:fpss03} by Fotakis, Pagh,  Sanders and Spirakis, where we are given a table with~$n$ locations, and we assume that each location can hold $\ell$ items. 
Each item to be inserted chooses randomly $k\ge 2$ locations and has to be placed in any one of them. 
How much load can cuckoo hashing handle before collisions make the successful assignment of the available items 
to the chosen locations impossible? 
Practical evaluations of this method have shown that one can allocate a number of 
elements that is a large proportion of the size of the table, being very close to~1 even for small values of~$k\ell$ such as 4 or 6. Our main theorem provides the theoretical foundation for this empirical observation: if the number of items is less than $c_{k, \ell}^* n$, then it is highly likely that they can be allocated. However, if their number is larger, then most likely every allocation will have an overfull bin. Our result thus proves a conjecture about the threshold loads of cuckoo hashing made in~\cite{Dietz09}.
\paragraph{Load Balancing}

In a typical load balancing problem we are given a set of $m = \lfloor cn \rfloor$ identical jobs, and $n$ machines on which they can be executed. Suppose that each job may choose randomly among $k$ different machines. Is there any upper bound for the maximum load that can be guaranteed with high probability? Our main result implies that as long as $c < c^\ast_{k,\ell}$, then there is an assignment of the jobs to their preferred machines such that no machine is assigned more than $\ell$ different tasks.

\paragraph{Parallel Access to Hard Disks}

In our final application we are given $n$ hard disks (or any other means of storing large amounts of information), which can be accessed independently of each other. We want to  store there a big data set redundantly, that gives us some degree of fault tolerance, and at the same time minimize the number of I/O steps needed to retreive the data (see \cite{sek99} for more details). Theorem~\ref{thm:main} implies that if $k$ randomly allocated copies of each block exist on $n$ hard disks then $m=\lfloor cn \rfloor$ different data blocks can be read, with at most $\ell$ parallel queries on each disk, with high probability provided that $c < c^\ast_{k,\ell}$.

\section{Proof Strategy \& The Upper Bound}
\label{sec:prf_str}
Our main result follows immediately from the two theorems below. The first statement says that~$H_{n,m,k}$ has a subgraph of density~$>\ell$ (i.e., the ratio of the number of edges to the number of vertices in this subgraph is greater than~$\ell$) if~$c > c^\ast_{k, \ell}$. 
We denote by the~$(\ell+1)$-core of a hypergraph its maximum subgraph that has minimum degree at least~$\ell+1$.
\begin{theorem}\label{thm:simple}
 Let~$c_{k,\ell}^*$ be defined as in Theorem~\ref{thm:main}. If~$c>c_{k,\ell}^\ast$, then with probability~$1-o(1)$ the~$(\ell+1)$-core of~$H_{n,cn,k}$ has density greater than~$\ell$.
\end{theorem}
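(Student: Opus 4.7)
The approach rests on the classical description of the $(\ell+1)$-core of a random sparse hypergraph, as developed by Molloy and Pittel--Spencer--Wormald and extended in several later works. For any $c$ above the emergence threshold of the $(\ell+1)$-core, there is a largest positive solution $\xi = \xi(c)$ of the fixed-point equation
\begin{equation*}
c\, k\, Q(\xi, \ell)^{k-1} \;=\; \xi,
\end{equation*}
and whp the $(\ell+1)$-core of $H_{n,cn,k}$ has $(1+o(1))\, n\, Q(\xi,\ell+1)$ vertices and $(1+o(1))\, n\, \xi\, Q(\xi,\ell)/k$ edges. The first step of the proof is to invoke (a version of) this asymptotic for the uniform model $H_{n,cn,k}$, which is a routine adaptation from the binomial setting.

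Dividing the two formulas shows that the density of the core equals
\begin{equation*}
D(c) \;=\; \frac{1}{k}\cdot g(\xi(c)), \qquad g(\xi) \,:=\, \frac{\xi\, Q(\xi,\ell)}{Q(\xi,\ell+1)},
\end{equation*}
up to $o(1)$ error. The equation~\eqref{eq:kxi} defining $\xi^\ast$ reads exactly $g(\xi^\ast) = k\ell$, so $D(c^\ast)$ equals $\ell$ in the limit. The remaining task is therefore to prove $D(c) > \ell$ for every $c > c^\ast$, and this splits into two monotonicity statements: (i) $c \mapsto \xi(c)$ is strictly increasing at $c = c^\ast$ on the upper branch; (ii) $g(\xi)$ is strictly increasing in $\xi$.

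Claim~(i) follows once one verifies that $f(\xi) := \xi/(k Q(\xi,\ell)^{k-1})$ is strictly increasing at $\xi^\ast$; using the identity $Q'(\xi,\ell) = e^{-\xi}\xi^{\ell-1}/(\ell-1)!$, this reduces to a short, direct calculation, which also confirms that $\xi^\ast$ lies on the upper (stable) branch so that $\xi(c)$ is defined in a neighbourhood of $c^\ast$. Claim~(ii) is more conceptual: since $g(\xi) = \mathbb{E}[X \mid X \geq \ell+1]$ for $X \sim \mathrm{Poisson}(\xi)$, the monotone likelihood ratio property of the Poisson family gives stochastic ordering of the conditional laws in $\xi$, hence strict monotonicity of the conditional expectation. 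Together (i) and (ii) give $D(c) > D(c^\ast) = \ell$ for any $c > c^\ast$, and since this strict excess is bounded away from zero for fixed $c > c^\ast$, the concentration provided by the core theorem converts it into a whp statement. The main obstacle is really verifying that $\xi^\ast$ sits on the upper branch — the monotonicity of $g$ is clean via MLR, but checking $f'(\xi^\ast) > 0$ requires combining the Poisson-derivative identity with~\eqref{eq:kxi} itself.
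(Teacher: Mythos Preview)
Your proposal is correct and follows essentially the same route as the paper: both invoke a core theorem to obtain the asymptotic density $\tfrac{1}{k}\cdot\tfrac{\xi Q(\xi,\ell)}{Q(\xi,\ell+1)}$, then reduce to the two monotonicity statements $c\mapsto\xi(c)$ increasing (via analysis of $\xi/Q(\xi,\ell)^{k-1}$ on the upper branch) and $\xi\mapsto g(\xi)$ increasing. The only notable differences are cosmetic: the paper reaches the core asymptotics through Kim's Poisson cloning model rather than the Molloy/Pittel--Spencer--Wormald framework you cite, and it proves the monotonicity of $g$ by a bare-hands coefficient comparison (Claim~\ref{clm:incXi}) rather than your cleaner MLR/stochastic-ordering argument.
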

Note that this implies the statement in the first line of~\eqref{eq:phnmkl}, as by the pigeonhole principle it is impossible to orient the edges of a hypergraph with density larger than~$\ell$ so that each vertex has indegree at most~$\ell$.

The above theorem is not very difficult to prove, as the core of random hypergraphs and its structural characteristics have been studied quite extensively in recent years, see e.g.\ the results by Cooper~\cite{ar:c04}, Molloy \cite{ar:m05} and Kim~\cite{ar:k06}. However, it requires some technical work, which is accomplished in Section~\ref{ssec:ProofThmSimple}. The heart of this paper is devoted to the ``subcritical" case, where we show that the above result is essentially tight.
\begin{theorem}\label{thm:diff}
 Let~$c_{k,\ell}^*$ be defined as in Theorem~\ref{thm:main}. If~$c<c_{k,\ell}^\ast$, then with probability~$1-o(1)$ all subgraphs of~$H_{n,cn,k}$ have density smaller than~$\ell$.
\end{theorem}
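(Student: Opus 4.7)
The plan is to reduce the claim to an analysis within the $(\ell+1)$-core of $H_{n,cn,k}$ and then to apply a first moment method. The reduction is purely combinatorial: suppose some induced sub-hypergraph $S$ has density strictly greater than $\ell$, and iteratively remove any vertex of degree at most $\ell$ within the currently induced sub-hypergraph. A short calculation shows that removing a vertex of degree $d \leq \ell$ from a sub-hypergraph of density $\mu > \ell$ yields a new sub-hypergraph of density $(\mu|V| - d)/(|V| - 1) > \mu$, so the density strictly increases at each step. Since the density stays bounded away from zero, the procedure cannot exhaust $S$, and hence terminates at a non-empty sub-hypergraph with minimum degree at least $\ell + 1$ and density still exceeding $\ell$. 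By definition this terminal sub-hypergraph is contained in $C := C_{\ell+1}(H_{n,cn,k})$, so it suffices to show that with probability $1-o(1)$ no $T \subseteq V(C)$ satisfies $|E_C(T)| > \ell|T|$, where $E_C(T)$ denotes the hyperedges of $C$ contained entirely in $T$.

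The next ingredient is the structure of the $(\ell+1)$-core of $H_{n,cn,k}$, well understood from the work of Cooper~\cite{ar:c04}, Molloy~\cite{ar:m05}, and Kim~\cite{ar:k06}. Conditional on its size $n_C$ and edge count $m_C$, the core is a uniformly random $k$-uniform hypergraph on $n_C$ vertices with $m_C$ edges subject to minimum degree at least $\ell + 1$. Furthermore, $n_C$ and $m_C$ concentrate tightly around deterministic quantities parametrized by the unique solution $\xi = \xi(c)$ of an implicit equation, and~\eqref{eq:kxi} is precisely the condition that the ratio $m_C/n_C$ equals $\ell$ at $c = c^\ast_{k, \ell}$; for $c < c^\ast_{k,\ell}$ we therefore have $m_C/n_C < \ell$ strictly. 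The normalized degree sequence of $C$ is close to i.i.d.\ draws from a truncated Poisson distribution of parameter $\xi$ conditioned to be at least $\ell + 1$, and standard contiguity arguments then allow us to work in the (hypergraph) configuration model.

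The heart of the proof is a first moment estimate inside the core. Let $X_s$ count subsets $T \subseteq V(C)$ with $|T| = s$ and $|E_C(T)| > \ell s$. Summing over $T$ of size $s = \alpha n_C$ gives $\ex[X_s] = \binom{n_C}{s} \cdot q(\alpha)$, where $q(\alpha)$ estimates the probability that a fixed $\alpha$-fraction of $V(C)$ swallows too many hyperedges. Via the configuration model this probability reduces to a multinomial tail bound (each hyperedge lands with all $k$ endpoints in $T$ with probability roughly $\alpha^k$ weighted by the truncated-Poisson degree profile), and together with Stirling's approximation applied to the binomial coefficient one obtains $\log \ex[X_s] = n_C \Phi(\alpha; c, k, \ell) + o(n_C)$ for an explicit rate function $\Phi$ built from the binary entropy and the Poisson-tail function $Q$ appearing in~\eqref{eq:kxi}. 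A direct manipulation shows that the stationarity conditions $\Phi = 0 = \partial_\alpha \Phi$ are jointly equivalent to~\eqref{eq:kxi}; thus for $c < c^\ast_{k,\ell}$ one obtains $\Phi(\alpha) < 0$ uniformly on $\alpha \in (0, 1]$, which upon summing over $s$ yields $\sum_s \ex[X_s] = o(1)$ and hence the conclusion via Markov's inequality.

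The principal obstacle is establishing this uniform negativity of $\Phi$ together with the accompanying sub-exponential error control. Particular care is required near the boundary points: for $\alpha \to 1$ one must combine the already-known strict inequality $m_C/n_C < \ell$ with a refinement of the first-moment rate function, while for $\alpha \to 0$ the first moment on typical Poisson hyperedges is tiny and the bound should instead follow from a direct Chernoff/large-deviation estimate; stitching together these regimes with the macroscopic interior requires careful bookkeeping. A secondary technical complication is the passage from $H_{n, \lfloor cn\rfloor, k}$ to the configuration-model version of its core, which introduces a probabilistic overhead (conditioning on $n_C$, $m_C$, and the degree sequence) that has to be absorbed via standard but non-trivial concentration arguments.
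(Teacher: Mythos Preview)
Your overall strategy---reduce to the $(\ell+1)$-core, pass to the configuration model, run a first moment---matches the paper's, but your first-moment bookkeeping is oversimplified in a way that hides where the real work lies. In the configuration model the probability that a fixed set $T$ spans many edges is governed by the fraction $q$ of \emph{clones} in $T$, not by the fraction $\alpha$ of \emph{vertices}; since the truncated-Poisson degrees vary, $q$ is not determined by $\alpha$. Your statement that ``each hyperedge lands with all $k$ endpoints in $T$ with probability roughly $\alpha^k$'' is therefore wrong as written, and your one-variable $\Phi(\alpha)$ implicitly hides a maximization over $q$. The paper handles this explicitly by parameterizing subsets by both $\beta$ and $q$, obtaining a two-variable rate function $f(\beta,q)$ (Lemmas~\ref{lem:BU} and~\ref{lem:ExpXdt}); showing $f(\beta,q)<0$ uniformly (Claim~\ref{cl:techn}) then requires separate treatment of interior critical points in $q$ (Claim~\ref{claim:fCrit}) and of both $q$-boundaries (Claims~\ref{clm:fbb} and~\ref{clm:fb1-..}), together with tight numerical bounds on $\xi^\ast$ (Claims~\ref{clm:techxi} and~\ref{clm:xi}). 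Your assertion that the stationarity conditions reduce ``by direct manipulation'' to~\eqref{eq:kxi} does not survive contact with this analysis.

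A second gap concerns small sets. The paper does not count arbitrary $\ell$-dense sets but \emph{maximal} ones (Proposition~\ref{prop:Umaxldense}), which pins down $e_U=\ell v_U+\theta$ with $0\le\theta<\ell$ and forces every outside vertex to have small degree into $U$. This extra structure is essential: for $(k,\ell)=(3,2)$ the plain first moment over $\ell$-dense sets of intermediate size fails, as the paper notes explicitly in the proof of Lemma~\ref{lem:smallU1}, and only the maximality constraint rescues the argument. Relatedly, subsets with at most $0.6n$ vertices are handled directly in $H_{n,cn,k}$ (Lemma~\ref{lem:smallU1}) rather than inside the core; your proposed ``direct Chernoff/large-deviation estimate'' at $\alpha\to 0$ does not engage with this difficulty.
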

\begin{proof}[Proof of Theorem~\ref{thm:main}.]
Let us construct an auxiliary bipartite graph~$B=(\mathcal{E}, \mathcal{V};\, E)$, where~$\mathcal{E}$ represents the~$m$ edges and~$\mathcal{V} = \{1, \dots, n\} \times \{1, \dots, \ell\}$ represents the~$n$ vertices of~$H_{n,m,k}$. Also,~$\{e,(i, j)\}\in E$ if the~$e$th edge contains vertex~$i$, and $1 \le j \le \ell$. Note that $H_{n,m,k}$ is $\ell$-orientable if and only if $B$ has a left-perfect matching, and by Hall's theorem such a matching exists if and only if for all $\mathcal{I} \subseteq \mathcal{E}$ we have that~$|\mathcal{I}|\leq |\Gamma(\mathcal{I})|$, where~$\Gamma(\mathcal{I})$ denotes the set of neighbors of the vertices in~$\mathcal{I}$ in~$\mathcal{V}$.

Observe that~$\Gamma(\mathcal{I})$ is precisely the set of~$\ell$ copies of the vertices that are contained in the hyperedges corresponding to items in~$\mathcal{I}$. So, if~$c<c_{k,\ell}^\ast$, Theorem~\ref{thm:diff} guarantees that with high probability for all~$\mathcal{I}$ we have~$|\mathcal{I}|\leq |\Gamma(\mathcal{I})|$ and therefore $B$ has a left-perfect matching. On the other hand, if~$c>c^\ast_{k,\ell}$, then with high probability there is a set~$\mathcal{I}$ such that~$|\mathcal{I}| > |\Gamma(\mathcal{I})|$; choose for example~$\mathcal{I}$ to be the set of items that correspond to the edges in the~$(\ell+1)$-core of~$H_{n,m,k}$. Hence a matching does not exist in this case, and the proof is completed. 
\end{proof}

\subsection{Proof of Theorem~\ref{thm:simple} and the Value of $c_{k,\ell}^*$}
\label{ssec:ProofThmSimple}

The aim of this section is to determine the value~$c_{k,\ell}^*$ and prove Theorem~\ref{thm:simple}. Moreover, we will introduce some known facts and tools that will turn out to be very useful in the study of random hypergraphs, and will be used later on in the proof of Theorem~\ref{thm:diff} as well. In what follows we will be referring to a hyperedge of size $k$ as a ($k$-)edge and we will be calling a hypergraph with all its hyperedges of size $k$ a~\emph{$k$-graph}.

\subsubsection*{Models of Random Hypergraphs}

For the sake of convenience we will carry out our calculations in the~$H_{n,p,k}$ model of random~$k$-graphs. 
This is the ``higher-dimensional" analogue of the well-studied~$G_{n,p}$ model, where each possible ($k$-)edge is included independently with probability~$p$. More precisely, given~$n\geq k$ vertices we obtain~$H_{n,p,k}$ by including each~$k$-tuple of vertices with probability~$p$, independently of every other~$k$-tuple. 

Standard arguments show that if we adjust~$p$ suitably, then the~$H_{n,p,k}$ model is essentially equivalent to the~$H_{n,cn,k}$ model. Let us be more precise. Suppose that~${\mathcal P}$ is a {\em convex} hypergraph property, that is, whenever we have three hypergraphs $H_1,H_2,H_3$ such that~$H_1 \subseteq H_2 \subseteq H_3$ and~$H_1, H_3 \in {\mathcal P}$, then also~$H_2 \in {\mathcal P}$. We also assume that~${\mathcal P}$ is closed under automorphisms. Any monotone property is also convex and, therefore, the properties examined in Theorem~\ref{thm:diff}. The following proposition is a generalization of Proposition 1.15 from~\cite[p.16]{b:jlr00} and its proof is very similar to the proof of that -- so we omit it.

\begin{proposition} \label{prop:ModelsEquiv}
Let ${\mathcal P}$ be a convex property of hypergraphs, and let $p = ck/\binom{n-1}{k-1}$, where~$c > 0$. If $\Pr {H_{n,p,k}  \in {\mathcal P}} \rightarrow 1$ as $n \rightarrow \infty$, then $\Pr {H_{n,\lfloor cn\rfloor,k} \in {\mathcal P}} \rightarrow 1$ as well. 
\end{proposition}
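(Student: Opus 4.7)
The plan is to adapt the argument of Proposition~1.15 of~\cite{b:jlr00} from graphs to $k$-uniform hypergraphs. First I would set up the standard coupling: let $M\sim\Bin(\binom{n}{k},p)$ denote the number of edges of $H_{n,p,k}$; a direct calculation using $p=ck/\binom{n-1}{k-1}$ gives $\E{M}=\binom{n}{k}p=cn$, while $\mathrm{Var}(M)=\Theta(n)$. Conditionally on $M=m$, the hypergraph $H_{n,p,k}$ has the same distribution as $H_{n,m,k}$. I would also introduce the \emph{random hypergraph process} $(H_m)_{m\ge 0}$, which exposes all $\binom{n}{k}$ potential edges in a uniformly random order; then $H_m$ is distributed as $H_{n,m,k}$ for every $m$ (here the automorphism-closure of $\mathcal P$ is invoked), and $H_{m_1}\subseteq H_{m_2}$ whenever $m_1\le m_2$.

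Writing $f(m)=\Pr{H_{n,m,k}\in\mathcal P}$ and $g(m)=1-f(m)$, the key consequence of convexity is a \emph{sandwich inequality}: in the coupling above, for any $m_1\le m\le m_3$ the inclusion $H_{m_1}\subseteq H_m\subseteq H_{m_3}$ combined with convexity of $\mathcal P$ gives $\{H_{m_1}\in\mathcal P\}\cap\{H_{m_3}\in\mathcal P\}\subseteq\{H_m\in\mathcal P\}$. Together with the trivial bound $\Pr{A\cap B}\ge\Pr{A}+\Pr{B}-1$, this yields
\[
 g(m)\;\le\;g(m_1)+g(m_3)\qquad\text{for all } m_1\le m\le m_3.
\]

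Setting $m_0=\lfloor cn\rfloor$, I would argue by contradiction. Suppose $g(m_0)\ge\delta$ for some fixed $\delta>0$ along a subsequence of $n$. If the two sets $\{m<m_0:g(m)<\delta/2\}$ and $\{m>m_0:g(m)<\delta/2\}$ were both nonempty, picking $m_1,m_3$ witnessing them and applying the sandwich inequality at $m_0$ would give $g(m_0)<\delta$, a contradiction. So one of these sets is empty; say the first, so that $g(m)\ge\delta/2$ for every $m\le m_0$. Then
\[
 \Pr{H_{n,p,k}\notin\mathcal P}\;=\;\sum_m\Pr{M=m}\,g(m)\;\ge\;\tfrac{\delta}{2}\Pr{M\le m_0}.
\]
Since $\E{M}=cn$ and $m_0=\lfloor cn\rfloor$ differ by less than $1$, while $\sqrt{\mathrm{Var}(M)}=\Theta(\sqrt n)$, a local CLT (or Chebyshev with a little bookkeeping) gives $\Pr{M\le m_0}\ge\tfrac12-o(1)$, whence the right-hand side is bounded below by a positive constant. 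This contradicts $\Pr{H_{n,p,k}\in\mathcal P}\to 1$, so $g(m_0)\to 0$ as desired.

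No individual step is genuinely hard: the coupling is classical, the sandwich inequality is a two-line consequence of convexity, and the binomial anti-concentration is routine. The only point that requires some care is to invoke automorphism-closure of $\mathcal P$ at the right moment, so that the deterministic convexity hypothesis transfers cleanly to the probabilistic inequality used above. Because this mirrors the graph case of~\cite{b:jlr00} almost verbatim, this is why the paper elects to omit the full proof.
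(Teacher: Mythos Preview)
Your proposal is correct and follows exactly the approach the paper has in mind: the paper explicitly omits the proof, stating only that it is a generalization of Proposition~1.15 in~\cite{b:jlr00} whose proof is ``very similar''. Your sketch is precisely that adaptation---the random hypergraph process coupling, the sandwich inequality from convexity, and the binomial concentration of $M$ around $cn$---so there is nothing to add.
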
  

\subsubsection*{Working on the $(\ell +1)$-core of $H_{n,p,k}$ -- the Cloning Model}

Recall that the $(\ell+1)$-core of a hypergraph is its maximum subgraph that has minimum degree (at least)~$\ell+1$. At this point we introduce the main tool for our analysis. The \emph{cloning model} with parameters $(N, D, k)$, where $N$ and $D$ are integer valued random variables, is defined as follows. We generate a graph in three stages.
\begin{enumerate}
\item[1.] We expose the value of $N$;
\item[2.] if $N\ge 1$ we expose the degrees $\mathbf{d} = (d_1, \dots, d_{N})$, where the $d_i$'s are independent samples from the distribution $D$;
\item[3.] for each $1 \le v \le N$ we generate~$d_v$ copies, which we call {\em $v$-clones} or simply {\em clones}. Then we choose uniformly at random a matching from all perfect~$k$-matchings on the set of all clones, i.e., all partitions of the set of clones into sets of size $k$. Note that such a matching may not exist -- in this case we choose a random matching that leaves less than $k$ clones unmatched. Finally, we construct the $k$-graph $H_{\mathbf{d}, k}$ by contracting the clones to vertices, i.e., by projecting the clones of $v$ onto $v$ itself for every $1 \le v \le N$.
\end{enumerate}
Note that the last stage in the above procedure is equivalent to the \emph{configuration model}~\cite{Bol1,BenCan} $H_{\mathbf{d}, k}$ for random hypergraphs with degree sequence~$\mathbf{d}= (d_1, \dots, d_n)$. In other words, $H_{\mathbf{d}, k}$ is a random multigraph where the $i$th vertex has degree $d_i$.
~\\

One particular case of the cloning model is the so-called \emph{Poisson cloning model}~$\widetilde{H}_{n,p,k}$ for~$k$-graphs with~$n$ vertices and parameter~$p \in [0,1]$, which was introduced by Kim~\cite{ar:k06}. There, we choose $N = n$ with probability 1, and the distribution $D$ is the Poisson distribution with parameter~$\lambda := p {n-1 \choose k-1}$. Note that $D$ is essentially the vertex degree distribution in the binomial random graph $H_{n,p,k}$, so we would expect that the two models behave similarly. The following statement confirms this, and is implied by Theorem~1.1 in~\cite{ar:k06}.
\begin{theorem}
\label{cor:Contiguity}
If $\Pr{\widetilde{H}_{n,p,k} \in \mathcal{P}} \rightarrow 0$ as $n \rightarrow \infty$, then
$\Pr{H_{n,p,k} \in \mathcal{P}} \rightarrow 0$ as well.
\end{theorem}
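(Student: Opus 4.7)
The plan is to obtain this statement as a direct corollary of Kim's main theorem in~\cite{ar:k06}, which provides a quantitative comparison between the laws of~$\widetilde{H}_{n,p,k}$ and~$H_{n,p,k}$. The central observation behind Kim's result is that~$H_{n,p,k}$ conditioned on its degree sequence~$\mathbf{d}$ is distributed as the configuration model~$H_{\mathbf{d},k}$ conditioned on producing a simple hypergraph, while~$\widetilde{H}_{n,p,k}$ is exactly the configuration model in which~$\mathbf{d}$ has i.i.d.\ Poisson$(\lambda)$ entries, with~$\lambda = p\binom{n-1}{k-1}$. Since the degree distribution of~$H_{n,p,k}$ is Binomial$\bigl(\binom{n-1}{k-1}, p\bigr)$, which is extremely close in total variation to Poisson$(\lambda)$ in the regime of interest, the two laws differ only by a bounded Radon--Nikodym factor.

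First I would verify that the regime dictated by Proposition~\ref{prop:ModelsEquiv}, namely $p = ck/\binom{n-1}{k-1}$ for constant $c > 0$, gives~$\lambda = \Theta(1)$, which places us squarely inside the scope of Kim's Theorem~1.1. Second, I would invoke the quantitative comparison established there, which yields an inequality of the form
\[
  \Pr{H_{n,p,k} \in \mathcal{P}} \;\leq\; C \cdot \Pr{\widetilde{H}_{n,p,k} \in \mathcal{P}},
\]
valid for every hypergraph property~$\mathcal{P}$, with a constant~$C = C(k,c)$ independent of~$n$. Letting~$n \to \infty$ and using the hypothesis~$\Pr{\widetilde{H}_{n,p,k} \in \mathcal{P}} \to 0$ then forces the left-hand side to vanish, which is exactly the desired conclusion.

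The main technical obstacle is verifying that the constant~$C$ in the comparison inequality stays bounded in~$n$ throughout the range of parameters we will eventually use. This boils down to a lower bound on the probability that the configuration model with Poisson-distributed degrees produces a simple~$k$-graph, which holds in the constant-mean-degree regime by standard arguments; once this is in place, the rest of the proof is a one-line deduction from Kim's inequality.
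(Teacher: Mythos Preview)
Your proposal is correct and follows the same approach as the paper: both simply invoke Theorem~1.1 of Kim~\cite{ar:k06}. The paper in fact gives no proof at all beyond the sentence ``is implied by Theorem~1.1 in~\cite{ar:k06}'', so your write-up is more detailed than what appears there.

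One small caveat: the precise form of Kim's inequality is not quite the clean linear bound $\Pr{H_{n,p,k}\in\mathcal{P}}\le C\,\Pr{\widetilde{H}_{n,p,k}\in\mathcal{P}}$ you wrote down. Kim's Theorem~1.1 has the shape
\[
\Pr{H_{n,p,k}\in\mathcal{P}} \;\le\; c_2\bigl(\Pr{\widetilde{H}_{n,p,k}\in\mathcal{P}} + e^{-\Omega(n)}\bigr)^{1/k},
\]
with a $1/k$-th power and an additive exponentially small error. This still yields the desired conclusion immediately (if the right-hand side tends to zero, so does the left), so your argument goes through unchanged; just be careful to quote the actual inequality when you write it up.
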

One big advantage of the Poisson cloning model is that it provides a very precise description of the $(\ell+1)$-core of $\widetilde{H}_{n,p,k}$. Particularly, Theorem~6.2 in~\cite{ar:k06} implies the following statement, where we write ``$x \pm y$'' for the interval of numbers $(x - y, x+y)$.
\begin{theorem}\label{thm:CoreClone}
Let $\lambda_{k,\ell+1}:= \min_{x>0} \frac{x}{Q(x,\ell)^{k-1}}$. Assume that $ck=p {n-1\choose k-1}> \lambda_{k,\ell +1}$. Moreover, let~$\bar{x}$ be the largest solution of the equation $x=Q(xck,\ell)^{k-1}$, and set $\xi:= \bar{x}ck$.  Then, for any $0<\delta< 1$ the following is true with probability $1-n^{-\omega(1)}$. If $\tilde{N}_{\ell+1}$ denotes the number of vertices in the $(\ell +1)$-core of $\widetilde{H}_{n,p,k}$, then
 $$ \tilde{N}_{\ell +1}= Q(\xi,\ell+1)n \pm \delta n.$$
Furthermore, the $(\ell +1)$-core itself is distributed like the cloning model with parameters $(\tilde{N}_{\ell +1}, \, \mathrm{Po}_{\ge \ell + 1}(\Lambda_{c,k,\ell}), \, k)$, where $\mathrm{Po}_{\ge \ell + 1}(\Lambda_{c,k,\ell})$ denotes a
Poisson random variable conditioned on being at least $(\ell +1)$ and parameter $\Lambda_{c,k,\ell}$, where $\Lambda_{c,k,\ell}=\xi +\beta$, for some $\beta$ satisfying $|\beta|\leq \delta$. 
\end{theorem}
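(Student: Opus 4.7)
The plan is to analyze the $(\ell+1)$-core via a \emph{peeling process} (also called the stripping process): iteratively pick any vertex of current degree at most $\ell$, delete it together with all incident hyperedges, and repeat until no such vertex remains; what is left is by definition the $(\ell+1)$-core. The reason we work in the Poisson cloning model $\Pcl$ is that the peeling process interacts especially nicely with Poisson-distributed degrees, because Poisson-thinning turns partial exposures of the matching on clones back into Poisson distributions. I would formalize the process as a Markov chain on the multiset of ``alive'' clones and track, for each $d \geq 0$, the number $V_d(t)$ of alive vertices currently having $d$ alive clones after $t$ steps.

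Next I would compute the drift of this chain. In one peeling step, a uniformly random alive clone of a low-degree vertex is chosen; following the matching, it is paired with $k-1$ other clones belonging to alive vertices, each of which loses one from its degree. The essential observation is that, conditional on the history of exposed edges, the remaining matching is uniformly random on the remaining clones, so that the surviving degrees are Poisson conditioned on not yet having been peeled. This gives, as the fluid/mean-field limit (Wormald's differential equation method applies, or one can use a direct generating-function computation as Kim does), a one-parameter family of states parameterized by the probability $\bar{x}\in(0,1]$ that a given original clone is still alive. The self-consistency condition ``the number of alive clones per vertex is at least $\ell+1$, else the vertex is peeled'' translates into the fixed-point equation $x = Q(xck,\ell)^{k-1}$: each of a vertex's Poisson$(ck)$ clones survives independently with probability $x$, and the vertex survives iff at least $\ell+1$ of its $\mathrm{Po}(xck)$ surviving clones match into other alive vertices, while each such clone does so with probability $Q(xck,\ell)^{k-1}$. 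Choosing the \emph{largest} fixed point $\bar{x}$ corresponds to stopping the peeling as early as possible, which is what recovers the core (smaller fixed points are unstable and correspond to overshooting). Substituting $\xi = \bar{x} c k$ gives the claimed core size $Q(\xi,\ell+1)n$ and identifies the degree distribution of a surviving vertex as $\mathrm{Po}_{\geq \ell+1}(\xi)$.

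For the concentration statement, I would show that the integer-valued process $V_d(t)$ stays within $\delta n$ of its deterministic trajectory with probability $1 - n^{-\omega(1)}$. Because all clone degrees are independent Poissons in $\Pcl$, one gets sharper-than-Azuma exponential tails directly from Chernoff-type bounds on sums of independent truncated Poissons, together with a standard union bound over $O(n)$ steps; this is why the $n^{-\omega(1)}$ error is affordable. The residual discrepancy between the deterministic fixed point and the realized state is absorbed into the additive noise $\beta$ with $|\beta|\leq \delta$.

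Finally, to certify that the core itself is distributed as the cloning model $(\tilde N_{\ell+1},\mathrm{Po}_{\geq \ell+1}(\Lambda_{c,k,\ell}),k)$, I would observe that throughout peeling we only ever used information about which clones are paired with \emph{dead} clones; conditional on the set of surviving clones and the induced degrees, the matching among surviving clones is still uniformly random. Hence the $(\ell+1)$-core is exactly a configuration/cloning instance with the stated vertex count and degree distribution. The main obstacle, and the reason Kim's full argument in~\cite{ar:k06} is involved, is establishing this last distributional identity rigorously in the presence of the exposure done during peeling — i.e., proving that the Poisson-thinning/memorylessness property is preserved uniformly along the trajectory; everything else (drift computation, fixed-point analysis, concentration) is comparatively routine once that reduction is in place.
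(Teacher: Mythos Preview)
The paper does not prove this theorem at all: it is quoted verbatim as a consequence of Theorem~6.2 in Kim's work~\cite{ar:k06} on the Poisson cloning model, with no argument supplied beyond the citation. So there is nothing in the paper's own treatment to compare your sketch against.

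That said, your outline is a fair high-level summary of how Kim's argument actually proceeds --- peeling process, Poisson thinning to identify the fixed-point equation $x=Q(xck,\ell)^{k-1}$, concentration via Chernoff-type bounds exploiting the independent Poisson degrees, and the observation that the residual matching on surviving clones remains uniform. Your identification of the main technical difficulty (preserving the distributional identity uniformly along the peeling trajectory) is also accurate. If you intend this as an independent proof rather than a citation, be aware that turning the sketch into a rigorous argument is substantial: the ``Poisson-thinning is preserved'' step and the uniform control of the trajectory near the fixed point each take real work in~\cite{ar:k06}, and your proposal as written is a plan rather than a proof.
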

In what follows, we say that a random variable is an $\ell$-truncated Poisson variable, if it is distributed like a Poisson variable, conditioned on being at least $\ell$. The following theorem, which is a special case of Theorem II.4.I in~\cite{b:e06} from large deviation theory, bounds the sum of i.i.d. random variables. We apply the result to the case of i.i.d.~$(\ell+1)$-truncated Poisson random variables, which are nothing but the degrees of the vertices of the $(\ell +1)$-core. As an immediate corollary we obtain tight bounds on the number of edges in the $(\ell +1)$-core of $\widetilde{H}_{n,p,k}$. Moreover, it also serves as our main tool in counting the expected number of subsets (with some density constraints) of the $(\ell +1)$-core, assuming that the degree sequence has been exposed. Such estimates are required for the proof of Theorem~\ref{thm:diff} and will be presented in the next section.

\begin{theorem} \label{thm:largedeviation} 
Let $X$ be a random variable taking real values and set 
$c(t) = \ln \ex (e^{tX})$, for any $t \in {\mathbb R}$. For any $z>0$ we define 
$I(z) = \sup_{t\in {\mathbb R}} \{ zt - c(t)\}$. If $X_1,\ldots, X_s$ are i.i.d. random variables 
distributed as $X$, then for $s \to \infty$
$$\Pr { \sum_{i=1}^s X_i  \leq sz} = \exp \left( -s \inf \{I(x) : x\leq z \} (1+o(1)) \right). $$ 
The function $I(z)$ is non-negative and convex.
\end{theorem}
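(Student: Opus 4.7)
The plan is to follow the classical Cram\'er--Chernoff strategy: an exponential Markov upper bound matched by a lower bound obtained through exponential tilting, together with elementary convex-analytic facts about $I$. Throughout, $c(t)=\ln\ex(e^{tX})$ is the cumulant generating function, which is convex by H\"older's inequality and satisfies $c(0)=0$. Consequently its Legendre transform $I(z)=\sup_t\{zt-c(t)\}$ is convex as a pointwise supremum of affine functions of $z$; choosing $t=0$ in the supremum gives $I(z)\geq -c(0)=0$, which settles the two qualitative claims at the end of the statement.

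For the upper bound, for any $t\leq 0$ Markov's inequality applied to $e^{t\sum_i X_i}$ yields
$$\Pr{\textstyle\sum_{i=1}^s X_i\leq sz}\leq e^{-tsz}\ex(e^{t\sum_i X_i})=\exp\bigl(s(c(t)-tz)\bigr).$$
Optimising over $t\leq 0$ produces $\exp(-s\sup_{t\leq 0}\{zt-c(t)\})$, and a short convex-duality computation identifies this quantity with $\exp(-s\inf_{x\leq z}I(x))$: when $z\leq\ex(X)$ the unconstrained supremum defining $I(z)$ is already attained at some $t\leq 0$, so both sides equal $I(z)$ and coincide with $\inf_{x\leq z}I(x)$ by monotonicity of $I$ on that range; when $z>\ex(X)$ both sides equal $0$, since $\ex(X)\in\{x\leq z\}$ and $I(\ex(X))=0$. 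This gives the required upper bound.

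For the matching lower bound, I would pick $x_0\leq z$ realising $\inf_{x\leq z}I(x)$ and select a tilting parameter $t_0$ with $I(x_0)=t_0x_0-c(t_0)$, which under standard regularity means $c'(t_0)=x_0$. Defining a tilted law $\BP_{t_0}$ with Radon--Nikodym derivative $e^{t_0X-c(t_0)}$ makes the $X_i$'s i.i.d.\ with mean $x_0$ and finite variance. The standard change-of-measure identity then gives, for any $\delta>0$ with $x_0+\delta\leq z$,
$$\Pr{\textstyle\sum_i X_i\leq sz}\geq\ex_{t_0}\!\bigl(e^{-t_0\sum_iX_i+sc(t_0)}\mathbf{1}_{|\sum_iX_i-sx_0|\leq s\delta}\bigr)\geq e^{-sI(x_0)-|t_0|s\delta}\,\BP_{t_0}\bigl(|{\textstyle\sum_i X_i}-sx_0|\leq s\delta\bigr).$$
Under $\BP_{t_0}$ the weak law of large numbers forces the last probability to $1$, so letting $\delta\to 0$ sufficiently slowly yields $\exp(-sI(x_0)(1+o(1)))$ as desired.

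The main technical obstacles lie in two places. First, one must verify the convex-duality identity $\sup_{t\leq 0}\{zt-c(t)\}=\inf_{x\leq z}I(x)$ used in the upper bound, and handle the degenerate regime $z\geq\ex(X)$, where the bound becomes trivial. Second, the tilting step requires existence of a $t_0$ with $c'(t_0)=x_0$, which a priori only holds when $x_0$ lies in the interior of the effective domain of $I$; the standard remedy is to approximate $z$ from below by such interior points and appeal to lower semicontinuity of $I$ to recover the infimum in the limit. Since the theorem is invoked from a textbook on large deviations, the verification is entirely standard; the only mildly delicate point is that the exponent involves $\inf_{x\leq z}I(x)$ rather than the more familiar $I(z)$.
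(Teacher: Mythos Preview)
Your argument is the standard Cram\'er--Chernoff proof and is correct in outline; the convexity/non-negativity of $I$, the Chernoff upper bound with the duality $\sup_{t\le 0}\{zt-c(t)\}=\inf_{x\le z}I(x)$, and the tilted lower bound are all handled appropriately, including the caveats about interior points and the degenerate case $z\ge\ex(X)$.

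However, there is nothing to compare against: the paper does not prove this theorem. It is quoted as a special case of Theorem~II.4.I in the large deviations textbook~\cite{b:e06} and is used as a black box (specifically, as input to Lemma~\ref{lem:I} and Corollary~\ref{cor:largedeviation}). Your write-up is essentially a compressed version of what one finds in such a reference, so it is consistent with the paper's intent, but strictly speaking the paper's ``proof'' is a citation.
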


The function $I(z)$ (also known as the \emph{rate function} of the random variable $X$) in the above theorem measures the discrepancy between $z$ and the expected value of the sum of the i.i.d.~random variables in the sense that $I(z)\ge 0$ with equality if and only if $z$ equals the expected value of $X$. The following lemma applies Theorem~\ref{thm:largedeviation} to $(\ell+1)$-truncated Poisson random variables.
\begin{lemma}\label{lem:I}
Let $X_1,\ldots,X_s$ be i.i.d. $(\ell+1)$-truncated Poisson random variables with parameter $\Lambda$. For any $z>\ell+1$, let $T_z$ be the unique solution of $z= T_z\cdot \frac{Q(T_z,\ell)}{Q(T_z,\ell+1)}$ and
\begin{align} \label{Iz}
 I_{\Lambda}(z)&=z(\ln T_z-\ln \Lambda)- T_z + \Lambda- \ln Q(T_z,\ell+1)+\ln Q(\Lambda,\ell+1). 
\end{align}
Then $I_{\Lambda}(z)$ is continuous for all $z> \ell+1$ and convex. It has a unique minimum at $z=\mu=\Lambda \cdot \frac{Q(\Lambda,\ell)}{Q(\Lambda,\ell+1)}$, where $I_\Lambda(\mu)=0.$ Moreover uniformly for any $z$ such that $\ell+1\leq z\leq \mu$, we have as $s\rightarrow \infty$
$$ \Pr{ \sum_{i=1}^s X_i\leq sz}\leq \exp(-s I_{\Lambda}(z)(1+o(1))).$$
\end{lemma}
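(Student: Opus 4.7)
The proof follows the Cramér/Chernoff large deviation framework laid out in Theorem~\ref{thm:largedeviation}. The plan is to first compute the cumulant generating function $c(t)=\ln\ex(e^{tX})$ of an $(\ell+1)$-truncated Poisson variable $X$, then compute its Legendre transform $I(z)=\sup_t(zt-c(t))$ in closed form and recognize it as the formula in~\eqref{Iz}, and finally verify the announced analytic properties and deduce the probability bound.

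For the moment generating function, I would use the substitution $u=e^t\Lambda$:
\[
\ex(e^{tX})=\frac{1}{Q(\Lambda,\ell+1)}\sum_{j\ge\ell+1}e^{tj}\frac{e^{-\Lambda}\Lambda^j}{j!}=\frac{e^{u-\Lambda}\,Q(u,\ell+1)}{Q(\Lambda,\ell+1)},
\]
so $c(t)=u-\Lambda+\ln Q(u,\ell+1)-\ln Q(\Lambda,\ell+1)$. To compute $c'(t)$ I would use the handy identity $\tfrac{d}{dx}Q(x,\ell+1)=e^{-x}x^{\ell}/\ell!=Q(x,\ell)-Q(x,\ell+1)$ together with $du/dt=u$, which after a short calculation yields $c'(t)=u\,Q(u,\ell)/Q(u,\ell+1)$. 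Setting $c'(t^\ast)=z$ for $z>\ell+1$ and comparing with the defining equation of $T_z$ gives $u^\ast=T_z$, i.e.\ $t^\ast=\ln(T_z/\Lambda)$. Substituting back into $zt^\ast-c(t^\ast)$ produces exactly the expression in~\eqref{Iz}.

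Convexity of $I_\Lambda$ is automatic from Theorem~\ref{thm:largedeviation} (equivalently, from $I_\Lambda$ being the Legendre transform of the convex $c$). Continuity for $z>\ell+1$ follows from continuity of $T_z$, which I would obtain by checking that $x\mapsto xQ(x,\ell)/Q(x,\ell+1)$ is a strictly increasing bijection from $(0,\infty)$ onto $(\ell+1,\infty)$; strict monotonicity is inherited from the strict convexity of $c$, as $X$ is non-degenerate, and the limits at $0^+$ and $\infty$ are verified by expanding $Q(x,\ell)\sim x^{\ell}/\ell!$ near $0$ and $Q(x,\cdot)\to 1$ at infinity. The unique minimum at $\mu$ is seen by evaluating $c'(0)$, which is precisely $\mu=\Lambda Q(\Lambda,\ell)/Q(\Lambda,\ell+1)$; at $z=\mu$ one finds $T_\mu=\Lambda$, $t^\ast=0$, and hence $I_\Lambda(\mu)=0$.

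For the probability bound, Theorem~\ref{thm:largedeviation} gives $\Pr(\sum_iX_i\le sz)=\exp(-s\inf\{I_\Lambda(x):x\le z\}(1+o(1)))$. Since $I_\Lambda$ is convex and vanishes only at $\mu$, for any $\ell+1\le z\le\mu$ this infimum equals $I_\Lambda(z)$ itself, giving the stated upper bound. The only subtlety is the endpoint $z=\ell+1$, where $T_z\to 0$ and individual terms in \eqref{Iz} diverge; I would handle this by a short expansion showing $z\ln T_z-\ln Q(T_z,\ell+1)\to\ln((\ell+1)!)$ as $T_z\to 0$, so the continuous extension satisfies $I_\Lambda(\ell+1)=-\ln\!\bigl(e^{-\Lambda}\Lambda^{\ell+1}/((\ell+1)!\,Q(\Lambda,\ell+1))\bigr)$, which is consistent with the exact identity $\Pr(\sum_iX_i\le s(\ell+1))=\Pr(X_1=\ell+1)^s$. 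The only real computational obstacle is verifying the clean formula $c'(t)=uQ(u,\ell)/Q(u,\ell+1)$; once the identity $Q'(x,\ell+1)=Q(x,\ell)-Q(x,\ell+1)$ is in hand the rest of the argument is routine bookkeeping.
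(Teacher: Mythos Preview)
Your proposal is correct and follows essentially the same route as the paper: compute the cumulant generating function via the substitution $u=\Lambda e^t$, differentiate using the identity $Q'(x,\ell+1)=Q(x,\ell)-Q(x,\ell+1)$ to obtain $c'(t)=uQ(u,\ell)/Q(u,\ell+1)$, read off $T_z$ and the formula~\eqref{Iz}, invoke convexity from Theorem~\ref{thm:largedeviation}, and treat the boundary $z=\ell+1$ by the limit $T_z\to 0^+$ together with the exact identity $\Pr(\sum X_i\le s(\ell+1))=\Pr(X_1=\ell+1)^s$. The only cosmetic difference is that you deduce strict monotonicity of $x\mapsto xQ(x,\ell)/Q(x,\ell+1)$ from strict convexity of $c$, whereas the paper cites the separate combinatorial Claim~\ref{clm:incXi}.
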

\begin{proof}
We shall first calculate $c(t) = \ln \ex ( e^{tX} )$, where $X$ is an $(\ell+1)$-truncated Poisson random variable with parameter $\Lambda$. We note that
\begin{align*}
 \exp(c(t))&=\frac{{\sum_{j\geq \ell+1}}e^{tj}\cdot \frac{e^{-\Lambda}\Lambda^{j}}{j!}}{Q(\Lambda,\ell+1)}
=~e^{-\Lambda}\cdot e^{\Lambda e^t}\cdot \frac{\sum_{j\geq \ell+1}\frac{e^{-\Lambda e^t}(e^t\Lambda)^j}{j!}}{Q(\Lambda,\ell+1)}= ~e^{\Lambda e^t-\Lambda}\cdot \frac{Q(\Lambda e^t,\ell+1)}{Q(\Lambda,\ell+1)}.
\end{align*}
Differentiating $zt-c(t)$ with respect to $t$ we obtain
\begin{align*}
(zt-c(t))^\prime= ~&z- \ln \left( e^{\Lambda e^t-\Lambda}\cdot \frac{Q(\Lambda e^t,\ell+1)}{Q(\Lambda,\ell+1)} \right)^\prime~=~z-\Lambda e^t- (\ln Q(\Lambda e^t,\ell+1))^\prime\\
=~&z-\Lambda e^t+\frac{\Lambda e^t\cdot (Q(\Lambda e^t,\ell+1) - Q(\Lambda e^t,\ell))}{Q(\Lambda e^t,\ell+1)}.
\end{align*}
Substituting $T=\Lambda e^t$ we get
\begin{align*}
(zt-c(t))' &=z-T+ \frac{T\cdot\left( Q(T,\ell+1) - Q(T,\ell) \right)}{Q(T,\ell+1)} =z - T\cdot \frac{Q(T,\ell)}{Q(T,\ell+1)}.
\end{align*}
Setting this expression to zero and solving for $T$ gives the value of $T_z$ as in the statement of the lemma. The uniqueness of the solution for $z>\ell+1$ follows from the fact that the function $x\cdot \frac{Q(x,\ell)}{Q(x,\ell+1)}$ is 
strictly increasing with respect to $x$ (cf. Claim~\ref{clm:incXi}) and, as $x$ approaches $0$, it tends to $\ell+1$. 
Letting $t_z$ be such that $T_z = \Lambda e^{t_z}$, we obtain
$$-c(t_z) = -T_z- \ln Q(T_z,\ell+1) +\Lambda +\ln Q(\Lambda,\ell+1) $$
and 
$$t_z z=z(\ln T_z - \ln \Lambda).$$
The function $-c(t)$ is concave with respect to $t$ (cf.\ Proposition VII.1.1 in~\cite[p.\ 229]{b:e06}); also adding the linear term $zt$ does preserve concavity. So $t_z$ is the point where the unique maximum of $zt-c(t)$ is attained over $t \in {\mathbb R}$.  Combining the above we obtain $I_{\Lambda}(z)$ as stated in the lemma.
For $z=\frac{\Lambda Q(\Lambda,\ell)}{Q(\Lambda,\ell+1)}$ we have $ T_z= \Lambda$ which yields $I_\Lambda(\mu)=0$.
As far as $I_\Lambda(\ell+1)$ is concerned, 
note that strictly speaking this is not defined, as there is no positive solution of the equation $\ell+1=  T\cdot \frac{Q(T,\ell)}{Q(T,\ell+1)}$. 
We will express $I_\Lambda(\ell+1)$ as a limit as $T\rightarrow 0$ from the right and show that 
$$ \Pr{ \sum_{i=1}^s X_i\leq s(\la +1)}= \exp(-s I_\Lambda(\la +1)).$$
We define
\begin{align*}
I_\Lambda(\ell+1 ) & := \lim_{T\rightarrow 0^+}\left((\ell+1) \ln T -T- \ln Q(T,\ell+1)\right) -(\ell+1)\ln \Lambda+\Lambda + \ln Q(\Lambda,\ell+1).
\end{align*}
But 
\begin{align*} 
\lim_{T\rightarrow 0^+}\left((\ell+1) \ln T -T- \ln Q(T,\ell+1)\right)
&=\lim_{T \rightarrow 0^+} \ln { T^{\ell+1} \over e^TQ(T,\ell+1)}  \\
&=\lim_{T \rightarrow 0^+} \ln {T^{\ell+1} \over {T^{\ell+1}\over (\ell+1)!} + {T^{\ell+2}\over (\ell+2)!}+ \cdots}  \\
&=\lim_{T \rightarrow 0^+} \ln {1 \over {1 \over(\ell+1)!} + {T \over (\ell+2)!}+ \cdots} = \ln (\ell+1)!~,
\end{align*}
and therefore
$$ I_\Lambda(\ell+1)= \ln (\ell+1)! -(\ell+1)\ln \Lambda +\Lambda+ \ln Q(\Lambda,\ell+1).$$
On the other hand, the independence of the $X_i$'s guarantees that
 \begin{equation*}
\begin{split}
  \Pr{ \sum_{i=1}^s X_i\leq s(\la +1)}=& [\Pr{X_1=\la +1}]^s=\left({{e^{-\Lambda}\Lambda^{\la +1} \over (\la+1)!}\over Q(\Lambda, \la +1)}\right)^s=\exp(-s I_\Lambda(\la +1)).
\end{split}
 \end{equation*}
Also, according to Theorem~\ref{thm:largedeviation} the function $I_\Lambda(z)$ is non-negative and convex on its domain. 
So if $z\leq \mu$, then $\inf \{I_\Lambda(x) : x\leq z \} = I_\Lambda(z)$ and the second part of the lemma follows.
\end{proof}
Theorem II.3.3 in~\cite{b:e06} along with the above lemma then implies the following corollary.
\begin{corollary}
\label{cor:largedeviation}
 Let $X_1,\ldots,X_s$ be i.i.d.~$(\ell+1)-$truncated Poisson random variables with parameter $\Lambda$ and set $\mu= \ex(X_1)$. For any $\varepsilon>0$ there exists a constant $C=C(\varepsilon)>0$ such that as $s\rightarrow \infty$
$$\Pr { \bigg|{\sum_{i=1}^s X_i } - s\mu \bigg|\ge s\varepsilon} \le e^{-Cs}.$$
\end{corollary}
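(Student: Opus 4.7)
The plan is to split the two-sided deviation into a lower and an upper tail and to control each separately using the large-deviation machinery already assembled:
\begin{equation*}
\Pr{\Bigl|\sum_{i=1}^s X_i - s\mu\Bigr| \ge s\varepsilon}
\;\le\;
\Pr{\sum_{i=1}^s X_i \le s(\mu-\varepsilon)}
+
\Pr{\sum_{i=1}^s X_i \ge s(\mu+\varepsilon)}.
\end{equation*}
For the lower tail, provided that $\mu-\varepsilon > \ell+1$, Lemma~\ref{lem:I} directly gives the upper bound $\exp(-s\,I_\Lambda(\mu-\varepsilon)(1+o(1)))$. Since $I_\Lambda$ is non-negative and convex with $I_\Lambda(\mu)=0$ as its unique minimum, the value $I_\Lambda(\mu-\varepsilon)$ is a strictly positive constant depending only on $\varepsilon$ and $\Lambda$. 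If on the other hand $\mu-\varepsilon \le \ell+1$, then the sum $\sum_{i=1}^s X_i$ is at least $s(\ell+1) \ge s(\mu-\varepsilon)$ already almost surely, so the lower-tail event can only occur via the equality case, and the probability collapses to $[\Pr{X_1=\ell+1}]^s$, which is exponentially small in $s$ just as computed at the end of the proof of Lemma~\ref{lem:I}.

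For the upper tail, I would invoke the general Cram\'er-type large-deviation upper bound, namely Theorem~II.3.3 of~\cite{b:e06}, which applied to i.i.d.\ copies of $X_1$ yields
\begin{equation*}
\Pr{\sum_{i=1}^s X_i \ge sz}
\;\le\;
\exp\!\bigl(-s\,\inf\{I_\Lambda(x) : x \ge z\}(1+o(1))\bigr)
\end{equation*}
for every $z > \mu$. Combined with the convexity of $I_\Lambda$ and the fact that its unique minimum on its domain is attained at $\mu$, this infimum equals $I_\Lambda(\mu+\varepsilon)$, and again is a strictly positive constant depending only on $\varepsilon$ and $\Lambda$.

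Taking $C = C(\varepsilon) > 0$ to be any constant strictly smaller than $\tfrac{1}{2}\min\{I_\Lambda(\mu-\varepsilon),\,I_\Lambda(\mu+\varepsilon)\}$ and absorbing the $(1+o(1))$ factors into the inequality for all sufficiently large $s$, we obtain the desired bound $e^{-Cs}$ on each of the two tail summands, hence on their sum. The only subtle point, and the step most worth stating explicitly, is that Lemma~\ref{lem:I} only treats the lower tail $z \le \mu$; the upper tail requires the separate Cram\'er bound, together with a short verification that the rate function $I_\Lambda$ is also well-defined and strictly positive at $\mu+\varepsilon$, which follows from the same uniqueness-of-root argument for $z = T\cdot Q(T,\ell)/Q(T,\ell+1)$ used in the proof of Lemma~\ref{lem:I}.
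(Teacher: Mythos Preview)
Your proposal is correct and takes essentially the same approach as the paper. The paper's ``proof'' is a single sentence stating that the corollary follows from Theorem~II.3.3 in~\cite{b:e06} together with Lemma~\ref{lem:I}; you have simply spelled out the expected details---splitting into the two tails, applying Lemma~\ref{lem:I} for the lower tail and the Cram\'er bound (Theorem~II.3.3) for the upper tail, and using convexity of $I_\Lambda$ with its unique zero at $\mu$ to extract a strictly positive constant---including the careful handling of the edge case $\mu-\varepsilon \le \ell+1$.
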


With the above results in hand we are ready to prove the following corollary about the density of the $(\la +1)$-core.
\begin{corollary}
\label{cor:edgesCore}
 Let $\tilde{N}_{\ell+1}$ and $\tilde{M}_{\ell+1}$ denote the number of vertices and edges in the $(\ell+1)$-core of $\widetilde{H}_{n,p,k}$. Also let $ck=p {n-1\choose k-1}$. Then, for any $0<\delta< 1$, with probability $1-n^{-\omega(1)}$,
\begin{align}
\tilde{N}_{\ell+1} & = Q(\xi,\ell+1)n\pm \delta n, \\
\tilde{M}_{\ell+1} & = \frac{\xi Q(\xi,\ell)}{kQ(\xi,\ell+1)}\tilde{N}_{\ell+1} \pm \delta n,
\end{align}
where $\xi:= \bar{x}ck$ and $\bar{x}$ is the largest solution of the equation $x = Q(xck,\ell)^{k-1}$. 
 \end{corollary}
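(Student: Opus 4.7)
The first statement is simply the first assertion in Theorem~\ref{thm:CoreClone}, so nothing new is required. The work lies in deriving the concentration of $\tilde{M}_{\ell+1}$ from the distributional description of the core provided by the same theorem.

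The plan is as follows. By Theorem~\ref{thm:CoreClone}, conditional on $\tilde{N}_{\ell+1}$, the $(\ell+1)$-core is distributed as the cloning model with parameters $(\tilde{N}_{\ell+1},\mathrm{Po}_{\ge \ell+1}(\Lambda_{c,k,\ell}),k)$, where $\Lambda_{c,k,\ell}=\xi+\beta$ with $|\beta|\le\delta'$ for any $\delta'>0$ we choose. In the cloning model every edge is formed from exactly $k$ clones, so deterministically
\[
\tilde{M}_{\ell+1} \;=\; \frac{1}{k}\sum_{i=1}^{\tilde{N}_{\ell+1}} d_i,
\]
where $d_1,\dots,d_{\tilde{N}_{\ell+1}}$ are i.i.d.\ $(\ell+1)$-truncated Poisson random variables with parameter $\Lambda_{c,k,\ell}$. (Strictly, the cloning model may have a bounded number of unmatched clones at the end; this affects $\tilde{M}_{\ell+1}$ by $O(1)$ and is absorbed into the $\delta n$ error term.)

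Next I would invoke Corollary~\ref{cor:largedeviation} with the $d_i$'s. Writing $\mu(\Lambda)=\Lambda\cdot Q(\Lambda,\ell)/Q(\Lambda,\ell+1)$ for the mean of an $(\ell+1)$-truncated Poisson with parameter $\Lambda$, the corollary gives, for any fixed $\eps>0$, a constant $C=C(\eps)>0$ such that
\[
\Pr{\Bigl|\textstyle\sum_{i=1}^{\tilde{N}_{\ell+1}} d_i \;-\; \tilde{N}_{\ell+1}\,\mu(\Lambda_{c,k,\ell})\Bigr|\;\ge\;\eps\,\tilde{N}_{\ell+1}}\;\le\;e^{-C\tilde{N}_{\ell+1}}.
\]
Since by the first part of the corollary $\tilde{N}_{\ell+1}=\Theta(n)$ with probability $1-n^{-\omega(1)}$, the above failure probability is $e^{-\Theta(n)}=n^{-\omega(1)}$. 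Dividing by $k$ yields
\[
\tilde{M}_{\ell+1}\;=\;\frac{\mu(\Lambda_{c,k,\ell})}{k}\,\tilde{N}_{\ell+1}\;\pm\;\frac{\eps}{k}\,\tilde{N}_{\ell+1}.
\]

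Finally I would translate $\mu(\Lambda_{c,k,\ell})$ back into the form involving $\xi$. The map $\Lambda\mapsto \mu(\Lambda)$ is continuous (indeed real-analytic) at $\Lambda=\xi>0$, so for any $\eps>0$ there is $\delta'>0$ with $|\Lambda-\xi|\le\delta'\Rightarrow|\mu(\Lambda)-\mu(\xi)|\le \eps$. Choosing $\delta'$ (the slack allowed in Theorem~\ref{thm:CoreClone}) and $\eps$ small enough in terms of the target $\delta$, and using $\tilde{N}_{\ell+1}\le n$, the total error is at most $\delta n$, giving
\[
\tilde{M}_{\ell+1}\;=\;\frac{\xi\, Q(\xi,\ell)}{k\,Q(\xi,\ell+1)}\,\tilde{N}_{\ell+1}\;\pm\;\delta n
\]
with probability $1-n^{-\omega(1)}$, as required. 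The only real subtlety is the bookkeeping of the two different $\delta$'s and the continuity argument for $\mu$; everything else is a direct application of the tools already assembled in this section.
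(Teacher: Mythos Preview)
Your proposal is correct and follows essentially the same approach as the paper: the first assertion is taken directly from Theorem~\ref{thm:CoreClone}, and for the second one you condition on the core parameters, apply Corollary~\ref{cor:largedeviation} to the i.i.d.\ truncated Poisson degrees, and then use continuity of $\Lambda\mapsto \Lambda\,Q(\Lambda,\ell)/Q(\Lambda,\ell+1)$ at $\Lambda=\xi$ to replace $\Lambda_{c,k,\ell}$ by $\xi$. The paper's proof does exactly this, with only cosmetic differences in how the error parameters are named and tracked.
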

\begin{proof}
The statement about $\tilde{N}_{\ell+1}$ follows immediately from the first part of Theorem~\ref{thm:CoreClone}. 

To see the second statement, we condition on certain values of $\tilde{N}_{\ell+1}$ and $\Lambda_{c,k,\ell}$ that lie in the intervals  stated in Theorem~\ref{thm:CoreClone}. In particular, we can assume that the total degree of the core of $\widetilde{H}_{n,p,k}$ is the sum of independent $(\ell+1)$-truncated Poisson random variables $d_1,\ldots, d_{\tilde{N}_{\ell+1}}$ with parameter $\Lambda_{c,k, \ell} =\xi + \beta$ for $|\beta|<\delta^2/2$. Let $D$ be the sum of the $d_i$'s. Therefore, Corollary~\ref{cor:largedeviation} yields for any $\varepsilon>0$ and a constant $C(\varepsilon)>0$ 
\[
\Pr {\left|D - \E {D}\right| \ge \varepsilon \tilde{N}_{\ell+1}} \le e^{-C(\varepsilon)\tilde{N}_{\ell +1}}. 
\]
The claim then follows from the fact that
\[ \E {D} ={ \Lambda_{c,k,\la} Q(\Lambda_{c,k,\la},\la)\over Q(\Lambda_{c,k,\la},\la+1)}\]
and the continuity of the above expression by choosing $\eps$ sufficiently small.
\end{proof}
We proceed with the proof of Theorem~\ref{thm:simple}, i.e., we will show that the $(\ell+1)$-core of $\widetilde{H}_{n,p,k}$ has density at least $\ell$ if $p = ck/\binom{n-1}{k-1}$ and $c > c_{k,\ell}^*$. Let $0 < \delta < 1$, and denote by $\tilde{N}_{\ell+1}$ and $\tilde{M}_{\ell+1}$ the number of vertices and edges in the $(\ell+1)$-core of $\widetilde{H}_{n,p,k}$. Applying Corollary~\ref{cor:edgesCore} we obtain that with probability $1 - n^{-\omega(1)}$
\begin{align*}
&\tilde{N}_{\ell+1}=Q(\xi,\ell+1)n\pm \delta n 
\quad\text{ and }\quad \\
&\tilde{M}_{\ell+1}=\frac{\xi Q(\xi,\ell)}{kQ(\xi,\ell+1)}\tilde{N}_{\ell+1} \pm \delta n,
\end{align*}
where $\xi = \bar{x}ck$ and $\bar{x}$ is the largest solution of the equation $x=Q(xck,\ell)^{k-1}$.
The value of~$c_{k,\ell}^*$ is then obtained by taking $\tilde{M}_{\ell+1} = \ell \tilde{N}_{\ell+1}$, and ignoring the additive error terms. The above values imply that the critical $\xi^*$ is given by the equation
\begin{equation}
\label{eq:xiast}
 \xi^* {Q(\xi^*,\ell)\over kQ(\xi^*,\ell+1)} =\ell
\implies
k\ell = {\xi^*{Q(\xi^{*},\ell)\over Q(\xi^*,\ell+1) } }.
\end{equation}
This is precisely~\eqref{eq:kxi}. So, the product $k\ell$ determines $\xi^*$ and $\bar{x}$ satisfies $\bar{x} = Q(\bar{x}ck,\ell)^{k-1}= Q(\xi^*,\ell)^{k-1}$. 
Therefore, the critical density is
\begin{equation} \label{eq:c_k}
c_{k,\ell}^*={\xi^* \over \bar{x} k}= {\xi^*\over k Q(\xi^*,\ell)^{k-1}}.
\end{equation}
\begin{proof}[Proof of Theorem~\ref{thm:simple}]
The above calculations imply that uniformly for any $0 < \delta < 1$, with probability $1-o(1)$
\[
	{\tilde{M}_{\ell+1} \over \tilde{N}_{\ell+1}} = {1\over k}~{\xi Q(\xi,\ell) \over Q(\xi,\ell+1)} \pm \Theta(\delta).
\]
In particular, if $c = c_{k,\ell}^*$, then ${\tilde{M}_{\ell+1}}/{\tilde{N}_{\ell+1}} = \ell \pm \Theta(\delta)$. To complete the proof it is therefore sufficient to show that the ratio ${\xi Q(\xi,\ell) \over Q(\xi,\ell+1)}$ is an increasing function of $c$. Note that this is the expected value of an $(\ell+1)$-truncated Poisson random variable with parameter $\xi$, which is  increasing in $\xi$ (cf. Corollary~\ref{clm:xiinc}). Recall that $\xi = \bar{x}ck$. We conclude the proof by showing the following claim.
\begin{claim} 
\label{cl:coreDensity}
The quantity $\xi=\bar{x}ck$ is increasing with respect to $c$. So, for some fixed $c$,  with probability $1-o(1)$
\begin{align*}
	&\frac{\tilde{M}_{\ell+1}}{\tilde{N}_{\ell+1}} < \ell ~~\text{, if $c < c_{k,\ell}^*$}
	\qquad \text{and}\qquad \frac{\tilde{M}_{\ell+1}}{\tilde{N}_{\ell+1}} > \ell ~~\text{, if $c > c_{k,\ell}^*$}.
\end{align*}
\end{claim}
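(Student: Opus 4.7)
The plan is first to derive the strict monotonicity of $\xi=\bar{x}ck$ in $c$ directly from its extremal characterization, and then to combine this with Corollary~\ref{cor:edgesCore} and the monotonicity of the $(\ell+1)$-truncated Poisson mean in its parameter to obtain the claimed dichotomy.

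By the definition of $\bar{x}$, setting $\xi=\bar{x}ck$, the value $\xi$ is the largest positive solution of $\xi=ck\,Q(\xi,\ell)^{k-1}$, equivalently of $c=h(\xi)$, where
\[
h(\xi):=\frac{\xi}{k\,Q(\xi,\ell)^{k-1}}.
\]
First I would check that $h(\xi)\to\infty$ as $\xi\to 0^+$ and as $\xi\to\infty$: the former since $Q(\xi,\ell)\sim\xi^\ell/\ell!$ gives $h(\xi)\sim (\ell!)^{k-1}/(k\xi^{\ell(k-1)-1})$ with $\ell(k-1)\ge 2$, and the latter since $Q(\xi,\ell)\to 1$ gives $h(\xi)\sim\xi/k$. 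Consequently, for every $c$ exceeding the global minimum of $h$, the largest solution $\xi_*(c)$ is well-defined and finite.

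The monotonicity of $\xi_*$ then follows by a short comparison argument that avoids any calculus. Suppose $c_1<c_2$ both exceed $\min h$ and $\xi_*(c_1)\ge\xi_*(c_2)$. Since $\xi_*(c_2)$ is the largest zero of $h(\cdot)-c_2$ and $h(\xi)\to\infty$ as $\xi\to\infty$, we must have $h(\xi)>c_2$ for every $\xi>\xi_*(c_2)$. Hence $\xi_*(c_1)>\xi_*(c_2)$ would give $c_1=h(\xi_*(c_1))>c_2>c_1$, while $\xi_*(c_1)=\xi_*(c_2)$ would give $c_1=c_2$; both are contradictions. Therefore $\xi_*(c_1)<\xi_*(c_2)$, and $\xi$ is strictly increasing in $c$.

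For the density dichotomy, Corollary~\ref{cor:edgesCore} gives, with probability $1-o(1)$,
\[
\frac{\tilde{M}_{\ell+1}}{\tilde{N}_{\ell+1}}=\frac{1}{k}\cdot\frac{\xi\,Q(\xi,\ell)}{Q(\xi,\ell+1)}\pm\Theta(\delta).
\]
The main term is $1/k$ times the mean of an $(\ell+1)$-truncated Poisson variable with parameter $\xi$, which is strictly increasing in $\xi$ by Corollary~\ref{clm:xiinc}. Composing with the monotonicity of $\xi$ in $c$, the main term is strictly increasing in $c$, and by~\eqref{eq:xiast} it equals $\ell$ precisely at $c=c_{k,\ell}^*$. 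Hence for any fixed $c\ne c_{k,\ell}^*$ there is a strictly positive gap between the main term and $\ell$; choosing $\delta$ small enough that the $\Theta(\delta)$ error is smaller than this gap yields the two strict inequalities with probability $1-o(1)$. The only real technicality is tracking the $\Theta(\delta)$ error against this gap, as the underlying structural monotonicity is immediate from the largest-root characterization.
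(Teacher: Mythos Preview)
Your argument is correct, and the second half (the density dichotomy via Corollary~\ref{cor:edgesCore} and Corollary~\ref{clm:xiinc}) is exactly what the paper does. The monotonicity step, however, is handled differently. The paper rewrites the defining relation as $ck=F(\xi)$ with $F(\xi)=\xi/Q(\xi,\ell)^{k-1}$, computes $F'(\xi)$ explicitly, shows that $F'$ changes sign exactly once (so $F$ is decreasing then increasing), and then uses the standing hypothesis $ck>\min_{x>0}F(x)$ from Theorem~\ref{thm:CoreClone} to conclude that the largest root lies on the increasing branch; implicit differentiation then gives $d\xi/dc=k/F'(\xi)>0$. You bypass all of this with a pure order argument: since $h$ is continuous and $h(\xi)\to\infty$ at both ends, the largest solution of $h=c$ necessarily has $h>c$ strictly to its right, which immediately forces the largest root to be monotone in $c$. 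Your route is more elementary---no derivative computation, no appeal to the shape of $F$, and no explicit use of the threshold assumption $ck>\lambda_{k,\ell+1}$---while the paper's computation yields a little extra structural information about $F$ (its unimodal derivative), though this is not used anywhere else.
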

\noindent 
Indeed, recall that $\bar{x}$ satisfies $\bar{x} = Q(\bar{x}ck,\ell)^{k-1}$. Equivalently, 
$\bar{x}ck = ck\cdot Q(\bar{x}ck,\ell)^{k-1}$. We have 
\begin{equation} \label{eq:Root} ck = {\xi \over Q(\xi,\ell)^{k-1}}.\end{equation}
The derivative of the function $F(\xi):={\xi \over Q(\xi,\ell)^{k-1}}$ with respect to $\xi$ is given by 
\[Q(\xi,\ell)^{-k}\left( Q(\xi,\ell)-(k-1)\xi \cdot \Pr{\mathrm{Po}(\xi)=\ell-1}\right).\] 
An easy calculation shows that $F'(\xi)$ is positive when $\xi$ satisfies the inequality
 \[\sum_{i\ge \la}{ {\xi}^{i-\la}\over i!}>{k\over (\la-1)!}, \]
 and negative otherwise. We therefore conclude that $F(\xi)$ is a convex function. Moreover, by the assumption in Theorem~\ref{thm:CoreClone} we have $ck > \min_{x>0} ({x}/{Q(x,\ell)^{k-1}})$. This implies the function ${\xi  \cdot Q(\xi,\ell)^{-(k-1)}}$ is strictly increasing in the domain of interest. Note that by~\eqref{eq:Root} the first derivative of $\xi$ with respect to $c$ is given by $k / F'(\xi)$ which is positive by the above discussion, thus proving our claim.
 
\end{proof}

\section{Proof of Theorem~\ref{thm:diff}}
\label{sec:proofMain}

Let us begin with introducing some notation. For a hypergraph~$H$ we will denote by~$V_H$ its vertex set and by~$E_H$ its set of edges. Additionally, we write $v_H=|V_H|$ and~$e_H=|V_H|$. For~$U \subset V_H$ we denote by~$v_U$,~$e_U$ the number of vertices in~$U$ and the number of edges joining vertices only in~$U$. Finally,~$d_U$ is the total degree in~$U$, i.e., the sum of the degrees in~$H$ of all vertices in~$U$. We say that a subset $U$ of the vertex set of a hypergraph is $\ell$-\emph{dense}, if $e_U/v_U \ge \ell$. By a  \emph{maximal} $\ell$-dense subset we mean that whenever we add a vertex to such a set, then its density drops below $\ell$.

In order to prove Theorem~\ref{thm:diff} we will to show that whenever $c<c^*_{k,\ell}$, the random graph $H_{n,\lfloor cn\rfloor,k}$ does not contain any $\ell$-dense subset with probability $1-o(1)$. We will accomplish this by proving that such a hypergraph does not contain any maximal $\ell$-dense subset with probability $1-o(1)$. Note that this is sufficient as any $\ell$-dense subset will be contained in some maximal $\ell$-dense subset. We shall use the following property.
\begin{proposition}
Let $H$ be a $k$-uniform hypergraph with density less than $\ell$ and let $U$ be a maximal $\ell$-dense subset of $V_H$. Then there is a $0\le \theta<\ell$ such that $e_U=\ell\cdot v_U +\theta$. Also, for each vertex $v\in V_H\setminus U$ the corresponding degree $d$ in $U$, i.e., the number of edges in $H$ that contain $v$ and all other vertices only from $U$, is less than $\ell-\theta$. \label{prop:Umaxldense}
\end{proposition}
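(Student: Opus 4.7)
The statement is really just unpacking the definition of maximality, so the proof plan is short. The first observation is that $U$ cannot equal $V_H$: if it did, then $e_H/v_H = e_U/v_U \ge \ell$, contradicting the hypothesis that $H$ has density strictly less than $\ell$. Hence $V_H \setminus U$ is nonempty.

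Next, by the $\ell$-density of $U$ we have $e_U \ge \ell v_U$, so we can write $e_U = \ell v_U + \theta$ for some integer $\theta \ge 0$; the content is to show $\theta < \ell$ and the sharper degree bound. Fix any $v \in V_H \setminus U$ and let $d = d(v, U)$ be the number of edges of $H$ whose vertex set is $\{v\} \cup S$ for some $S \subseteq U$. Then, writing $U' = U \cup \{v\}$, we have $v_{U'} = v_U + 1$ and $e_{U'} = e_U + d$, because the newly included edges are exactly those that go from $v$ into $U$ (all other edges incident to $v$ have at least one more endpoint outside $U'$ and therefore contribute to neither $e_U$ nor $e_{U'}$... wait, let me recheck: an edge contributes to $e_{U'}$ iff all $k$ of its vertices lie in $U'$, and if it contains $v$ then the remaining $k-1$ vertices must lie in $U$, which is exactly the set counted by $d$).

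By maximality of $U$ the set $U'$ is not $\ell$-dense, i.e.\ $e_{U'} < \ell v_{U'}$, which rearranges to
\[
	e_U + d < \ell v_U + \ell,
\]
and substituting $e_U = \ell v_U + \theta$ gives $d < \ell - \theta$. Since $d \ge 0$, this already forces $\theta < \ell$, proving the first claim. Repeating the argument for every $v \in V_H \setminus U$ yields the second claim.

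The only mildly delicate point is the bookkeeping that going from $U$ to $U' = U \cup \{v\}$ increases the edge count by exactly $d$ (as defined in the proposition); once this is noted, everything is an immediate consequence of the inequality $e_{U'}/v_{U'} < \ell$ given by maximality. No genuine obstacle arises.
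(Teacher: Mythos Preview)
Your proof is correct and follows essentially the same route as the paper: pick any $v\in V_H\setminus U$ (noting $U\neq V_H$ since $H$ has density $<\ell$), compute $e_{U'}=e_U+d$ and $v_{U'}=v_U+1$, and use maximality to get $e_U+d<\ell(v_U+1)$, i.e.\ $d<\ell-\theta$. The paper splits this into two steps (first $\theta<\ell$ by contradiction, then the degree bound), whereas you obtain both at once from $d\ge 0$; the arguments are otherwise identical.
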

\begin{proof}
If $\theta \geq \ell$, then we have $e_U \geq \ell  \cdot (v_U+1) $. Let $U^{\prime}= U \cup \{v\}$, where $v$ is any vertex in $V_H\setminus U$. Note that such a vertex always exists, as $U\neq V_H$. Let $d$ be the degree of $v$ in $U$. Then
$$\frac{e_{U^{\prime}}}{v_{U^{\prime}}}= \frac{e_U+d}{v_U+1} \geq \frac{e_U }{v_U+1}\geq \ell,$$
which contradicts the maximality of $U$ in $H$. Similarly, if there exists a vertex $v\in V_H\setminus U$ with degree $d\geq \ell-\theta$ in $U$, then we could obtain a larger $\ell$-dense subset of $V_H$ by adding $v$ to $U$.
\end{proof}
We begin with showing that whenever $c<\la$, the random graph $H_{n,cn,k}$ does not contain small maximal $\ell$-dense subsets. In particular, the following lemma argues about subsets of size at most $0.6n$. 
\begin{lemma} \label{lem:smallU1}
Let $ c<\la$ and $k\geq 3,~\ell\geq 2 $. With probability $1-o(1)$, $H_{n,\lfloor cn \rfloor,k}$ contains no maximal $\ell$-dense subset with less than $0.6n$ vertices.
\end{lemma}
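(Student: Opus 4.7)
The plan is to apply a first-moment argument in $H_{n,p,k}$ with $p=ck/\binom{n-1}{k-1}$, and transfer the estimate to $H_{n,\lfloor cn\rfloor,k}$ by conditioning on the number of edges; since $\Pr{\Bin(\binom{n}{k},p)=\lfloor cn\rfloor}=\Theta(1/\sqrt{n})$, any exponentially small bound in the $H_{n,p,k}$ model translates directly.

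Fix $u$ and $\theta\in\{0,\ldots,\ell-1\}$, and invoke Proposition~\ref{prop:Umaxldense}: a maximal $\ell$-dense subset $U$ of size $u$ with $e_U=\ell u+\theta$ satisfies $d_v^U<\ell-\theta$ for every $v\notin U$, where $d_v^U$ counts edges joining $v$ to $k-1$ vertices of $U$. The crucial structural observation is that in $H_{n,p,k}$ the event $\{e_U=\ell u+\theta\}$ and the events $\{d_v^U<\ell-\theta\}$ for distinct $v\notin U$ depend on disjoint collections of potential edges, and are therefore mutually independent. Consequently, the expected number of maximal $\ell$-dense subsets of size $u$ is bounded above by
\[
\binom{n}{u}\sum_{\theta=0}^{\ell-1}\Pr{\Bin\!\bigl(\tbinom{u}{k},p\bigr)=\ell u+\theta}\cdot\Pr{\Bin\!\bigl(\tbinom{u}{k-1},p\bigr)<\ell-\theta}^{n-u}.
\]
Both factors in the summand are monotonically decreasing in $\theta$, so the whole expression is at most $\ell$ times the $\theta=0$ contribution, which is itself at most $\ell\binom{n}{u}\Pr{\Bin(\binom{u}{k},p)\geq\ell u}\cdot\Pr{\Bin(\binom{u}{k-1},p)<\ell}^{n-u}$.

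Writing $\alpha=u/n$, I would then apply Stirling for $\binom{n}{u}$, a Chernoff bound for the large-deviation tail $\Pr{\Bin(\binom{u}{k},p)\geq\ell u}$, and Poisson approximation to identify $\Pr{\Bin(\binom{u}{k-1},p)<\ell}$ with $1-Q(ck\alpha^{k-1},\ell)+o(1)$. The hypothesis $c<\ell$ enters precisely here: it guarantees that $\ell u$ strictly exceeds the binomial mean $\binom{u}{k}p\sim cn\alpha^k$ for every $\alpha\leq 1$, so the Chernoff term furnishes genuine exponential decay across the whole range. The resulting bound on the expectation takes the form $\exp(n\Phi(\alpha)+o(n))$, where $\Phi(\alpha)$ is an explicit combination of the binary entropy $h(\alpha)=-\alpha\ln\alpha-(1-\alpha)\ln(1-\alpha)$, the Chernoff rate, and the maximality contribution $(1-\alpha)\ln(1-Q(ck\alpha^{k-1},\ell))$; the task then reduces to showing $\Phi(\alpha)<0$ on $(0,0.6]$.

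For very small $\alpha$ the plain bound $\binom{n}{u}\binom{\binom{u}{k}}{\ell u}p^{\ell u}$ already decays like $(\text{const}\cdot\alpha^{(k-1)\ell-1})^u$, which is $o(1/n)$ since $(k-1)\ell\geq 3$ under the hypotheses $k\geq 3,\ell\geq 2$. The delicate regime is $\alpha$ approaching $0.6$: here the entropy $h(\alpha)$ is large, and neither the Chernoff tail nor the maximality factor alone beats it. I expect the main obstacle to be precisely this balancing of three competing terms in the worst case $c\to\ell^{-}$---a careful but essentially calculus-level verification that the cutoff $0.6$ leaves sufficient slack, rather than any conceptual difficulty, followed by a union bound over $u<0.6n$ that inherits the exponential decay.
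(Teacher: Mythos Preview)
Your plan is correct and would work, but it differs from the paper's approach in one structural respect worth noting.

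The paper splits into two cases. For all $(k,\ell)\neq(3,2)$ it ignores maximality entirely and bounds the number of mere $\ell$-dense subsets directly in $H_{n,\lfloor cn\rfloor,k}$ by
\[
\binom{n}{un}\binom{cn}{\ell un}\bigl(u^k\bigr)^{\ell un}\le \exp\bigl(n\bigl((\ell+1)H(u)+k\ell\,u\ln u\bigr)\bigr),
\]
and a short convexity argument shows the exponent is negative on $[k/n,0.6]$. Only for $(k,\ell)=(3,2)$ does this crude bound fail (the exponent becomes positive near $u=0.6$), and there the paper does exactly what you propose: transfer to $H_{n,p,3}$, invoke Proposition~\ref{prop:Umaxldense}, exploit the independence between edges inside $U$ and the degrees of outside vertices into $U$, and grind out the resulting exponent.

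So your uniform use of maximality is logically sound but does more work than necessary for the easy cases; the paper's split buys a two-line argument for most $(k,\ell)$ at the cost of a case distinction. Your identification of the verification near $\alpha=0.6$ as the only genuine obstacle is accurate---that is precisely where the paper expends its effort in the $(3,2)$ case, establishing monotonicity in $c'$ and in $u$ and then evaluating at endpoints. Your proposal stops short of carrying out that calculus, but as a plan it is complete. One small point: your monotonicity-in-$\theta$ reduction to the $\theta=0$ term is correct (the binomial mean $\binom{u}{k}p\sim c\alpha^k n<\ell u$ since $c<\ell$, so the pmf decreases past $\ell u$), and is a cleaner way to handle the sum over $\theta$ than the paper's explicit summation.
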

\begin{proof}

We first prove the lemma for all $k \ge3$ and $\ell \ge 2$ except for the case $(k,\ell) \neq (3,2)$ by using a rough first moment argument. The probability that an edge of $H_{n,cn,k}$ is contained completely in a subset $U$ of the vertex set is given by \[{|U|\choose k}/{n \choose k}\leq \left(\frac{|U|}{n}\right)^k.\] Let $k/n\leq u\leq 0.6$ and for $x\in(0,1)$ let $H(x)=-x\ln x-(1-x)\ln (1-x)$ denote the entropy function. Then
\begin{equation} \label{eq:l-dense}
\begin{split}
\Pr{\exists\text{$\ell$-dense subset with $un$ vertices}} & \leq {n\choose un}\cdot {cn \choose \ell un}(u^k)^{\ell un} \leq e^{n((\ell+1)H(u)+k\ell u\ln u)}.
\end{split}
\end{equation}
We first show that the exponent attains its maximum at $u=k/n$ or $u=0.6$. Let $u_{max}= 1-{(\ell+1)/k\ell}$. We note that the second derivative of the exponent in \eqref{eq:l-dense} equals $${(k\ell(1-u)-(\ell+1))}/({u(1-u)}),$$ which is positive for $k\geq 3,\ell\geq 2$ and $u\in(0,u_{max}]$. Hence the exponent is convex for $u\leq u_{max}$, implying that it attains a global maximum at $u=k/n$ or at $u=(k\ell-(\ell+1))/k\ell$. Moreover, for any $k\geq 4,\ell \geq 2$ we have $u_{max}>0.6$. The case $k=3$ and $\ell \geq 3$ is slightly more involved. Note that  $u_{max}\geq 5/9$ in this case. The second derivative of the exponent is negative for $u\in (u_{max},1)$, implying that the function is concave in the specified range. But the first derivative of the exponent is $(\ell+1) \ln((1-u)/u) +3\ell(1+ \ln(u))$, which is at least $  2.8\ell - 0.41>0$ for $u=0.6$. Hence, the exponent is increasing at $u=0.6$. 

We can now infer that for $k= 3$,~$\ell\ge 3$ and $k\ge 4$,~$\ell\ge2$ , the exponent is either maximized at $u=k/n$ or at $u=0.6$. Note that
\begin{equation*}
\begin{split}
(\ell+1)H\left(\frac{k}{n}\right) +\frac{k^2\ell}{n}\ln\left(\frac{k}{n}\right)
=-\frac{(k^2\ell-(\ell+1)k)\ln n}{n} 
+O\left(\frac{1}{n}\right).
\end{split}
\end{equation*}
Also for $k\geq 4$ and $\ell\ge 2$ we obtain
\begin{align*}
(\ell+1) H(0.6) + k\ell\cdot 0.6 \ln(0.6)&\leq (\ell+1)H(0.6) + 4\ell  \cdot 0.6 \ln(0.6) \\
&\leq H(0.6)-0.56 \ell\leq -0.44,
\end{align*}
and for $k=3$ and $\ell\geq 3$
\begin{align*}
\begin{split}
(\ell+1) H(0.6) + k\ell\cdot 0.6 \ln(0.6) &\leq (\ell+1)H(0.6) + 3\ell  \cdot 0.6 \ln(0.6) \\
&\leq H(0.6)-0.24 \ell\leq -0.04.
\end{split}
\end{align*}
So, the maximum is obtained at $u=k/n$ for $n$ sufficiently large, and we conclude the case in which $(k,\ell) \neq (3,2)$ with
\begin{align*}
 \Pr{\exists~\ell\text{-dense subset with $\leq 0.6n$ vertices}}\le \sum_{u=k/n}^{0.6}n^{-k^2\ell +(\ell+1)k}
=O(n^{-8}).
\end{align*}

For the case $(k,\ell) = (3,2)$ a counting argument as above involving the $2$-dense sets does not work, and we will use the property that the considered set are \emph{maximal} 2-dense.
By~\eqref{eq:c_k} we obtain $c^*_{3,2}< 1.97$. Let $p = c^{\prime}/{n-1\choose 2}$, where $c^{\prime}= 3\cdot c\le 3\cdot c^*_{3,2}\le 5.91$. A simple application of Stirling's formula
reveals
$$\Pr {H_{n,p,3} \text{ has exactly $cn$ edges}} = (1 + o(1))(2\pi cn)^{-1/2}.$$
Let $U$ be a maximal $2$-dense subset of $H_{n,cn,3}$. 
As the distribution of $H_{n,cn,3}$ is the same as the distribution of $H_{n,p,3}$ conditioned on the
number of edges being precisely $cn$ we infer that
\begin{eqnarray*}
\Pr{H_{n,cn,3} \text{ contains a maximal $2$-dense subset $U$ with at most $0.6n$ vertices}}=\\ 
O(\sqrt{n})\cdot \Pr {H_{n,p,3} \text{ contains a maximal $2$-dense subset $U$ with at most $0.6n$ vertices}}.
\end{eqnarray*}
To complete the proof it is therefore sufficient to show that the latter probability is $o(n^{-1/2})$.
By Proposition~\ref{prop:Umaxldense} the event that $H_{n,p,3}$ contains  a maximal $2$-dense subset  $U$ implies that there exists a $\theta\in \{0,1\}$ such that $e_U =  2\cdot v_U +\theta$ and all vertices in $V_H\setminus U$  have degree less than $2-\theta$ in $U$. 
 We will show that the expected number of such sets with at most $0.6n$ vertices is $o(1)$.
We accomplish this in two steps. Note that if a subset $U$ is maximal $2$-dense, then certainly $|U|\geq 5$.
Let us begin with the case $s := |U|\leq n^{1/3}$. There are at most $n^s$ ways to choose the vertices in $U$, and at
most $s^{3(2s+\theta)}$ ways to choose the edges that are contained in $U$. Hence, for large $n$ the probability that $H_{n,p,3}$
contains such a subset with at most $\lfloor n^{1/3} \rfloor$ vertices is bounded by
\begin{align*}
 \sum_{s=5}^{\lfloor n^{1/3} \rfloor}\sum_{\theta=0}^1n^ss^{6s +3\theta}p^{2s+\theta}
< & \sum_{s=5}^{\lfloor n^{1/3} \rfloor}2n^ss^{6s +3}p^{2s}
=\sum_{s=5}^{\lfloor n^{1/3} \rfloor}2\left( ns^6 \left(\frac{c^{\prime}}{{n-1\choose 2}}\right)^2\right)^s\cdot s^3\\
\leq & ~n\sum_{s=5}^{\lfloor n^{1/3} \rfloor}2\left(c'^2n^{(1+6/3)-4}\right)^s 
\leq ~n\sum_{s=5}^{\lfloor n^{1/3} \rfloor}\left(n^{-1+o(1)}\right)^s=n^{-4+o(1)}.
\end{align*}
Let us now consider the case $n^{1/3}\leq |U|\leq 0.6n$. We note that 
\begin{equation*}
 \ln p=\ln  \left(\frac{c^{\prime}}{{n-1\choose 2}}\right)= \ln \frac{2c^{\prime}}{n^2} + \Theta\left({1 \over n}\right).
\end{equation*}
Also, there are ${n\choose un}\leq e^{nH(u)}$ ways to select $U$. Moreover, the number of ways to choose the $2un+\theta$ edges that are completely contained in $U$ is 
\begin{align*}
 {{un \choose 3}\choose 2un+\theta} \le \left(\frac{e (un)^3}{6(2un+\theta)}\right)^{2un}
=~\exp \left\lbrace 2un \ln \left(\frac{e (un)^2}{12}\right)+O(1)\right \rbrace.
\end{align*}
Finally, the probability that a vertex outside of  $U$ has a degree less than $2-\theta$ in $|U|$ is at most
\begin{equation*}
(1-p)^{un \choose 2} + {un \choose 2}p(1-p)^{{un \choose 2}-1}
=e^{-u^2c'}(1+u^2c')(1+O(1/n)).
\end{equation*}
Combining the above facts we obtain that the probability $P_u$ that $H_{n,p,3}$ contains a maximal $2$-dense subset $U$ with $2un$ vertices is
\[
\begin{split}
	P_u \le
	&\sum_{\theta=0}^1\binom{n}{un}\binom{\binom{un}{3}}{2un+\theta} p^{2un+\theta}(1-p)^{\binom{un}{3} - 2un-\theta} 
	\cdot \left(e^{-u^2c^{\prime}}(1+u^2c')(1+O(1/n))\right)^{(1-u)n}\\
	\le &~
	\exp \bigg \lbrace n\left( H(u) + 2u\ln \left(\frac{eu^2n^2}{12}\right) +2u\ln p\right) -p\left( {un\choose 3}-2un-1\right)~\\
&~~~~~~+ (1-u)n(-u^2c'+\ln(1+u^2c') )+O(1/n)\bigg\rbrace \\
\leq&\exp \bigg \lbrace n\bigg( H(u) + 2u\ln \left(\frac{ec'u^2}{6}\right)-\frac{u^3c'}{3}+(1-u)(-u^2c'+\ln(1+u^2c') ) \bigg)+O(1/n) \bigg \rbrace.
\end{split}
\]
If we fix $u$, the derivative of the exponent with respect to $c'$ is given by 

\begin{align*}
{2u\over c'}-{u^3\over 3}+(1-u)\left(-u^2+{u^2\over 1+u^2c'}\right)&\stackrel{c' \leq 5.91}{\ge}
{2u\over 6}-{u^3\over 3}+(1-u)\left(-u^2+{u^2\over 1+6u^2}\right) \\
=&u\left ( {1\over 3} -{{u^2/3}+6u^3-4u^4\over 1+6u^2}\right) 
\stackrel{u\le 0.6}\ge u\left ( {1\over 3} - 0.29 \right)  &\stackrel{u> 0} > 0,
\end{align*}
 thus implying that for all $u \in (0,0.6] $ the exponent is increasing with respect to $c'$. Therefore, it is sufficient to consider only the case when $c'=5.91$.  

The derivative of the exponent with respect to $u$ equals $\ln (c'^2u^3(1-u)) +6-\ln 6 -\ln(1+u^2c')-({(1-u)2u^3c'^2}/{(1+u^2c')})$. As the function $\ln(c'u^3)+(2u^4c'^3/ (1+u^2c'))$ is increasing and $\ln\left((1-u)/( 1+u^2c')\right)-(2u^3c'^2/(1+u^2c'))$ is decreasing in $u$, there is at most one $n^{-2/3}\leq u_0\leq 0.6$ where the derivative of the exponent vanishes. Moreover the derivative of the exponent at $u=0.6$ is positive. Therefore, $u_0$ is a global minimum, and the bound on $P_u$ is maximized at either at $u=n^{-2/3}$ or at $u=0.6$.  Elementary algebra then yields that the left point is the right choice, giving  the estimate $P_u = o(2^{-n^{1/3}})$, and the proof concludes by adding up this expression for all admissible $n^{-2/3} \le u \le 0.6$.

\end{proof}

%
%
%
%
In order to deal with larger subsets we switch to the Poisson cloning model. Let~$C$ denote the~$(\ell+1)$-core of~$\widetilde{H}_{n,p,k}$, where $p = ck/\binom{n-1}{k-1}$, and note that Theorem~\ref{cor:Contiguity} and Proposition~\ref{prop:ModelsEquiv} guarantee that $\widetilde{H}_{n,p,k}$ and $H_{n,cn,k}$ are sufficiently similar. Observe that any \emph{minimal}~$\ell$-dense set in~$\widetilde{H}_{n,p,k}$ is always a subset of~$C$, as otherwise, by removing vertices of degree at most $\ell$ the density would not decrease. In other words,~$C$ contains all minimal~$\ell$-dense subsets, and so it is enough to show that the core does not contain any~$\ell$-dense subset. Therefore, from now on we will restrict our attention to the study of~$C$.

Assume that the degree sequence of~$C$ is given by~$\mathbf{d} = (d_1,\ldots, d_{\tilde{N}_{\ell+1}})$, where we denote by~$\tilde{N}_{\ell+1}$ the number of vertices in~$C$. Thus, the number of edges in~$C$ is \[\tilde{M}_{\ell+1} = k^{-1}\sum_{i=1}^{\tilde{N}_{\ell+1}} d_i.\] For~$q,\beta \in[0,1]$ let~$X_{q,\beta}=X_{q,\beta}(C)=X_{q,\beta}(\mathbf{d})$ denote the number of subsets of~$C$ with~$\lfloor \beta \tilde{N}_{\ell+1}\rfloor$ vertices and total degree~$\lfloor qk\tilde{M}_{\ell+1}\rfloor$.

Let $\xi^\ast = \bar{x}^\ast c_{k,\ell}^\ast\, k$, where $\bar{x}^\ast$ is the largest solution of the equation $x=Q(xc^\ast_{k,\ell}k,\ell)^{k-1}$, and note that $\xi^\ast$ satisfies~\eqref{eq:xiast}. Moreover, let $\xi$ be given by $\xi= \bar{x}ck$, where $\bar{x}$ is the largest solution of the equation $x=Q(xck,\ell)^{k-1}$. As $\xi$ is increasing with respect to $c$ (cf.\ Claim~\ref{cl:coreDensity}), there exists a $\delta > 0$ and a $\gamma = \gamma(\delta)>0$ such that $c = c_{k, \ell}^*-\gamma$ and~$\xi = \xi^* - \delta$. Also $\gamma \rightarrow 0$ as $\delta \rightarrow 0$ by continuity of the largest solution of $x=Q(xck,\ell)^{k-1}$.

In the sequel we will assume that $\delta >0$ is fixed (and sufficiently small for all our estimates to hold), and we will choose $c < c_{k,\ell}^\ast$ such that $c = c_{k,\ell}^\ast - \gamma$ and $\xi = \xi^\ast - \delta$. Set
\begin{equation}
\label{eq:nm_l+1}
\begin{split}
n_{\ell+1}&= Q(\xi,\ell+1)n \quad\text{and}\quad m_{\ell+1}= \frac{\xi Q(\xi,\ell)}{kQ(\xi,\ell+1)}n_{\ell+1}.
	\end{split}
\end{equation}
By applying Corollary~\ref{cor:edgesCore} (and using $\delta^3$ instead of $\delta$) we obtain that with probability $1-n^{-\omega(1)}$
\begin{equation}
\label{eq:N2M2}
\tilde{N}_{\ell+1} = n_{\ell+1} \pm \delta^3 n ~\text{ and }~ \tilde{M}_{\ell+1} = m_{\ell+1} \pm \delta^3 n.
\end{equation}
Moreover, by applying Theorem~\ref{thm:CoreClone} we infer that $C$ is distributed like the cloning model with parameters $\tilde{N}_{\ell+1}$ and vertex degree distribution $\mathrm{Po}_{\ge \ell+1}(\Lambda_{c,k,\ell})$, where
\begin{equation}
\label{eq:Lambdackell}
\Lambda_{c,k,\ell}=\xi \pm \delta^3 = \xi^\ast - \delta \pm \delta^3,
\end{equation}
Recall that the definition of $\xi^\ast$ implies that $k\ell = \frac{\xi^\ast Q(\xi^\ast,\ell)}{Q(\xi^\ast,\ell+1)}$. Let $e_{k,\ell}$ denote the value of the first derivative of $\frac{x Q(x,\ell)}{k\ell Q(x,\ell+1)}$ with respect to $x$ at $x = \xi^\ast$. By
applying Taylor's Theorem to $\frac{x Q(x,\ell)}{Q(x,\ell+1)}$ around $x = \xi^*$ we obtain
\begin{equation} \label{eq:coresize}
\begin{split}
 m_{\ell+1}&= (1 - e_{k,\ell}\cdot \delta + \Theta(\delta^2))\ell  \cdot n_{\ell+1},
 \text{ where}\quad
 \frac{\xi Q(\xi,\ell)}{Q(\xi,\ell+1)}= k\ell(1 - e_{k,\ell}\cdot \delta + \Theta(\delta^2)).
 \end{split}
\end{equation}

Recall that $H_{\mathbf{d}, k}$ is a random hypergraph where the $i$th vertex has degree $d_i$.
We start by bounding the probability that a given subset of the vertices in $H_{\mathbf{d}, k}$ is maximal $\ell$--dense. In particular, we will work on the Stage 3 of the exposure process, i.e., when the number of vertices and degree sequence of the core have already been exposed. We will show the following.
\begin{lemma} \label{lem:BU}
Let $k \geq 3, \ell\geq 2$ and $\mathbf{d} = (d_1,\ldots, d_N)$ be a degree sequence and $U \subseteq\{1, . . . ,N\}$ such that $|U|=\lfloor \beta N \rfloor$. Moreover, set $M = k^{-1}\sum_{i=1}^N d_i$ and $q = (kM)^{-1}\sum_{i\in U}d_i$. Assume that $M < \ell  \cdot N$. If $\mathbb{P}_{\mathbf{d},k}$ denotes the probability measure on the space of $k$-uniform hypergraphs with degree sequence given by $\mathbf{d}$,  $\mathcal{B}(\beta,q)$ denotes the event that $U$ is a maximal $\ell$-dense set in $H_{\mathbf{d}, k}$,  and $H(x) = -x \ln x - (1 - x) \ln(1 - x)$ denotes the entropy function, then
$$P_{\mathbf{d},k}(\mathcal{B}(\beta,q))\leq O(M^{\ell+0.5})\binom{M}{\ell |U|}e^{-kMH(q)}(2^k-1)^{M-\ell |U|}.$$
\end{lemma}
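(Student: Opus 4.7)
The plan is to work in the configuration model (Stage~3 of the cloning construction), so that a hypergraph in $H_{\mathbf{d},k}$ arises as a uniformly random perfect $k$-matching of the $kM$ clones, of which there are $(kM)!/(M!(k!)^M)$ in total. Write $C_U$ for the $qkM$ clones of vertices in $U$ and $C_{\bar U}$ for the remaining $(1-q)kM$. By Proposition~\ref{prop:Umaxldense}, on the event $\mathcal{B}(\beta,q)$ we must have $e_U = \ell|U| + \theta$ for some $\theta \in \{0,1,\ldots,\ell-1\}$, so a union bound yields
\[
\mathbb{P}_{\mathbf{d},k}(\mathcal{B}(\beta,q)) \le \sum_{\theta=0}^{\ell-1}\mathbb{P}_{\mathbf{d},k}\bigl(e_U = \ell|U|+\theta\bigr),
\]
where I discard the stronger restriction on external-vertex degrees (this only loosens the bound, which is acceptable for an upper estimate).

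The core step is to estimate $\mathbb{P}_{\mathbf{d},k}(e_U = L)$ for a fixed $L = \ell|U|+\theta$. Temporarily labeling the $M$ edges, there are $\binom{M}{L}$ ways to designate which of them are to be internal; fix this choice to be $\{1,\ldots,L\}$. A matching is then parametrized by a type vector $(t_{L+1},\ldots,t_M)\in\{0,\ldots,k-1\}^{M-L}$, where $t_i$ is the number of $C_U$-clones in edge $i$, together with a compatible assignment of clones to edges. The number of such assignments with a given type vector equals
\[
\frac{(qkM)!\,((1-q)kM)!}{(k!)^L\,\prod_{i>L}t_i!\,(k-t_i)!}.
\]
Dividing by the total $(kM)!/(k!)^M$ of labeled matchings gives the per-type probability; summing over all type vectors, and crucially relaxing the natural constraint $\sum_{i>L}t_i = qkM - kL$ to coordinate-wise ranges $t_i\in\{0,\ldots,k-1\}$ (each term is nonnegative, so the relaxation yields an upper bound), the sum factorizes coordinate by coordinate. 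The identity $\sum_{t=0}^{k-1}\binom{k}{t} = 2^k - 1$ then produces
\[
\mathbb{P}_{\mathbf{d},k}(e_U=L) \le \binom{M}{L}\,\frac{(qkM)!\,((1-q)kM)!}{(kM)!}\,(2^k-1)^{M-L}.
\]

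To finish, apply Stirling's formula in the form $(qkM)!((1-q)kM)!/(kM)! = \Theta(\sqrt M)\,e^{-kMH(q)}$, valid since the hypothesis $M<\ell N$ guarantees $(1-q)kM$ is non-degenerate. Summing over $\theta$ via $\binom{M}{\ell|U|+\theta}(2^k-1)^{-\theta} \le M^\theta \binom{M}{\ell|U|}$ collects a polynomial factor in $M$ and absorbs everything into the prefactor $O(M^{\ell+0.5})$, giving the claimed bound. The principal technical point is the relaxation of the constrained multinomial sum to an unconstrained product of coordinate sums; this is what cleanly produces the factor $(2^k-1)^{M-L}$, after which the remaining manipulations are routine bookkeeping and standard asymptotics.
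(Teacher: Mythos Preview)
Your argument is correct and is essentially the paper's proof: both bound $\mathbb{P}_{\mathbf d,k}(e_U=\ell|U|+\theta)$ by relaxing the constrained sum over edge-types $t_i\in\{0,\dots,k-1\}$ to an unconstrained product, producing the key factor $(2^k-1)^{M-\ell|U|-\theta}$, and then apply Stirling and sum over $\theta$. The only cosmetic difference is packaging: the paper introduces an auxiliary i.i.d.\ Bernoulli$(q)$ colouring of the $kM$ clones and conditions on the total number of red clones being $qkM$, which yields the factors $e^{-kMH(q)}$ and $O(\sqrt M)$ exactly where your direct count produces $\binom{kM}{qkM}^{-1}=\Theta(\sqrt M)\,e^{-kMH(q)}$.
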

\begin{proof}
Recall that~$H_{\mathbf{d}, k}$ is obtained by beginning with~$d_i$ clones for each~$1\le i\le N$ and by choosing uniformly at random a 
perfect~$k$-matching on this set of clones. This is equivalent to throwing~$kM$ balls into~$M$ bins such that every bin contains~$k$ balls. In
order to estimate the probability for~$\mathcal{B}(\beta, q)$ assume that we color the $kqM$ clones of the vertices in~$U$ with red, and the remaining $k(1-q)M$ clones with blue. Let $\theta$ be an integer such that $0\le \theta <\ell$. So, by applying
Proposition~\ref{prop:Umaxldense} we are interested in the probability for the event that there are exactly~$B_\theta=\ell|U|+\theta$ bins with~$k$ red balls.
We estimate the above probability as follows. We begin by putting into each bin $k$ \emph{black} balls, labeled with the numbers 
$1, \dots, k$. Let~$\mathcal{K} = \{1, \dots, k\}$, and let~$X_1, \dots, X_M$ be independent random sets such that for $1 \le i\le M$
\[
	\forall \mathcal{K}' \subseteq \mathcal{K} ~:~ \Pr{X_i = \mathcal{K}'} = q^{|\mathcal{K}'|}(1-q)^{k - |\mathcal{K}'|}.
\]
Note that $|X_i|$ follows the binomial distribution $\Bin(k, q)$. We then recolor the balls in the $i$th bin that are in $X_i$ with red, and all others with blue. 
 So, the total number of red balls is $X = \sum_{i=1}^M |X_i|$. Note that $\E{X} = kqM$, and that $X$ is distributed as $\Bin(kM, q)$. A straightforward application of Stirling's formula then gives
\[
	\Pr{X = kqM  } =\Pr{X = \E{X}} = (1 + o(1))(2\pi q(1-q)kM)^{-1/2}.
\]
 Let $R_j$ be the number of $X_i$'s that contain $j$ elements. Then 
\begin{equation}
\label{eq:probBU}
\begin{split}
	\mathbb{P}_{\mathbf{d},k}{(\mathcal{B}(\beta,q))} \le &\sum_{\theta=0}^{\la-1}\Pr{R_k = B_\theta |X = kqM} =\sum_{\theta=0}^{\la-1}{\Pr{X = kqM \wedge R_k = B_\theta}\over \Pr{X = kqM}}\\
=&~O\left(\sqrt{M}\right) \sum_{\theta=0}^{\la-1}\Pr{X = kqM \wedge R_k = B_\theta} .
\end{split}
\end{equation}
Let $p_j = \Pr{|X_i| = j} = \binom{k}{j}q^j(1-q)^{k-j}$. Moreover, define the set of integer sequences
\begin{align*}
\mathcal{A}=&\bigg\{(b_0, \dots, b_{k-1}) \in \mathbb{N}^{k}
:\sum_{j=0}^{k-1} b_j = M - B_\theta \textrm{ and } \sum_{j=0}^{k-1} j b_j = kqM - kB_\theta \bigg\}.
\end{align*}
Then
\begin{align*}
	 &\Pr{X = kqM \wedge R_k = B_\theta}
\le\sum_{\theta=0}^{\la-1} \sum_{(b_0, \dots, b_{k-1})\in \mathcal{A}} \binom{M}{b_0, \dots, b_{k-1}, B_\theta} \cdot 
\left( \prod_{j=0}^{k-1}p_j^{b_j} \right) \cdot p_k^{B_\theta}.
\end{align*}
Now observe that the summand can be rewritten as
\[
	\binom{M}{B_\theta} q^{kqM}(1-q)^{k(1-q)M} \cdot \binom{M-B_\theta}{b_0, \dots, b_{k-1}} \prod_{j=0}^{k-1} \binom{k}{j}^{b_j}.
\]
Also,
\begin{equation*}
 \sum_{(b_0, \dots, b_{k-1}) \in \mathcal{A}}\binom{M-B_\theta}{b_0, \dots, b_{k-1}} \prod_{j=0}^{k-1}
	\binom{k}{j}^{b_j}\le \left( \sum_{j=0}^{k-1} \binom{k}{j} \right)^{M-B_{\theta}}
	=(2^k-1)^{M-B_{\theta}}.
\end{equation*}
Thus, we have
\begin{equation*}
\begin{split}
	\Pr{X = kqM \wedge R_k = B_\theta}
	\le &\sum_{\theta=0}^{\la-1}\binom{M}{B_\theta} q^{kqM}(1-q)^{k(1-q)M} (2^k-1)^{M-B_\theta}\\
	\le &\sum_{\theta=0}^{\la-1}M^{\theta}\binom{M}{\ell |U|}e^{-kMH(q)}(2^k-1)^{M-\ell |U|} \cdot(2^k-1)^{-\theta}\\
	\le &~\la M^{\ell}\binom{M}{\ell |U|}(2^k-1)^{M-\ell |U|}e^{-kMH(q)} .
\end{split}
\end{equation*}
The claim then follows by combining the above facts and \eqref{eq:probBU}.
\end{proof}
As already mentioned, the above lemma gives us a bound on the probability that a subset of the $(\ell+1)$-core with a given number of
vertices and total degree is maximal $\ell$-dense, assuming that the degree sequence is given. In particular, we work on the probability space
of Stage 3 of the exposure process. In order to show that the $(\ell+1)$-core contains no $\ell$-dense subset, we will estimate the number of
such subsets. Recall that $X_{q, \beta}(\mathbf{d})$ denotes the number of subsets of~$H_{\mathbf{d}, k}$ with~$\lfloor \beta
\tilde{N}_{\ell +1} \rfloor$ vertices and total degree~$\lfloor q\cdot k \tilde{M}_{\ell +1} \rfloor$. Let also~$X_{q,\beta}^{(\ell)}$ denote
the number of these sets that are maximal~$\ell$-dense.  As an immediate consequence of Markov's inequality we obtain the following corollary.

\begin{corollary} \label{cor:existence}
Let $\B(q, \beta)$ be defined as in Lemma~\ref{lem:BU}, and let $\mathbf{d}$ be the degree sequence of the core of $\Pcl$. Then
$$ \Pr { X_{q,\beta}^{(\ell)} >0 ~|~ \mathbf{d}} \leq X_{q,\beta}(\mathbf{d}) \mathbb{P}_{\mathbf{d}, k} (\B(q, \beta)).$$
\end{corollary}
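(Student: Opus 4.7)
The plan is straightforward once Lemma~\ref{lem:BU} is in hand: expand $X_{q,\beta}^{(\ell)}$ as a sum of indicators, apply the lemma termwise, and finish with Markov's inequality. Concretely, conditional on the degree sequence $\mathbf{d}$, the quantity $X_{q,\beta}(\mathbf{d})$ is a deterministic function of $\mathbf{d}$: it counts the subsets $U \subseteq \{1,\ldots,\tilde{N}_{\ell+1}\}$ whose cardinality equals $\lfloor \beta \tilde{N}_{\ell+1}\rfloor$ and whose total degree equals $\lfloor qk\tilde{M}_{\ell+1}\rfloor$, and these constraints involve only the $d_i$'s. All remaining randomness (whether a given such $U$ is in fact maximal $\ell$-dense in $H_{\mathbf{d},k}$) comes from the uniform perfect $k$-matching drawn in Stage~3 of the cloning construction.

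Writing $X_{q,\beta}^{(\ell)} = \sum_U \mathbf{1}\{U\text{ is maximal }\ell\text{-dense in }H_{\mathbf{d},k}\}$, where the sum ranges over exactly those $X_{q,\beta}(\mathbf{d})$ candidate subsets, linearity of expectation gives
$$\mathbb{E}\!\left[X_{q,\beta}^{(\ell)} \,\middle|\, \mathbf{d}\right] \;=\; \sum_U \mathbb{P}_{\mathbf{d},k}\bigl(U\text{ is maximal }\ell\text{-dense}\bigr).$$
By Lemma~\ref{lem:BU} each summand is bounded above by $\mathbb{P}_{\mathbf{d},k}(\mathcal{B}(q,\beta))$, and this bound depends only on $(\mathbf{d},q,\beta)$ and not on the particular $U$ — which is precisely the uniformity needed to pull the bound out of the sum. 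Hence $\mathbb{E}[X_{q,\beta}^{(\ell)}\mid \mathbf{d}] \leq X_{q,\beta}(\mathbf{d})\cdot \mathbb{P}_{\mathbf{d},k}(\mathcal{B}(q,\beta))$.

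To conclude, since $X_{q,\beta}^{(\ell)}$ is a nonnegative integer-valued random variable, Markov's inequality in its elementary form $\mathbb{P}(X>0) \leq \mathbb{E}[X]$, applied conditionally on $\mathbf{d}$, yields the stated bound. There is no real obstacle: Lemma~\ref{lem:BU} has already done the combinatorial and probabilistic work by producing a per-set bound that is uniform in $U$, and the present corollary merely packages that uniform bound into a first-moment statement. The only subtle point worth flagging explicitly is that the collection of candidate $U$'s is measurable with respect to $\mathbf{d}$, which is why $X_{q,\beta}(\mathbf{d})$ can be treated as a constant after conditioning.
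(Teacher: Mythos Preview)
Your proof is correct and matches the paper's approach exactly: the paper simply states that the corollary is ``an immediate consequence of Markov's inequality'' without further detail, and your argument is the natural unpacking of that one-line justification. The point you make explicit---that the bound of Lemma~\ref{lem:BU} depends only on $(\mathbf{d},q,\beta)$ and not on the particular $U$, so it can be pulled out of the sum over the $X_{q,\beta}(\mathbf{d})$ candidate sets---is precisely what makes the first-moment argument go through.
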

By applying Lemma~\ref{lem:smallU1} we obtain that~$H_{n,cn,k}$ does not obtain any $\ell$-dense set with less that~$0.6n$ vertices. This is particularly also true for $C$, and so it remains to prove Theorem~\ref{thm:diff} for sets of size bigger than~$0.6n \ge 0.6\tilde{N}_{\ell +1}$. We also observe that it is sufficient to argue about subsets of size up to, say,~$(1-e_{k,\ell}\delta/2)\tilde{N}_{\ell +1}$, as~\eqref{eq:coresize} implies that for small~$\delta$ all larger subsets have density smaller than~$\ell$. Moreover, the total degree $D$ of any~$\ell$-dense subset with $\beta \tilde{N}_{\ell +1}$ vertices is
 at least~$k\ell\cdot \beta \tilde{N}_{\ell+1}$, i.e.,
$$D=k\cdot q \tilde{M}_{\ell+1}  \Rightarrow k\ell\cdot \beta \tilde{N}_{\ell+1}\le k\cdot q \tilde{M}_{\ell+1}.$$
By~\eqref{eq:N2M2} and \eqref{eq:coresize}, we infer $\tilde{M}_{\ell+1} =\la(1-\Theta(\delta))$ which combined with above inequality implies that $q \ge (1+\Theta(\delta))\beta$.
Note that as each of the vertices in $C$ has degree at least $\ell +1$, the total degree of the~$(\ell +1)$-core with a $\ell$-dense subset with $\beta \tilde{N}_{\ell +1}$ vertices and degree $q \cdot k\tilde{M}_{\ell +1}$ satisfies
\begin{align*}
	k\tilde{M}_{\ell +1}&\ge q \cdot k\tilde{M}_{\ell +1} + (\ell +1)(\tilde{N}_{\ell+1} - \beta \tilde{N}_{\ell+1} )\\
&\Rightarrow q \le 1 - \frac{(\ell +1)(1-\beta)\tilde{N}_{\ell +1}}{k\tilde{M}_{\ell +1}}\stackrel{\eqref{eq:N2M2}, \eqref{eq:coresize}}{\le} 1 - \frac{(\ell +1)(1-\beta)}{k\ell},
\end{align*}
where the last inequality holds for any small enough $\delta$. Therefore, we fix $\beta$ and $q$ as follows.
\begin{equation}\label{eq:betaq}
 0.6<\beta<1-e_{k,\ell}\delta/2 \quad \text{and} \quad \la(1+\Theta(\delta))\beta \le q \le 1 - \frac{(\ell +1)(1-\beta)}{k\ell}.
 \end{equation}

\noindent
With Lemma~\ref{lem:BU} and Corollary~\ref{cor:existence} in hand we are ready to show the following.
\begin{lemma}\label{lem:Probs}
Let $m_{\ell+1}$ and $n_{\ell+1}$ be as defined in~\eqref{eq:nm_l+1} and $\CE$ be the event that \eqref{eq:N2M2} holds. Then
\begin{align*}
\Pr{X_{q,\beta}^{(\ell)}>0}= &~~\E{X_{q,\beta}|\mathcal{E}}(2^k-1)^{m_{\ell+1}-\ell\beta n_{\ell+1}}\cdot e^{\la n_{\la +1}H(\beta)-km_{\ell+1}H(q)+O(\delta^3 n)}+O\left( n^{-3}\right).
\end{align*}
\end{lemma}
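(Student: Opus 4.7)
The plan is to combine Corollary~\ref{cor:existence} with Lemma~\ref{lem:BU} and then to replace the random quantities $\tilde{M}_{\ell+1}, \tilde{N}_{\ell+1}$ by their deterministic counterparts $m_{\ell+1}, n_{\ell+1}$, absorbing the discrepancy into a multiplicative $e^{O(\delta^3 n)}$ factor that is available on the high-probability event $\mathcal{E}$.

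First I would split
\begin{equation*}
\Pr{X_{q,\beta}^{(\ell)} > 0} \le \Pr{X_{q,\beta}^{(\ell)} > 0,\ \mathcal{E}} + \Pr{\mathcal{E}^c},
\end{equation*}
and note that $\Pr{\mathcal{E}^c} = n^{-\omega(1)} = O(n^{-3})$ by Corollary~\ref{cor:edgesCore}. For the first summand, the tower property together with Corollary~\ref{cor:existence} and Lemma~\ref{lem:BU} yields
\begin{equation*}
\Pr{X_{q,\beta}^{(\ell)} > 0,\ \mathcal{E}} \le \E{X_{q,\beta}(\mathbf{d})\cdot \Psi \cdot \mathbf{1}_{\mathcal{E}}}, \qquad \Psi := O(M^{\ell+1/2})\binom{M}{\ell|U|}e^{-kMH(q)}(2^k-1)^{M-\ell|U|},
\end{equation*}
where $M = \tilde{M}_{\ell+1}$ and $|U| = \lfloor \beta \tilde{N}_{\ell+1}\rfloor$. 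The goal then reduces to showing that on $\mathcal{E}$, $\Psi$ is dominated by
\begin{equation*}
\Psi^\ast := (2^k-1)^{m_{\ell+1} - \ell\beta n_{\ell+1}}\cdot e^{\ell n_{\ell+1}H(\beta) - k m_{\ell+1}H(q) + O(\delta^3 n)},
\end{equation*}
since then $\Psi^\ast$ factors out of the expectation and leaves $\E{X_{q,\beta}\mid \mathcal{E}} \cdot \Psi^\ast$.

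Three of the four pieces of $\Psi$ convert to their deterministic counterparts up to $e^{O(\delta^3 n)}$ almost for free: on $\mathcal{E}$ we have $M - \ell|U| = m_{\ell+1} - \ell\beta n_{\ell+1} + O(\delta^3 n)$, $M = m_{\ell+1} + O(\delta^3 n)$, and the polynomial prefactor $M^{\ell+1/2} = e^{O(\log n)}$ is absorbed. The main obstacle is the binomial coefficient $\binom{M}{\ell|U|}$: a direct Stirling estimate produces an exponent $m_{\ell+1}H(\ell\beta n_{\ell+1}/m_{\ell+1})$ that differs from the target $\ell n_{\ell+1}H(\beta)$ by $\Theta(\delta n)$, because by~\eqref{eq:coresize} the gap $\ell n_{\ell+1} - m_{\ell+1}$ is itself of order $\delta n$ — far too large to conceal inside $O(\delta^3 n)$.

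To circumvent this I would exploit the strict inequality $m_{\ell+1} < \ell n_{\ell+1}$ valid for $c < c_{k,\ell}^\ast$, which follows from $\xi < \xi^\ast$, the strict monotonicity of $xQ(x,\ell)/Q(x,\ell+1)$ (Claim~\ref{clm:incXi}), and the fact that this function equals $k\ell$ exactly at $\xi^\ast$. Since the gap $\ell n_{\ell+1} - m_{\ell+1} = \Theta(\delta n)$ dominates the $O(\delta^3 n)$ fluctuation of $\tilde{M}_{\ell+1}$, on $\mathcal{E}$ one actually has $M \le \ell n_{\ell+1}$, and the trivial comparison yields $\binom{M}{\ell|U|} \le \binom{\ell n_{\ell+1}}{\ell|U|}$. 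Stirling now gives $\binom{\ell n_{\ell+1}}{\ell|U|} = e^{\ell n_{\ell+1}H(|U|/n_{\ell+1}) + O(\log n)}$, and since $|U|/n_{\ell+1} = \beta + O(\delta^3)$ with $\beta$ bounded away from $0$ and $1$ by~\eqref{eq:betaq}, the boundedness of $H'$ on the relevant compact interval converts the exponent into $\ell n_{\ell+1}H(\beta) + O(\delta^3 n)$. Combining this with the three easy factors gives $\Psi \le \Psi^\ast$ on $\mathcal{E}$, and adding the $O(n^{-3})$ tail from $\mathcal{E}^c$ yields the claim.
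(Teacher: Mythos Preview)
Your argument is correct and rests on the same two ingredients as the paper's proof: Corollary~\ref{cor:existence}/Lemma~\ref{lem:BU} for the conditional bound, and the observation that on $\mathcal{E}$ one has $M<\ell\tilde{N}_{\ell+1}$ (equivalently $M\le \ell n_{\ell+1}$), which lets one replace $\binom{M}{\ell|U|}$ by a binomial with top $\ell\tilde{N}_{\ell+1}$ and so land on the entropy $\ell n_{\ell+1}H(\beta)$ rather than $m_{\ell+1}H(\ell\beta n_{\ell+1}/m_{\ell+1})$.

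The decomposition, however, differs. The paper introduces an auxiliary event $\mathcal{E}_1=\{X_{q,\beta}\le n^3\,\E{X_{q,\beta}\mid\mathcal{E}}\}$ via Markov's inequality, conditions on $\mathcal{E}\cap\mathcal{E}_1$, and then replaces the random count $X_{q,\beta}(\vec d)$ by the deterministic cap $n^3\E{X_{q,\beta}\mid\mathcal{E}}$, absorbing the extra $n^3$ into $e^{O(\delta^3 n)}$. You instead keep $X_{q,\beta}(\mathbf d)$ inside the expectation and factor out only the (nearly) deterministic $\Psi$; this is cleaner and makes the Markov step and the spare polynomial factor unnecessary.

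One small wrinkle: you bound $\binom{M}{\ell|U|}\le\binom{\ell n_{\ell+1}}{\ell|U|}$ and then invoke ``$\beta$ bounded away from $1$'' to control $H'$. But~\eqref{eq:betaq} only gives $1-\beta\ge e_{k,\ell}\delta/2$, so $H'(\beta)=O(\ln(1/\delta))$ rather than $O(1)$, and your entropy error is really $O(\delta^3\ln(1/\delta)\,n)$. The paper sidesteps this by using top $\ell\tilde{N}_{\ell+1}$ instead of $\ell n_{\ell+1}$, so that the binomial ratio is exactly $|U|/\tilde{N}_{\ell+1}=\beta+O(1/n)$ and the $\delta^3$ perturbation of $\tilde N_{\ell+1}$ hits $H(\beta)\le\ln 2$ rather than $H'$. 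Either way the discrepancy is $o(\delta^2 n)$ and disappears inside the $O(\delta^2 n)$ slack of Lemma~\ref{lem:final_prob}, so the overall argument stands.
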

\begin{proof} 
Let $\CE_1$ be the event that $X_{q,\beta}\leq n^{3} \ex (X_{q,\beta} \ | \ \CE )$. Markov's inequality immediately implies that $\Pr {\CE_1 \ | \ \CE} \geq 1 - n^{-3}$. 
If $\vec{d}$ is a vector, we write $\vec{d} \in \{\CE\cap \CE_1\}$ to denote that $\vec{d}$ is a possible 
degree sequence of $\C$ if the events $\CE$ and $\CE_1$ are realized.
We have 
\begin{equation*}
\begin{split}
\Pr {X_{q,\beta}^{(\ell)}> 0}
& \le ~\Pr {X_{q,\beta}^{(\ell)} >0 ~|~  \CE_1 \cap \CE} + \Pr {\overline{\CE_1}} + \Pr {\overline{\CE}} \\
& = \sum_{\vec{d} \in \{\CE\cap \CE_1\}}\Pr{X_{q,\beta}^{(\ell)} >0 ~|~ \CE_1 \cap \CE \text{ and } \mathbf{d} = \vec{d}}\cdot \Pr{\mathbf{d} = \vec{d}~|~ \CE_1 \cap \CE} + O(n^{-3}) \\
&=\sum_{\vec{d} \in \{\CE\cap \CE_1\}}\Pr{X_{q,\beta}^{(\ell)} >0 ~|~ \mathbf{d} = \vec{d}}\cdot \Pr{\mathbf{d} = \vec{d}~|~ \CE_1 \cap \CE} + O(n^{-3}) \\
&\stackrel{\text{Cor}.\ \ref{cor:existence}}=\sum_{\vec{d} \in \{\CE\cap \CE_1\}} X_{q, \beta}(\vec{d}) \mathbb{P}_{\vec{d},k}(\mathcal{B}(q, \beta))\cdot \Pr{\mathbf{d} = \vec{d}~|~ \CE_1 \cap \CE}+ O(n^{-3})\\
& = n^{3}~\E{X_{q, \beta} ~|~ \CE}\cdot \sum_{\vec{d} \in \{\CE\cap \CE_1\}} \mathbb{P}_{\vec{d},k}(\mathcal{B}(q, \beta))\Pr{\mathbf{d} = \vec{d}~|~ \CE_1 \cap \CE}+ O(n^{-3}).
\end{split}
\end{equation*}
Note that the assumption $\vec{d} \in \{\CE \cap \CE_1\}$ implies that the number of vertices $\tilde{N}_{\ell+1} $ of $\vec{d}$ is $n_{\ell +1} \pm \delta^3n$ and the number of edges $\tilde{M}_{\ell+1} $ is $m_{\ell +1}\pm\delta^3 n$, by $\CE$. Further note that for small enough $\delta$
\begin{align*} 
\tilde{M}_{\ell+1}
\le m_{\ell+1} + \delta^3 n
\le (1 - \Theta(\delta))\ell n_{\ell+1} + \delta^3 n
\le \ell \tilde{N}_{\ell+1} - \Theta(\delta)n
\end{align*}
Using Stirling's formula we obtain
 $$\binom{\tilde{M}_{\ell+1} }{\la \beta \tilde{N}_{\ell+1} } < \binom{\la \tilde{N}_{\ell+1} }{\la \beta \tilde{N}_{\ell+1} }= \exp(\ell n_{\ell+1} H(\beta) + O(\delta^3 n)) .$$
Thus, applying Lemma~\ref{lem:BU} we obtain uniformly for all $\vec{d} \in \{\CE \cap \CE_1\}$ that
\begin{align*}
	\mathbb{P}_{\bar{d},k}(\mathcal{B}(q, \beta)) =&(2^k-1)^{m_{\ell +1} - \beta n_{\ell +1}} \cdot e^{\la n_{\la+1}H(\beta)-km_{\ell +1}H(q) + O(\delta^3 n)}.
\end{align*}
The claim follows.
\end{proof}
The following lemma bounds the expected value of $X_{q,\beta}$ conditional on $\CE$.
\begin{lemma}\label{lem:ExpXdt}
There exists $\delta_0>0$ such that whenever $\delta < \delta_0$ 
\begin{eqnarray*}
\E{X_{q,\beta}|\mathcal{E}} <\exp \bigg(n_{\ell+1}H(\beta) -n_{\ell+1}(1-\beta)I_{\xi^*}\left( \frac{k\ell (1-q)}{1-\beta}\right) +  0.4 \cdot{ k\la \over \xi^*}\cdot n_{\la+1} \delta   + O(\delta^2 n)\bigg),
\end{eqnarray*}
where $I_{\xi^*}\left( \frac{k\ell (1-q)}{1-\beta}\right)$ is  the rate function  as defined in \eqref{Iz}.
\end{lemma}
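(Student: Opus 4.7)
The plan is to exploit that, by Theorem~\ref{thm:CoreClone} and \eqref{eq:Lambdackell}, the $(\ell+1)$-core is distributed essentially as a cloning model whose vertex degrees are i.i.d.\ $(\ell+1)$-truncated Poisson with parameter $\Lambda=\Lambda_{c,k,\ell}$. Set $s=\lfloor\beta \tilde N_{\ell+1}\rfloor$ and $D=\lfloor qk\tilde M_{\ell+1}\rfloor$. By exchangeability of the vertices I first reduce to
\[
\mathbb{E}[X_{q,\beta}\mid\CE]\;=\;\binom{\tilde N_{\ell+1}}{s}\,\mathbb{P}\!\bigl[\textstyle\sum_{i=1}^{s}d_i=D\,\bigm|\,\CE\bigr].
\]
Under $\CE$ the total degree equals $k\tilde M_{\ell+1}$, so writing $S_1=\sum_{i\le s}d_i$ and $S_2=\sum_{i>s}d_i$,
\[
\mathbb{P}[S_1=D\mid S_1+S_2=k\tilde M_{\ell+1}]\;=\;\frac{\mathbb{P}[S_1=D]\,\mathbb{P}[S_2=(1-q)k\tilde M_{\ell+1}]}{\mathbb{P}[S_1+S_2=k\tilde M_{\ell+1}]},
\]
with $S_1,S_2$ independent. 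The denominator is $\Theta(\tilde N_{\ell+1}^{-1/2})$ by a standard local-CLT estimate, while $\mathbb{P}[S_1=D]\le 1$; hence it suffices to bound the $S_2$-factor.

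Second, I would apply Lemma~\ref{lem:I} to $S_2$ at the point $z:=(1-q)k\tilde M_{\ell+1}/(\tilde N_{\ell+1}-s)$. The constraints~\eqref{eq:betaq} on $q$, combined with~\eqref{eq:N2M2} and~\eqref{eq:coresize}, ensure that $z\le\mu$ (the mean of the truncated Poisson), so Lemma~\ref{lem:I} yields
\[
\mathbb{P}\!\bigl[S_2\le(1-q)k\tilde M_{\ell+1}\bigr]\;\le\;\exp\!\bigl(-(\tilde N_{\ell+1}-s)\,I_\Lambda(z)\,(1+o(1))\bigr).
\]
Combined with the Stirling bound $\binom{\tilde N_{\ell+1}}{s}\le e^{\tilde N_{\ell+1}H(\beta)+O(\log n)}$ and with $\tilde N_{\ell+1},\tilde M_{\ell+1}$ replaced by their deterministic values from~\eqref{eq:N2M2} and~\eqref{eq:coresize} (all errors being $O(\delta^3 n)$ in the exponent), this produces a bound of exactly the claimed shape but with $I_\Lambda$ in place of $I_{\xi^*}$.

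Third, I pass from $I_\Lambda$ to $I_{\xi^*}$ by Taylor expansion about $\Lambda=\xi^*$. Differentiating~\eqref{Iz} and using $\partial Q(x,\ell+1)/\partial x=\mathbb{P}[\mathrm{Po}(x)=\ell]$ yields
\[
\frac{\partial I_\Lambda(z)}{\partial\Lambda}\bigg|_{\Lambda=\xi^*}\;=\;1-\frac{z}{\xi^*}+\frac{\mathbb{P}[\mathrm{Po}(\xi^*)=\ell]}{Q(\xi^*,\ell+1)}\;=\;\frac{k\ell-z}{\xi^*},
\]
where the second identity comes from $k\ell/\xi^*=Q(\xi^*,\ell)/Q(\xi^*,\ell+1)=1+\mathbb{P}[\mathrm{Po}(\xi^*)=\ell]/Q(\xi^*,\ell+1)$, which is~\eqref{eq:xiast}. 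Since $\Lambda=\xi^*-\delta+O(\delta^3)$ and $z\le k\ell$ (because $q\ge\beta$), Taylor's theorem gives $I_{\xi^*}(z)-I_\Lambda(z)\le\delta(k\ell-z)/\xi^*+O(\delta^2)$. Multiplying by $(1-\beta)n_{\ell+1}$ and using $1-\beta<0.4$ from~\eqref{eq:betaq} bounds the discrepancy by $0.4(k\ell/\xi^*)n_{\ell+1}\delta+O(\delta^2n)$, matching the statement.

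The main obstacle is the clean handling of the conditioning on $\CE$: one needs the local-CLT lower bound $\mathbb{P}[S_1+S_2=k\tilde M_{\ell+1}]=\Omega(\tilde N_{\ell+1}^{-1/2})$, which is routine since the truncated Poisson is integer-valued with span $1$ and has finite variance, and one must verify that~\eqref{eq:betaq} indeed places $z$ below $\mu$ (relying on the slack $\Theta(\delta)$ in the lower bound $q\ge\beta(1+\Theta(\delta))$ outweighing the $e_{k,\ell}\delta$ shift of $\mu$ below $k\ell$). Once these are in place, Stirling, Lemma~\ref{lem:I}, and the first-order expansion combine immediately to give the advertised bound.
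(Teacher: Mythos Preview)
Your approach is essentially the paper's: exchangeability reduces to a binomial coefficient times a tail probability for the complement sum, Lemma~\ref{lem:I} bounds the latter, and a Taylor expansion in the Poisson parameter converts $I_\Lambda$ into $I_{\xi^*}$. Your local-CLT treatment of the conditioning is a cosmetic variation; the paper simply uses $\Pr{\CE}=1-n^{-\omega(1)}$ to pass from conditional to unconditional probabilities.

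There is, however, a genuine gap. When you say the intermediate bound has ``exactly the claimed shape but with $I_\Lambda$ in place of $I_{\xi^*}$'', you are implicitly evaluating the rate function at $z_0:=k\ell(1-q)/(1-\beta)$. But your $z=(1-q)k\tilde M_{\ell+1}/(\tilde N_{\ell+1}-s)$, even after replacing $\tilde M_{\ell+1},\tilde N_{\ell+1}$ by $m_{\ell+1},n_{\ell+1}$, equals $k\ell(1-e_{k,\ell}\delta+\Theta(\delta^2))\cdot\tfrac{1-q}{1-\beta}$ by~\eqref{eq:coresize}; this differs from $z_0$ by $\Theta(\delta)$, not $O(\delta^3)$. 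So the assertion ``all errors being $O(\delta^3 n)$ in the exponent'' is false at this step, and your single Taylor expansion in $\Lambda$ does not by itself produce the stated bound.

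The paper closes this gap with a second Taylor expansion, this time in $z$ around $z_0$: since $\partial I_{\xi^*}/\partial z=\ln T_z-\ln\xi^*$, and $z_0<k\ell$ forces $T_{z_0}<\xi^*$, the first-order correction $\delta\, e_{k,\ell}\, z_0\,\ln(\xi^*/T_{z_0})$ is \emph{positive}, so $I_{\xi^*}(z)\ge I_{\xi^*}(z_0)-O(\delta^2)$ and the extra term can be dropped. Equivalently, one may argue directly that $I_{\xi^*}$ is decreasing on $[\ell+1,k\ell]$ and $z<z_0<k\ell$. Either way, this sign check is a necessary ingredient that your proposal omits.
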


\begin{proof}
 Let $t = \lfloor\beta \tilde{N}_{\ell+1}\rfloor$. Conditional on $\CE$ there are ${\tilde{N}_{\ell+1} \choose t} = e^{n_{\ell+1}H(\beta) +
O(\delta^3 n)}$ ways to select a set with~$t$ vertices. We shall next calculate the probability that one of them has the claimed property, and
the statement will follow from the linearity of expectation. Let~$U$ be a fixed subset of the vertex set of~$\C$ that has size~$t$.  We label
the vertices as $1, \ldots,\tilde{N}_{\ell+1}$ so that the vertices which are not in $U$ are indexed from $t+1$ to $\tilde{N}_{\ell+1}$. Let the random
variable $d_i$ denote the degree of vertex $i$.  We recall that $d_1,d_2,\ldots, d_{\tilde{N}_{\ell+1}} $ are i.i.d.\ $(\ell +1)$-truncated
Poisson variables with parameter~$\Lambda = \Lambda_{c,k,\la}=\xi \pm \delta^3$ and mean $\mu_\Lambda=   \Lambda {Q(\Lambda,\la) \over 
Q(\Lambda,\la +1)}.$ By Taylor's expansion of $\mu_{\lambda}$ around $\xi$ we obtain 
\[\mu_{\Lambda} = \xi{Q(\xi,\ell)\over Q(\xi,\ell +1)} \pm \Theta(\delta^3).\]

We will calculate the probability of the event $\sum_{i=1}^t d_i=q k\tilde{M}_{\ell+1}$ conditional on $\mathcal{E}$.  This is equivalent to
 calculating the probability of the event  $\sum_{i=t}^ {\tilde{N}_{\ell+1}}d_i=k(1-q)\tilde{M}_{\ell+1}$ conditional on $\mathcal{E}$ which by using \eqref{eq:nm_l+1} is same as the event 
 \[\sum_{i=t+1}^ {\tilde{N}_{\ell+1}}{d_i\over{\tilde{N}_{\ell+1}-t}} = \xi {Q(\xi,\la) \over Q(\xi,\la+1) } \cdot  {1-q \over 1-\beta} \pm \Theta(\delta^3).\]
Let us abbreviate $z=  \xi {Q(\xi,\la) \over Q(\xi,\la+1) } \cdot  {1-q \over 1-\beta} \pm \Theta(\delta^3)$. 
Using the lower bound of $q$ from~\eqref{eq:betaq} we obtain \[
z-\mu_{\Lambda} = \xi {Q(\xi,\la) \over Q(\xi,\la+1)}\cdot {\beta \over 1-\beta} \Theta(\delta) \pm  \Theta(\delta^3) >0.\]
As $I_\Lambda(x)$ is a non-negative convex function and $I_\Lambda(\mu_\Lambda)=0$,  $I_{\Lambda}(x)$ is a decreasing function for 
$x <\mu_\Lambda$. Therefore, by Lemma~\ref{lem:I} 
\[\Pr{\sum_{i=t+1}^ {\tilde{N}_{\ell+1}}{d_i}=z(\tilde{N}_{\ell+1}-t)~|~ \CE}= \exp {(-n_{\la+1}(1-\beta)\cdot I_\Lambda(z) (1+o(1)) )} \]
and\[ I_{\Lambda}(z)=z(\ln T_z-\ln \Lambda)- T_z + \Lambda- \ln Q(T_z,\ell+1)+\ln Q(\Lambda,\ell+1), \]
 where $T_z$ is the unique solution of $z= T_z \cdot {Q(T_z,\la) \over Q(T_z,\la+1)}$.
Note that 
\[ {\partial{I_{\Lambda}(z) }\over  \partial {\Lambda} }= - {z\over \Lambda} +1 + {{e^{-\Lambda}\Lambda^\la \over \la! }\over Q(\Lambda,  \la+1)}  = - {z\over \Lambda} + {Q(\Lambda, \la)\over Q(\Lambda, \la+1)} =  {\mu_\Lambda -z \over \Lambda}.\]
But recall that $\Lambda = \xi \pm \delta^3 =\xi^* -\delta \pm \delta^3$. So using Taylor's expansion around $\xi^*$ to write $I_\Lambda(z)$ in terms of 
$I_{\xi^*} (z)$  we obtain
\begin{align*}
I_\Lambda(z) =  &I_{\xi^*}(z) -\left(  {\mu_{\xi^*} -z \over \xi^*}\right) (\delta \pm \delta^3) \pm O(\delta^2)
 = I_{\xi^*}(z) - {\mu_{\xi^*} \over \xi^* } \cdot {q-\beta \over 1-\beta}~\delta \pm O(\delta^2).
 \end{align*}
The last equality holds as $z= \mu_{\xi^*} {1-q\over 1-\beta} (1 - e_{k,\ell} \delta + \Theta(\delta^2))$. 
Since $\beta > 0.6$ we have $q-\beta <0.4$. Also $\mu_{\xi^*}=k\la$.
 Therefore,
 \begin{equation} \label{eq:ILambda}
  I_\Lambda(z) \geq I_{\xi^*}(z) - {k\la\over \xi^*}\cdot{0.4 \over 1-\beta}~\delta  - \pm O(\delta^2).
  \end{equation}
We will now approximate $I_{\xi^*}(z)$ in terms of $I_{\xi^*}\left(k\la {1-q\over 1-\beta}\right).$ Note that
 \[ {\partial{I_{\xi^*}(z) }\over  \partial {z} }= \ln T_z -\ln \xi^*.\]
By Taylor's expansion of $I_\xi^*(z)$ around $z_0:=k\la{1-q\over 1-\beta}$ we obtain 
\begin{equation} \label{eq:TaylorII}
I_{\xi^*}(z) = I_\xi^*\left(k\la{1-q\over 1-\beta}\right)+\delta \cdot e_{k,\la} \left(k\la{1-q\over 1-\beta}\right)\left(\ln {\xi^*\over
T_{z_0}} \right)\pm O(\delta^2).\end{equation}
By Claim~\ref{clm:incXi} the function $\mu_{t} $ is increasing with respect to $t$. This implies that $T_{z_0} < \xi^*$ as 
$z_0<k\la$, whereby $\ln {\xi^*\over T_{z_0}} > 0$. Also recall that $e_{k,\ell}$ denotes the value of the partial derivative of 
${1\over k\la}\cdot \frac{tQ(t,\ell)}{ Q(t,\ell+1)}$ with respect to $t$ at $t = \xi^\ast$. Again, Claim~\ref{clm:incXi} implies that this 
is positive.  We  therefore obtain
\begin{equation} \label{eq:IXi}I_{\xi^*}(z) > I_\xi^*\left(k\la{1-q\over 1-\beta}\right) - \Theta (\delta^2) \end{equation}
Combining \eqref{eq:ILambda}, (\ref{eq:TaylorII}) and  \eqref{eq:IXi} we obtain
\[I_\Lambda(z) > I_\xi^*\left(k\la{1-q\over 1-\beta}\right) - {k\la\over \xi^*}\cdot{0.4 \over 1-\beta}~\delta - O(\delta^2) .\]
The proof is then completed by using the fact that $\Pr{\mathcal{E}} = 1 - n^{-\omega(1)}.$ 
\end{proof}
Lemma~\ref{lem:Probs} along with Lemmas~\ref{lem:BU} and~\ref{lem:ExpXdt} yield the following estimate.
\begin{lemma}\label{lem:final_prob}
There exists $\delta_0>0$ such that whenever $\delta < \delta_0$
\begin{align*}
\begin{split}
\Pr{X_{q,\beta}^{(\ell )} >0}<~& O(n^{-3}) + F(\beta,q;\ell), 
\end{split}
\end{align*}
where
\begin{align*}
\begin{split}
F(\beta,q;\ell) = ~&(2^k-1)^{m_{\ell+1}-\ell\beta n_{\ell+1}}\\
&~~~~~~~\cdot \exp\bigg((\la +1)n_{\ell+1}H(\beta)-km_{\ell+1}H(q) -n_{\ell+1}(1-\beta)I_{\xi^*}\left( \frac{k\ell(1-q)}{1-\beta}\right) \\
&~~~~~~~~~~~~~~~~~~~~~~~~~~~~~~~ +0.4 \cdot{ k\la \over \xi^*}\cdot n_{\la+1}\cdot \delta + O(\delta^2 n)\bigg),
\end{split}
\end{align*}
\end{lemma}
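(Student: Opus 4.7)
The plan is to obtain the claimed estimate by directly combining Lemma~\ref{lem:Probs} and Lemma~\ref{lem:ExpXdt}; this step is essentially a bookkeeping consolidation rather than a genuinely new argument. First I will recall the bound from Lemma~\ref{lem:Probs},
\[
\Pr{X_{q,\beta}^{(\ell)}>0}\le O(n^{-3}) + \E{X_{q,\beta}\mid\CE}\,(2^k-1)^{m_{\ell+1}-\ell\beta n_{\ell+1}}\exp\!\bigl(\ell\, n_{\ell+1}H(\beta)-km_{\ell+1}H(q)+O(\delta^3 n)\bigr),
\]
and then substitute into the factor $\E{X_{q,\beta}\mid\CE}$ the exponential upper bound supplied by Lemma~\ref{lem:ExpXdt}.

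After substitution, the prefactor $(2^k-1)^{m_{\ell+1}-\ell\beta n_{\ell+1}}$ is reproduced verbatim, while the exponents add. The entropy contribution $\ell\,n_{\ell+1}H(\beta)$ coming from Lemma~\ref{lem:Probs} combines with the $n_{\ell+1}H(\beta)$ coming from Lemma~\ref{lem:ExpXdt} to produce the coefficient $(\ell+1)$ in front of $n_{\ell+1}H(\beta)$ that appears in the statement of $F(\beta,q;\ell)$. The term $-km_{\ell+1}H(q)$ is inherited from Lemma~\ref{lem:Probs}, whereas the rate-function term $-n_{\ell+1}(1-\beta)I_{\xi^\ast}\!\bigl(\tfrac{k\ell(1-q)}{1-\beta}\bigr)$ and the linear-in-$\delta$ correction $0.4\cdot\tfrac{k\ell}{\xi^\ast}\,n_{\ell+1}\,\delta$ are inherited from Lemma~\ref{lem:ExpXdt}. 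The two residual error terms of the form $O(\delta^3 n)$ produced along the way are absorbed into the single $O(\delta^2 n)$ already present in $F(\beta,q;\ell)$, since $\delta^3 n = o(\delta^2 n)$ for all small enough $\delta$.

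No real obstacle is expected. The only point that deserves an explicit check is that the hypothesis underlying the application of Lemma~\ref{lem:ExpXdt}, namely $z>\mu_\Lambda$ (with $z=\xi\,\tfrac{Q(\xi,\ell)}{Q(\xi,\ell+1)}\cdot\tfrac{1-q}{1-\beta}\pm\Theta(\delta^3)$), is guaranteed by the standing range $q\ge (1+\Theta(\delta))\beta$ from~\eqref{eq:betaq}; this is exactly the inequality that was used inside the proof of Lemma~\ref{lem:ExpXdt} itself, so the bound is applicable throughout the regime of interest. The purpose of the lemma is to package all the exponential factors into the single explicit expression $F(\beta,q;\ell)$, whose extremal behaviour over the admissible region~\eqref{eq:betaq} will then be analysed to complete the proof of Theorem~\ref{thm:diff}.
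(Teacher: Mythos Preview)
Your proposal is correct and matches the paper's own treatment: the paper does not give a separate proof of this lemma but simply states that it follows by combining Lemma~\ref{lem:Probs} with Lemma~\ref{lem:ExpXdt} (together with Lemma~\ref{lem:BU}, which is already absorbed into Lemma~\ref{lem:Probs}). Your bookkeeping of how the exponents add---$\ell\,n_{\ell+1}H(\beta)$ plus $n_{\ell+1}H(\beta)$ giving the $(\ell+1)n_{\ell+1}H(\beta)$, and the $O(\delta^3 n)$ being swallowed by the $O(\delta^2 n)$---is exactly what is needed.
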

We can now complete the proof of Lemma~\ref{lem:Difficult} by showing the above probability is $o(1)$. We proceed as follows. Let
us abbreviate
\begin{align*}
f(\beta, q) :=&~
(\ell+1)H(\beta) + \ell  \cdot(1-\beta)\ln (2^k  -1) 
-k \ell \cdot H\left(q\right) - (1-\beta) I_{\xi^*}\left({k\ell(1-q) \over 1-\beta}\right).
\end{align*} 
By using Lemma~\ref{lem:final_prob} we infer that
\begin{align*} 
	\frac1{n_{\ell+1}}\ln F(\beta,q;\ell) \le &~f(\beta, q)+ e_{k,\ell}\cdot \delta \cdot k\ell \left(H\left(q\right)- { \ln(2^k-1)\over k} + {0.4 \over e_{k,\la }\cdot \xi^*} \right)  +O(\delta^2).
\end{align*}
By Claim~\ref{clm:ekl} $e_{k,\la } > 0.77/ \xi^*$. So
\begin{align} \label{eq:FinalBound}
	\frac1{n_{\ell+1}}\ln F(\beta,q;\ell) \le &~f(\beta, q)+ e_{k,\ell}\cdot \delta \cdot k\ell \left(H\left(q\right)- { \ln(2^k-1)\over k} + 0.52 \right)  +O(\delta^2).
\end{align}
We will now prove the main tool for the proof of Theorem~\ref{thm:diff}.
\begin{lemma} \label{lem:Difficult}
There exists $\hat{\delta} = \hat{\delta} (k,\ell)>0$ such that if $\delta <\hat{\delta}$ the following holds. 
With probability~$1-n^{-\omega(1)}$, for any~$0.6 < \beta \leq 1-e_{k,\ell}\delta/2$ and 
$\beta< q \leq 1 - \frac{(\ell +1)(1-\beta)}{k\ell}$, we have~$X_{q,\beta}^{(\ell)} = 0$. 
\end{lemma}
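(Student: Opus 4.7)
My plan is to reduce the uniform statement to a pointwise bound of the form $\Pr{X_{q,\beta}^{(\ell)} > 0} = n^{-\omega(1)}$ for each admissible pair $(\beta, q)$, and then to take a union bound over a polynomial-sized grid. Since $\lfloor \beta \tilde{N}_{\ell+1}\rfloor$ and $\lfloor q k \tilde{M}_{\ell+1} \rfloor$ take only $O(n^2)$ distinct values, only $O(n^2)$ discrete pairs need to be considered. The bound of Lemma~\ref{lem:final_prob} carries an additive $O(n^{-3})$ error arising from the Markov truncation $X_{q,\beta} \le n^3 \E{X_{q,\beta} ~|~ \CE}$ in the proof of Lemma~\ref{lem:Probs}; that threshold is arbitrary, and I would optimize it to $e^{\Theta(n)}$ so that the additive error becomes $e^{-\Theta(n)}$ and the pointwise concentration is super-polynomial, easily surviving the union bound.

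The heart of the argument is to establish that
\begin{equation*}
f(\beta, q) := (\ell+1) H(\beta) + \ell(1-\beta)\ln(2^k - 1) - k\ell H(q) - (1-\beta) I_{\xi^*}\bigl( k\ell(1-q)/(1-\beta) \bigr)
\end{equation*}
is uniformly negative on the admissible rectangle by at least a constant multiple of $\delta$, large enough to beat the positive linear-in-$\delta$ correction $0.4 \cdot k\ell \cdot e_{k,\ell} \cdot \delta/\xi^*$ appearing in \eqref{eq:FinalBound}. I would split the analysis into two regimes. In the near-boundary regime $\epsilon := 1 - \beta \to 0$, parametrizing $q = 1 - \alpha \epsilon$ with $\alpha \in [(\ell+1)/(k\ell),\, 1 - \Theta(\delta)]$ and Taylor-expanding the entropies and $I_{\xi^*}$ about $\epsilon = 0$ yields
\begin{equation*}
f(\beta, q) = \epsilon \ln \epsilon \cdot \bigl( k\ell\alpha - (\ell+1) \bigr) + O(\epsilon),
\end{equation*}
which, combined with $\epsilon \ge e_{k,\ell} \delta/2$, gives $f \le -\Omega(\delta \ln(1/\delta))$ provided $\alpha$ is bounded away from $(\ell+1)/(k\ell)$. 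In the complementary bulk regime where $\epsilon$ is bounded below by a positive constant, $f$ is continuous on a compact set and the task is to show $\sup f < 0$: any interior stationary point would, via the first-order conditions $\partial_\beta f = \partial_q f = 0$ together with the identity $I_{\xi^*}'(z) = \ln T_z - \ln \xi^*$ from the proof of Lemma~\ref{lem:I}, force a second solution of the defining equation for $\xi^*$, which is ruled out by the strict monotonicity in Claim~\ref{clm:incXi}; the limiting value $f \to 0$ as $(\beta, q) \to (1, 1)$ is precisely the criticality condition~\eqref{eq:xiast}, so the supremum is indeed strictly negative on the bulk.

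The main obstacle is the sub-regime $\alpha \to (\ell+1)/(k\ell)$ in the near-boundary analysis, where the leading $\epsilon \ln \epsilon$ contribution vanishes and a strictly negative sign must be extracted from the $O(\epsilon)$ constant. This requires an explicit manipulation of $\ln Q(\xi^*,\ell+1)$, $\xi^*$ and $\ln(2^k-1)$ via \eqref{eq:xiast} and the closed-form $I_{\xi^*}(\ell+1) = \ln((\ell+1)!) - (\ell+1)\ln \xi^* + \xi^* + \ln Q(\xi^*,\ell+1)$ derived inside the proof of Lemma~\ref{lem:I}; one expects the criticality identity to cancel just enough terms to leave a negative residual. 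Once the uniform bound $f(\beta, q) \le -c(k,\ell)\,\delta$ is proved on both regimes, plugging it into \eqref{eq:FinalBound} gives $F(\beta, q; \ell) = e^{-\Theta(\delta n)}$, and the grid-plus-union-bound step described above completes the proof.
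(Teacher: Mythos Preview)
Your overall architecture is sound: reduce to a pointwise bound, absorb the additive error by replacing $n^3$ with $e^{\Theta(n)}$ in the Markov truncation (a valid observation), and take a union bound over the $O(n^2)$ discrete pairs. The heart of the matter is indeed the uniform bound $f(\beta,q)\le -C\eps$, and your decomposition into near-boundary and bulk regimes is reasonable in outline.

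The genuine gap is in your bulk argument. You assert that an interior stationary point of $f$, via the two first-order conditions and $I_{\xi^*}'(z)=\ln T_z-\ln\xi^*$, would ``force a second solution of the defining equation for $\xi^*$''. This is not substantiated and, as far as I can tell, is not true. Setting $\partial_q f=0$ only gives $T_z=\xi^*(1-q)/q$ together with $z=T_zQ(T_z,\ell)/Q(T_z,\ell+1)$; combining with $\partial_\beta f=0$ yields a transcendental relation involving $\beta$, $q$, $T_z$, $Q(T_z,\ell+1)$ and $\ln(2^k-1)$ that has no obvious connection to a second root of $k\ell=\xi Q(\xi,\ell)/Q(\xi,\ell+1)$. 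The paper does \emph{not} rule out critical points: it computes them. For each fixed $\beta$ it solves $\partial_q f=0$, derives a closed-form for $f(\beta,q_0)$ at any such $q_0$ (Claim~\ref{claim:fCrit}), then shows this closed-form is increasing in $q_0$ on the admissible range, so its maximum occurs at $q_0=1-(\ell+1)(1-\beta)/(k\ell)$, giving a one-variable function $h(\beta)$ that is bounded by hand (Claim~\ref{clm:h}). The two $q$-endpoints are handled separately (Claims~\ref{clm:fbb} and~\ref{clm:fb1-..}). Your compactness-plus-no-critical-point route would need a different mechanism to exclude interior maxima, and even then you would still owe a boundary analysis; simply observing $f\to 0$ at $(1,1)$ does not by itself deliver $\sup f<0$ on the truncated box.

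Your near-boundary analysis is closer to the paper's Claim~\ref{clm:fb1-..}. The Taylor expansion $f=\eps\ln\eps\,(k\ell\alpha-(\ell+1))+O(\eps)$ is correct, but the sub-regime $\alpha\to(\ell+1)/(k\ell)$ that you flag as the ``main obstacle'' is precisely the content of that claim: one must show that the $O(\eps)$ constant is strictly negative there, and this needs the explicit identity $e^{\xi^*}Q(\xi^*,\ell+1)=(\xi^*)^{\ell+1}/(\ell!(k\ell-\xi^*))$ together with the quantitative bound $k\ell-\xi^*\le e^{-k\ell}(k\ell)^{\ell+1}/\ell!\cdot(1+o(1))$ from Claim~\ref{clm:xi}. ``One expects the criticality identity to cancel just enough terms'' is not yet a proof; the paper carries this calculation through by showing $\partial_\beta f(\beta,1-(\ell+1)(1-\beta)/(k\ell))>0$ for all $\beta\in(0.6,1)$.
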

\begin{proof}
To deduce this lemma, we first bound $f(\beta, q)$.
\begin{claim}
\label{cl:techn}
For any $k\geq 3$ and $\ell \geq 2$, there exist $\eps_0, C > 0$ such that for any $\eps < \eps_0$ 
the following holds. For any $0.6 < \beta \le 1-\eps$, and $q$ as in 
Lemma~\ref{lem:Difficult}, we have 
\[
	f(\beta, q) \le -C \eps.
\]
\end{claim}
The proof of Lemma~\ref{lem:Difficult} will be complete as long as we show that for $\delta$ small enough the rest of the right-hand side 
of (\ref{eq:FinalBound}) is negative. Firstly, let $\delta_1 = \delta_1 (k,\ell)$ be such that for any $\delta < \delta_1$ we have 
$1- e_{k,\ell} \delta /2 > 0.999$. We will consider a case distinction according to the value of $q$.
 
If $q < 0.99$, then $\beta < 0.99$ as well, and Claim~\ref{cl:techn} implies that $f(\beta , q) \le - 0.01 \cdot C$, where $C>0$ depends on $k$ 
and $\ell$. Then let $\delta_2 =\delta_2 (k,\ell) >0$ be such that for $\delta < \delta_2$, we have 
$$e_{k,\ell}\cdot \delta \cdot k\ell \left(H\left( 0.6 \right)- { \ln(2^k-1)\over k} + 0.52 \right) + O(\delta^2) < 0.005 \cdot C.$$  
Here recall that $\beta \geq 0.6$.  So for any $\delta < \min \{\delta_0, \delta_1, \delta_2 \}$, (\ref{eq:FinalBound}) implies that 
$$ \frac1{n_{\ell+1}}\ln F(\beta,q;\ell) \le - 0.005 \cdot C.$$
Assume now that $q \geq 0.99$. The monotonicity of the entropy function implies that
\[
	H\left(q\right) - {\ln(2^k-1)\over k}+ 0.52
	\le H(0.99) - {\ln(2^k-1) \over k} +0.52 \stackrel{k\geq 3}< -0.072.
\]
Now with $0.6 \le \beta \le 1 - e_{k,\la}\cdot \delta/2$ as in Lemma~\ref{lem:Difficult}, the bound of Claim~\ref{cl:techn} substituted in 
(\ref{eq:FinalBound}) yields
\[
	\frac1{n_{\ell+1}}\ln F(\beta,q;\ell ) \le -C e_{k,\la}\cdot \delta/2 + O(\delta^2).
\]
In turn, this is at most  $-C e_{k,\la}\cdot \delta/4$, if $\delta < \delta_3 = \delta_3 (k,\ell)$. The above cases imply that if $\delta < \min \{\delta_0,\delta_1, \delta_2, \delta_3 \} =:\hat{\delta}$, then
with probability $1-e^{-\Omega( n_{\la+1})}-O(n^{-3})$ we have $X_{q,\beta}^{(\ell)} = 0$, for all $\beta$ and $q$ as 
in Lemma~\ref{lem:Difficult}.
\end{proof}

With the above result at hand we can finally complete the proof of Theorem~\ref{thm:diff}.
\begin{proof}[Proof of Theorem~\ref{thm:diff}]
Firstly, note that it is enough to argue that with probability $1 - o(1)$ the $(\ell+1)$-core does not contain any maximal $\ell$-dense subset; this follows from the discussion after Lemma~\ref{lem:smallU1}, which we do not repeat here. Moreover, by Theorem~\ref{cor:Contiguity} and Proposition~\ref{prop:ModelsEquiv}, it is enough to consider the~$(\ell+1)$-core~$C$ of~$\widetilde{H}_{n,p,k}$, where $p = ck/\binom{n-1}{k-1}$.

The proof is completed by applying Lemma~\ref{lem:Difficult}, as we can choose~$\delta > 0$ as small as we please.
\end{proof}
The rest of the paper is devoted to the proof of Claim~\ref{cl:techn} and contains a detailed analysis of the function $f$. We proceed as follows. We will fix arbitrarily a $\beta$ and 
we will consider $f(\beta,q)$ solely as a function of $q$. Then we will show that if $q_0 = q_0 (\beta)$ is a point 
where the partial derivative of $f$ with respect to $\beta$ vanishes, then $f(\beta, q_0) \le -C_1\eps$. 
Additionally, we will show that this holds for $f(\beta, \beta)$ and $f\left(\beta, 1 - \frac{(\ell +1)(1-\beta)}{k\ell}\right)$.  

\subsubsection*{Bounding $f(\beta,q)$ at its critical points} 

Let $\beta$ be fixed. We will evaluate $f(\beta,q)$ at a point where the partial derivative with respect to $q$ vanishes. 
To calculate the partial derivative with respect to $q$, we first need to determine the derivative of $I(z)$ with respect to $z$. 
According to Lemma~\ref{lem:I}, 
 $I_{\xi^*}(z) = z\left(\ln T_z - \ln \xi^* \right) - \ln Q(T_z,\ell+1)-T_z + \ln Q(\xi^*,\ell+1) +\xi^*$, where $T_z$ is the unique solution of $z= T_z\cdot \frac{Q(T_z,\ell)}{Q(T_z,\ell+1)}$.
Differentiating this with respect to $z$ we obtain
\begin{equation} \label{eq:IDer}
\begin{split}
I'_{\xi^*}(z)  =&\ln T_z - \ln \xi^*+  {z\over T_z}~{d T_z\over dz} -{d T_z\over dz}
- {Q(T_z,\ell)-Q(T_z,\ell+1)\over Q(T_z,\ell+1)}~{d T_z\over dz}\\
=&\ln T_z - \ln \xi^* +  {z\over T_z}~{d T_z\over dz}-{Q(T_z,\ell)\over Q(T_z,\ell+1)}~{d T_z\over dz}\\
=&\ln T_z - \ln \xi^*. 
\end{split}
\end{equation} 
However, in the differentiation of $f$ we need to differentiate $I_{\xi^*}( k\ell(1-q) / (1-\beta))$ with respect to $q$. Using 
(\ref{eq:IDer}), we obtain
$${\partial I_{\xi^*} \left({k\ell(1-q)\over 1-\beta} \right) \over \partial q} = -{k \ell \over 1-\beta}~\left(\ln H_q - \ln \xi^* \right),$$
where $H_q$ is the unique solution of the equation 
\begin{equation*} 
 {k\ell(1-q) \over 1-\beta} = {H_q \cdot Q(H_q,\ell) \over Q(H_q,\ell+1)}.
\end{equation*}
Observe that the choice of the range of $q$ is such that the left-hand side of the above equation is at least $\ell+1$. So, $H_q$ is well-defined.  Also, an elementary calculation shows that the derivative of the entropy function, $H'(q)$ is given by  $\ln \left( 1-q \over q \right)$. All the above facts together yield the derivative of $f(\beta,q)$ with respect to~$q$
\begin{equation*}
{\partial f(\beta, q)\over \partial q} = k\ell\left( -\ln \left({1-q\over q} \right) + 
\ln {H_q \over  \xi^*}  \right).
\end{equation*}
Therefore, if $q_0$ is a critical point, that is, if $\left. {\partial f(\beta, q)\over \partial q}\right|_{q=q_0} =0$, then with $T_0 = H_{q_0}$, $q_0$ satisfies 
\begin{equation}\label{eq:z_0Func}
T_0 = \xi^* \frac{1-q_0}{q_0} \quad and \quad {k\ell(1-q_0) \over 1-\beta} = {T_0 Q(T_0,\ell) \over Q(T_0,\ell+1)}.
\end{equation}
At this point, we have the main tool that will allow us to evaluate $f(\beta, q_0)$. 
We will use (\ref{eq:z_0Func}) in order to eliminate $T_0$ and express $f(\beta, q_0)$ solely as a function of $q_0$. 
\begin{claim} \label{claim:fCrit}
For any given $\beta \in (0.6, 1)$, if $q_0 = q_0(\beta)$ satisfies \eqref{eq:z_0Func}, then 
\begin{equation}
\label{eq:fCrit}
\begin{split}
f(\beta,q_0)&=\ln \bigg(e^{(\ell+1)H(\beta)} q_0^{k\ell}\left({(2^k -1)\left( 1 - q_0 \right)\over q_0}\right)^{\ell(1-\beta)}\cdot\left({(1-\beta)(k\ell-\xi^*) \over k\ell q_0-\xi^*(1-\beta)}\right)^{1-\beta}\bigg).
\end{split}
\end{equation}
\end{claim}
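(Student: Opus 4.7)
The plan is a direct computation: I would start from the definition of $f(\beta,q_0)$, insert the formula for $I_{\xi^\ast}$ from Lemma~\ref{lem:I}, and use the two critical-point relations \eqref{eq:z_0Func} to eliminate $T_0 := H_{q_0}$ in favour of $q_0$. Setting $z_0 = k\ell(1-q_0)/(1-\beta)$, the relation $T_0/\xi^\ast = (1-q_0)/q_0$ turns $\ln T_0 - \ln \xi^\ast$ into $\ln((1-q_0)/q_0)$, and since $(1-\beta)z_0 = k\ell(1-q_0)$, Lemma~\ref{lem:I} yields
\[
-(1-\beta)I_{\xi^\ast}(z_0) = -k\ell(1-q_0)\ln\tfrac{1-q_0}{q_0} + (1-\beta)\bigl(T_0 - \xi^\ast + \ln Q(T_0,\ell+1) - \ln Q(\xi^\ast,\ell+1)\bigr).
\]

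The next step is routine: I would combine $-k\ell H(q_0)$ in $f$ with the $-k\ell(1-q_0)\ln((1-q_0)/q_0)$ just produced. Expanding $H(q_0) = -q_0\ln q_0 - (1-q_0)\ln(1-q_0)$, the $(1-q_0)\ln(1-q_0)$ pieces cancel, and what remains collapses to $k\ell\ln q_0$, which accounts for the $q_0^{k\ell}$ factor in the claim. The term $\ell(1-\beta)\ln(2^k-1)$ in $f$ is already in the right form and supplies the $(2^k-1)^{\ell(1-\beta)}$ factor.

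The core of the argument is the identity
\[
e^{T_0-\xi^\ast}\,\frac{Q(T_0,\ell+1)}{Q(\xi^\ast,\ell+1)} = \left(\frac{1-q_0}{q_0}\right)^{\!\ell}\cdot\frac{(1-\beta)(k\ell-\xi^\ast)}{k\ell q_0-\xi^\ast(1-\beta)},
\]
which absorbs all the remaining $Q(\cdot,\ell+1)$ and exponential terms and supplies the final factor in the claim. To prove it I would use the elementary Poisson identity $Q(x,\ell) - Q(x,\ell+1) = e^{-x}x^\ell/\ell!$, rewritten as
\[
\frac{e^{-x}\,x^{\ell+1}/\ell!}{Q(x,\ell+1)} = x\,\frac{Q(x,\ell)}{Q(x,\ell+1)} - x.
\]
Evaluating this at $x = T_0$ (where the right-hand side equals $z_0 - T_0$, by the second half of \eqref{eq:z_0Func}) and at $x = \xi^\ast$ (where it equals $k\ell - \xi^\ast$, by \eqref{eq:kxi}) and dividing the two resulting equations gives
\[
e^{T_0-\xi^\ast}\,\frac{Q(T_0,\ell+1)\,(\xi^\ast)^{\ell+1}}{Q(\xi^\ast,\ell+1)\,T_0^{\ell+1}} = \frac{k\ell-\xi^\ast}{z_0 - T_0}.
\]
Substituting $T_0/\xi^\ast = (1-q_0)/q_0$ and the elementary computation $z_0 - T_0 = (1-q_0)(k\ell q_0 - \xi^\ast(1-\beta))/[q_0(1-\beta)]$ then produces exactly the target identity.

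I do not expect a real conceptual obstacle; the main difficulty will be bookkeeping, namely keeping the powers of $T_0/\xi^\ast$ and the $(1-\beta)$ weights aligned. The structural observation that makes the whole calculation succeed is that both $T_0$ and $\xi^\ast$ satisfy an equation of the same form $x\,Q(x,\ell)/Q(x,\ell+1) = (\text{constant})$, so it is precisely the ratio of the two corresponding instances of the Poisson identity that cancels the $Q(\cdot,\ell+1)$ and $e^{-x}$ factors and leaves the clean rational expression in the statement.
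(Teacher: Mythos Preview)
Your proposal is correct and follows essentially the same approach as the paper: both compute $-(1-\beta)I_{\xi^*}(z_0)-k\ell H(q_0)$ to obtain $k\ell\ln q_0$ plus a term $(1-\beta)\ln\bigl(e^{T_0}Q(T_0,\ell+1)/e^{\xi^*}Q(\xi^*,\ell+1)\bigr)$, and then use the relation $e^{x}Q(x,\ell+1)=x^{\ell+1}/\bigl(\ell!\,(xQ(x,\ell)/Q(x,\ell+1)-x)\bigr)$ at $x=T_0$ and $x=\xi^*$ together with the formula for $z_0-T_0$ to simplify. The only cosmetic difference is that the paper evaluates $e^{T_0}Q(T_0,\ell+1)$ and $e^{\xi^*}Q(\xi^*,\ell+1)$ separately before forming the ratio, whereas you take the ratio directly; the substance is identical.
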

\begin{proof}
Note that 
\begin{equation*}
\begin{split}
I \left({k\ell(1-q_0) \over 1-\beta}\right)  = &~{k\ell(1-q_0)  \over 1-\beta} \ln {T_0\over  \xi^*} 
+ \ln \left(e^{\xi^*}Q(\xi^*,\ell+1)\over e^{T_0}Q(T_0,\ell+1)\right)  \\ 
\stackrel{(\ref{eq:z_0Func})}{=}&
{k\ell(1-q_0)\over 1-\beta} \ln \left({1-q_0 \over q_0} \right)+ \ln \left(e^{\xi^*}Q(\xi^*,\ell+1)\over e^{T_0}Q(T_0,\ell+1)\right).
\end{split}
\end{equation*}
Therefore,
\begin{equation*}
\begin{split}
-(1-\beta) I\left({k\ell(1-q_0) \over 1-\beta}\right) = &
-k\ell(1-q_0)\ln \left({1-q_0 \over q_0} \right) + (1-\beta) \ln \left(\frac{e^{T_0}Q(T_0,\ell+1)}{e^{\xi^*}Q(\xi,\ell+1)} \right) \\
=&- k\ell(1-q_0) \ln \left(1 - q_0 \right)  + k \ell\ln \left(q_0 \right)- k \ell q_0 \ln \left(q_0 \right) \\
&+ (1-\beta) \ln \left(\frac{e^{T_0}Q(T_0,\ell+1)}{e^{\xi^*}Q(\xi,\ell+1)} \right). 
\end{split}
\end{equation*}
Also, the definition of the entropy function implies that
\begin{equation*}
\begin{split}
-k\ell H\left( q_0 \right) &= k \ell q_0 \ln \left(q_0 \right) 
+ k\ell(1-q_0) \ln \left(1-q_0 \right).
\end{split}
\end{equation*}
Thus 
\begin{equation} \label{eq:fInter}
\begin{split}
-(1-\beta) I \left({k\ell(1-q_0) \over 1-\beta}\right) -k \ell H\left(q_0 \right)= \ln \left(q_0^{ k \ell} \left(\frac{e^{T_0}Q(T_0,\ell+1)}{e^{\xi}Q(\xi^*,\ell+1)} \right)^{1-\beta}\right).
\end{split}
\end{equation}
Let $z_0:={k\ell(1-q_0) \over 1 -\beta}$. 
Now we will express $e^{T_0}Q(T_0,\ell+1)$ as a rational function of $T_0$ and $z_0$. Solving 
(\ref{eq:z_0Func}) with respect to $e^{T_0}Q(T_0,\ell+1)$ yields 
\begin{equation*}
\begin{split}
e^{T_0}Q(T_0,\ell+1)&=e^{T_0}{ T_0 Q(T_0,\ell)\over z_0}={e^{T_0} T_0 \over z_0}\left(Q(T_0,\ell+1)+ e^{-T_0}{{T_0}^{\ell} \over \ell!} \right).
\end{split}
 \end{equation*}
Therefore,
\begin{equation*}
\begin{split}
e^{T_0}Q(T_0,\ell+1)&= {{T_0}^{\ell} \over \ell!}\left( {z_0\over T_0}-1 \right)^{-1}.
\end{split}
\end{equation*}
Note that 
\begin{equation*}
\begin{split}
z_0 - T_0 =&~~{k\ell (1-q_0) \over 1-\beta} - {\xi^* (1-q_0) \over q_0} 
=~~{(1-q_0) (k\ell q_0 - \xi^*(1-\beta ) ) \over (1-\beta)q_0}.
\end{split}
\end{equation*}
Thus we obtain 
\begin{equation*}
\begin{split}
\ln (e^{T_0} Q(T_0,\ell+1))  = & \ln\left({{T_0}^{\ell+1} \over (z-T_0)\ell !}\right) \\
 \stackrel{(\ref{eq:z_0Func})}{=}&  \ln \left(\bigg({\xi^*(1-q_0)\over q_0}\right)^{\ell+1}\cdot{(1-\beta)q_0\over (1-q_0) (k\ell q_0 - \xi^*(1-\beta ) )\ell!}  \bigg)\\
=& \ln \bigg({({\xi^*})^{\ell+1}\over \ell !}\left({1-q_0\over q_0}\right)^{\ell}\cdot {1-\beta\over  k\ell q_0 - \xi^*(1-\beta ) }  \bigg).
 \end{split}
\end{equation*}
Also, by definition of $\xi^*$ we have $k={\xi^* Q(\xi^*,\ell) \over \ell Q(\xi^*,\ell+1)}$ which is equivalent to $k\ell=\xi^*\left(1+{e^{-\xi^*}({\xi^*})^{\ell}/\ell !\over Q(\xi^*,\ell+1)}\right)$ and implies $e^{\xi^*}Q(\xi^*,\ell+1)={({\xi^*})^{\ell+1}/\ell!\over k\ell-\xi^*}$.
Substituting this into (\ref{eq:fInter}) and adding the remaining terms, we obtain (\ref{eq:fCrit}).
\end{proof}
We will now treat~$q_0$ as a free variable lying in the interval where~$q$ lies into, and we will study $f(\beta, q_0)$ for a fixed~$\beta$ as a function of~$q_0$. In particular, we will show that for any fixed~$\beta$ in the domain of interest~$f(\beta, q_0)$ is increasing. Thereafter, we will evaluate~$f(\beta, q_0)$ at the largest possible value that~$q_0$ can take, which is~$1 - {(\ell+1)(1-\beta )\over k\ell}$, and show that this value is negative. 
\begin{claim}
For any $k \geq 3,\ell\geq 2$ and for any $\beta > 0.6$ we have 
$${\partial f(\beta, q_0)\over \partial q_0} >0.$$
\end{claim}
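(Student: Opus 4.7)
The plan is to differentiate the explicit formula \eqref{eq:fCrit} for $f(\beta, q_0)$ term by term in $q_0$ (keeping $\beta$ fixed). Taking the logarithm, the $q_0$-dependent part of $f(\beta,q_0)$ is
\[
	k\ell \ln q_0 + \ell(1-\beta)\bigl(\ln(1-q_0) - \ln q_0\bigr) - (1-\beta) \ln\bigl(k\ell\, q_0 - \xi^*(1-\beta)\bigr),
\]
since the terms $(\ell+1)H(\beta)$, $\ell(1-\beta)\ln(2^k-1)$ and $(1-\beta)\ln\bigl((1-\beta)(k\ell-\xi^*)\bigr)$ do not depend on $q_0$. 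Direct differentiation (and combining the two terms with $q_0$ in the denominator) yields
\[
	\frac{\partial f(\beta,q_0)}{\partial q_0} \;=\; \frac{\ell(k-1+\beta)}{q_0} \;-\; \frac{\ell(1-\beta)}{1-q_0} \;-\; \frac{k\ell(1-\beta)}{k\ell\, q_0 - \xi^*(1-\beta)}.
\]

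Next, I would verify that every denominator above is strictly positive. The first two denominators are clearly positive for $q_0 \in (0,1)$. For the third, the defining relation $k\ell = \xi^*\, Q(\xi^*,\ell)/Q(\xi^*,\ell+1)$ combined with $Q(\xi^*,\ell) > Q(\xi^*,\ell+1)$ yields $\xi^* < k\ell$, so $k\ell q_0 - \xi^*(1-\beta) > k\ell(q_0-(1-\beta)) > 0$ because $q_0 > \beta > 0.6 > 1-\beta$. This same inequality gives the useful upper bound
\[
	\frac{k\ell(1-\beta)}{k\ell\, q_0 - \xi^*(1-\beta)} \;<\; \frac{1-\beta}{q_0 - (1-\beta)},
\]
which eliminates the transcendental quantity $\xi^*$ from the estimate.

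After this reduction it remains to verify the elementary inequality
\[
	\frac{\ell(k-1+\beta)}{q_0} \;>\; \frac{\ell(1-\beta)}{1-q_0} \;+\; \frac{1-\beta}{q_0 - (1-\beta)}
\]
for all admissible $\beta$ and $q_0$, that is, $0.6 < \beta < 1 - e_{k,\ell}\delta/2$ and $\beta < q_0 \le 1 - (\ell+1)(1-\beta)/(k\ell)$. Using the upper endpoint of $q_0$ gives $1 - q_0 \ge (\ell+1)(1-\beta)/(k\ell)$, hence $\ell(1-\beta)/(1-q_0) \le k\ell^2/(\ell+1)$; the right-hand side is thus bounded by a quantity depending only on $k,\ell$. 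The left-hand side is at least $\ell(k-1+\beta) \ge \ell(k-0.4)$, while on the right the last summand is bounded by $(1-\beta)/(q_0-(1-\beta))$, which is small for $\beta$ near $1$ and controlled by $q_0 > \beta$ for $\beta$ near $0.6$. A short case split (the two boundary regimes $\beta \to 0.6^+$ and $\beta \to 1^-$, plus the interior) should close the argument for $k\ge 3,\,\ell\ge 2$.

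The main obstacle is the presence of $\xi^*$, defined only implicitly by a transcendental equation; replacing it with the crude bound $\xi^* < k\ell$ is both the natural move and sufficient, but one must still check that the resulting purely rational inequality holds uniformly for all $(\beta,q_0)$ in the specified range and for all $k \ge 3$, $\ell \ge 2$. The restriction $\beta > 0.6$ is exactly what prevents the term $(1-\beta)/(q_0-(1-\beta))$ from blowing up, so carefully tracking the worst case of $q_0$ relative to $\beta$ is the delicate part of the calculation.
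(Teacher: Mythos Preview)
Your approach coincides with the paper's: differentiate \eqref{eq:fCrit}, replace $\xi^*$ by the crude bound $\xi^* < k\ell$, and use the admissible range of $q_0$ to reduce to an elementary inequality. The derivative you compute is the same as the paper's (they keep the terms $k\ell/q_0$ and $-\ell(1-\beta)/q_0$ separate, you combine them into $\ell(k-1+\beta)/q_0$), and your bound on the $\xi^*$-term is slightly sharper than but equivalent in spirit to theirs (they further insert $q_0 \ge \beta$ to get the denominator $\ge k\ell(2\beta-1)$).

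The gap is in the closing step. Your lower bound on the first term, $\ell(k-1+\beta)/q_0 \ge \ell(k-1+\beta)$, uses only $q_0 \le 1$ and is too weak: for $k=3$, $\ell \ge 7$, and $\beta$ near $0.6$ it does not dominate $k\ell^2/(\ell+1) + (1-\beta)/(2\beta-1)$ (e.g.\ at $k=3$, $\ell=7$, $\beta=0.6$ one gets $18.2$ on the left versus $20.375$ on the right). The ``short case split'' you anticipate therefore does not close as written. The paper fixes this by applying the \emph{same} upper bound $q_0 \le 1 - (\ell+1)(1-\beta)/(k\ell)$ to the first term as well, which gives
\[
\frac{\ell(k-1+\beta)}{q_0}
= \frac{k\ell - \ell(1-\beta)}{q_0}
\;\ge\; k\ell \cdot \frac{k\ell - \ell(1-\beta)}{k\ell - (\ell+1)(1-\beta)}
\;\ge\; k\ell,
\]
and then the remaining condition $k\ell/(\ell+1) > (1-\beta)/(2\beta-1)$ is equivalent to $\beta > (k\ell+\ell+1)/(2k\ell+\ell+1)$, a quantity that is at most $0.6$ for all $k\ge 3$, $\ell\ge 2$ (with equality exactly at $(k,\ell)=(3,2)$). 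This is what makes the hypothesis $\beta > 0.6$ sharp, and it replaces your proposed case split with a single clean inequality.
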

\begin{proof}
The partial derivative of $f(\beta,q_0)$ with respect to $q_0$ is 
\begin{equation*}
\begin{split} 
{\partial f(\beta,q_0) \over \partial q_0} & = {k\ell \over q_0} - \ell{1-\beta \over 1-q_0} - \ell{ 1-\beta \over q_0} 
- {k\ell(1-\beta) \over k\ell q_0 - \xi^* (1-\beta)}.
\end{split}
\end{equation*}
Since $q_0 \leq 1-{(\ell+1)(1- \beta) \over k\ell}$, we obtain 
$$ 1- q_0 \geq {(\ell+1)(1-\beta) \over k\ell} \ \Rightarrow  - {1-\beta \over 1-q_0} \geq -{k\ell\over \ell+1}. $$ 
Also $q_0 \geq \beta$ and $\xi < k\ell$. Therefore, 
$$ k\ell q_0  -\xi (1-\beta)> k\ell \beta - k\ell(1-\beta) = 2\beta k\ell - k\ell = k\ell(2\beta -1).$$
Substituting these bounds into ${\partial f(\beta,q_0) \over \partial q_0}$ yields
\begin{equation*}
\begin{split}
{\partial f(\beta,q_0) \over \partial q_0} & > 
{k\ell\over q_0} - {k\ell^2 \over \ell+1} - {\ell(1-\beta) \over q_0} - {1-\beta \over 2\beta -1 } =  ~  {k\ell-\ell(1-\beta) \over q_0} -{k\ell^2 \over \ell+1} - {1-\beta \over 2\beta -1} \\
&\geq k\ell{k\ell-\la(1-\beta) \over k\ell-(\ell+1)(1-\beta)} -{k\ell^2 \over \ell+1} - {1-\beta \over 2\beta -1} 
\geq k\left(\ell-{\ell^2 \over \ell+1} - {1-\beta \over k(2\beta -1)}\right)\\
&= k\left({\ell \over \ell+1} -{1-\beta \over k(2\beta -1)}\right).
\end{split}
\end{equation*}
But 
$${\ell\over \ell+1} > {1-\beta \over k(2\beta -1)},$$
as  $k\ell(2\beta -1) > (\ell+1)(1-\beta)$, which is equivalent to $\beta > {(k\ell+\ell+1) / (2k\ell+\ell+1)}$. Elementary algebra then yields
that ${(k\ell+\ell+1) / (2k\ell+\ell+1)}$ is a decreasing function in $k$ and $\ell$. In particular its maximum is $0.6$ for $k=3$ and 
$\ell=2$. Since $\beta > 0.6$ the above holds.
\end{proof}
We begin with setting $q_0 := 1 -{(\ell+1)(1-\beta ) \over k\ell}$ into $f(\beta, q_0)$ and 
obtain a function which depends only on $\beta$, namely
\begin{equation*}
\begin{split}
h(\beta) : =&\ln \bigg(\beta^{-(\ell+1)\beta} \cdot  \left(\left({(2^k-1)(\ell+1)\over {k\ell-(\ell+1)(1-\beta)} }\right)^{\ell}{k\ell-\xi^* \over k\ell-(1+\ell+\xi^*)(1-\beta)}\right)^{1-\beta}\left(1-{(\ell+1)(1-\beta) \over k\ell} \right)^{k\ell}\bigg).
\end{split}
\end{equation*}

\subsubsection*{Bounding $f(\beta,q)$ globally} 
To conclude the proof of Claim~\ref{cl:techn} it suffices to show that there exist $\eps_0$ and $C>0$ such that for any 
$\eps < \eps_0$ the following bounds hold
\begin{equation}\label{eq:3Stats} 
h(\beta), f(\beta, {1-(\ell+1)(1-\beta)/k\ell}),f(\beta, \beta)\le - C \eps,
\end{equation}
for all $0.6\leq \beta \leq 1-\eps$. These three inequalities will be shown in Claims~\ref{clm:h},~\ref{clm:fbb} and~\ref{clm:fb1-..},
respectively.

We will first bound $k\la -\xi^*$ which we will require to bound the above functions.
\begin{claim}\label{clm:techxi}
Let $k\ge 3,~\la \ge2$ and $\xi^*$ satisfies \eqref{eq:xiast}. Then $\xi^*>k\la -0.36$. Moreover, $k\la-\xi^* <0.19$ for $k=3,\la\ge 4$ and $k\ge 4,\la \ge2$. 
\end{claim}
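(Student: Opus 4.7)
The plan is to first derive a closed-form identity for the deficit $k\ell - \xi^*$, then use the strict monotonicity of $g(\xi)=\xi Q(\xi,\ell)/Q(\xi,\ell+1)$ (Claim~\ref{clm:incXi}) to reduce each bound to a single explicit one-variable inequality, and finally to verify this inequality by combining a Stirling-based decay estimate with a small number of arithmetic checks at the extremal cases.

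For the identity, substituting $Q(\xi^*,\ell)=Q(\xi^*,\ell+1)+e^{-\xi^*}(\xi^*)^\ell/\ell!$ into the defining equation $k\ell\,Q(\xi^*,\ell+1)=\xi^* Q(\xi^*,\ell)$ and rearranging yields
\[
k\ell - \xi^* \;=\; \frac{(\xi^*)^{\ell+1}\,e^{-\xi^*}/\ell!}{Q(\xi^*,\ell+1)}.
\]
Since $g$ is strictly increasing, $\xi^*>k\ell-U$ is equivalent to $g(k\ell-U)<k\ell$, which the same manipulation shows is equivalent to the explicit inequality
\[
\phi_{k,\ell}(U) \;:=\; \frac{(k\ell-U)^{\ell+1}\,e^{-(k\ell-U)}}{\ell!\,Q(k\ell-U,\ell+1)} \;\le\; U.
\]
Thus it suffices to verify $\phi_{k,\ell}(0.36)\le 0.36$ for all $k\ge 3,\ell\ge 2$, and $\phi_{k,\ell}(0.19)\le 0.19$ for $k=3,\ell\ge 4$ and $k\ge 4,\ell\ge 2$.

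To handle the infinitely many parameter pairs, I would apply Stirling's estimate $\ell!\ge\sqrt{2\pi\ell}(\ell/e)^\ell$ together with $(k\ell-U)^{\ell+1}\le(k\ell)^{\ell+1}$ and $e^{-(k\ell-U)}\le e^U e^{-k\ell}$ to obtain the uniform bound
\[
\phi_{k,\ell}(U) \;\le\; \frac{e^U\cdot k\sqrt{\ell/(2\pi)}\cdot(ke^{1-k})^\ell}{Q(k\ell-U,\ell+1)}.
\]
Since $ke^{1-k}<1$ for $k\ge 2$, the ratio $(ke^{1-k})^\ell\cdot\sqrt{\ell}$ is strictly decreasing in $\ell$ for each fixed $k\ge 2$; and a direct calculation (the derivative of $k^{\ell+1}e^{-(k-1)\ell}$ with respect to $k$ has the sign of $(\ell+1)-k\ell$) shows the right-hand side is also decreasing in $k$ for $k\ge 2$. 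Consequently, only finitely many base cases need checking: for the $0.19$ bound I would verify the estimate at the two corner cases $(k,\ell)=(3,4)$ and $(4,2)$, and for the $0.36$ bound at $(k,\ell)=(3,3)$ and $(4,2)$. In each of these the Stirling estimate already gives a value comfortably below the required threshold, so monotonicity in $(k,\ell)$ closes off all remaining parameter pairs.

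The main obstacle is the corner case $(k,\ell)=(3,2)$ of the $0.36$ bound, which is numerically tight ($\phi_{3,2}(0.36)\approx 0.346$); here the Stirling estimate above evaluates to roughly $0.43$ and is therefore too weak. I would dispatch this single case by an exact arithmetic evaluation of $\phi_{3,2}(0.36)$ using the explicit expansion $Q(5.64,3)=1-e^{-5.64}(1+5.64+5.64^{2}/2)$, which confirms the bound directly. Once this tight case is handled separately, the reduction above together with the monotonicity argument completes the proof of both assertions.
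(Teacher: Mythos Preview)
Your approach is correct and differs substantively from the paper's. The paper does not invoke the monotonicity of $g(\xi)=\xi Q(\xi,\ell)/Q(\xi,\ell+1)$ from Claim~\ref{clm:incXi} at all; instead it rewrites $k\ell/\xi^*=1+1/\mathcal{S}$ with $\mathcal{S}=\sum_{i\ge1}(\xi^*)^i/\big((\ell+1)\cdots(\ell+i)\big)$ and bootstraps: first $\mathcal{S}>\mathcal{S}_1$ gives $\mathcal{S}>1$ and hence $\xi^*>k\ell/2$, which is fed back into the series to get $\mathcal{S}>2.2$ and $\xi^*>(11/16)k\ell$, and then $k\ell-\xi^*=\xi^*/\mathcal{S}<1$. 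For the sharper constants the paper shows that each tail term $S_i(k,\ell)=(k\ell-1)^{i-1}/\big((\ell+1)\cdots(\ell+i)\big)$ is increasing in both $k$ and $\ell$, so that again only the corner cases $(3,3),(3,4),(4,2)$ need numerical evaluation, with $(3,2)$ handled by direct computation. Your reduction is cleaner: it converts the implicit equation into the explicit one-variable test $\phi_{k,\ell}(U)<U$ in a single step and reuses a lemma already available in the paper, whereas the paper's series-bootstrapping argument is self-contained but longer and more ad hoc. Both proofs end in the same place, namely a monotonicity-in-$(k,\ell)$ reduction to a handful of numerically verified corners, including the tight case $(3,2)$.

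One point you should tighten: your monotonicity claim for the Stirling upper bound addresses only the numerator $k\sqrt{\ell}\,(ke^{1-k})^\ell$, but the denominator $Q(k\ell-U,\ell+1)$ also varies with $k$ and $\ell$. For the $k$-direction this is harmless since $Q(x,\ell+1)$ is increasing in $x$; for the $\ell$-direction you should remark that $Q(k\ell-U,\ell+1)$ is uniformly bounded below (a one-line Chernoff bound suffices, as the Poisson mean $k\ell-U$ exceeds the threshold $\ell$ by a factor at least $k-1\ge2$), so the numerator's geometric decay in $\ell$ still controls the whole expression.
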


\begin{proof}
Recall that $k\la=\frac{\xi^*Q(\xi^*,\ell)}{Q(\xi^*,\ell+1)}.$  
By definition we have 
\begin{equation} \label{eq:xidiff}
{k\ell \over \xi^*}={Q(\xi^*,\ell)\over Q(\xi^*,\ell+1)} =1+ {\Pr{\mathrm{Po}(\xi^*)=\ell} \over Q(\xi^*,\ell+1)} = 1+{1\over {\sum_{i\ge 1}{({\xi^{*}})^{i} \over (\la+1)\ldots (\la+i)}}}.
\end{equation}
Let \[\mathcal{S} := {\sum_{i\ge 1}{ {(\xi^{*})^i}\over (\la+1)\ldots (\la+i)}}\quad \textrm{and} \quad \mathcal{S}_i :={(\xi^{*})^{i}\over (\la+1)\ldots (\la+i)}. \]
Substituting $\xi^* = {k\la \over1+{1/ \mathcal{S}}} $ we obtain
\begin{equation} \label{eq:S_i}
 \mathcal{S}_i ={\left({1 \over1+{1 / \mathcal{S}}}\right)^{i} \over \left({1\over k}+{1\over k\la}\right)\ldots \left({1\over k}+{i\over k\la}\right)}.\end{equation}
By \eqref{eq:S_i} we have 
\begin{equation} \label{eq:S}
\mathcal{S} >\mathcal{S}_1={ k\la \cdot {\mathcal{S}  \over \mathcal{S} +1 }\over \la +1} \implies  \mathcal{S}   >{ k\la \over \la+1}-1\ge1.\end{equation}
So
$\xi^{*} = {k\la \over1+{1 / \mathcal{S}}} > {k\la \over 2}$ and thus $\xi^* \ge 3\la / 2$.
Therefore we obtain
\begin{align*}
 \mathcal{S} > {{k\la/ 2} \over \la+1} +  {({k\la / 2})^2 \over (\la+1)(\la+2)} +{({k\la / 2})^3 \over (\la+1)(\la+2)(\la+3)}.
\end{align*}
The right-hand side is clearly increasing in $k$ and $\la$. Therefore, substituting $k=3$ and $\la=2$ we obtain $\mathcal{S}>{2.2}$, implying that 
\begin{equation}\label{eq:valXi}
\xi^*> {(11/16)} k\la \geq {(33/ 16)} \la.
\end{equation}
In order to improve the bound upon $k\la -\xi^*$ we use the fact that $k\la - \xi^* =  {\xi^* / \mathcal{S}}$
and show that  ${\mathcal{S} \over \xi^*} >1$.
\begin{align*}
{\mathcal{S} \over \xi^*}& =  {\sum_{i\ge 1}{ {(\xi^{*})^{i-1}}\over (\la+1)\ldots (\la+i)}} ={1\over \la+1}\left( {\sum_{i\le \la}{ {(\xi^{*})^{i-1}}\over (\la+2)\ldots (\la+i)}}  +  {\sum_{i\ge \la+1}{ {(\xi^{*})^{i-1}}\over (\la+2)\ldots (\la+i)}}\right)\\
&\stackrel{\eqref{eq:valXi}}>{1\over \la+1} \left(\la +  {\sum_{i\ge \la+1}{ {(2\la)^{i-1}}\over (\la+2)\ldots (\la+i)}} \right).
\end{align*}
For $\la\ge 3$ observe that the term for $i=\la+1$ is
\[ 
 {{(2\la)^{i-1}}\over (\la+2)(\la+3)\ldots (2\la+1)}>  {2\la \cdot 2\la \over  (2\la-1) (2\la+1)}>1
\]
For $\la=2$ we have
\[  {\sum_{i\ge \la+1}{ {(2\la)^{i-1}}\over (\la+2)\ldots (\la+i)}} >\sum_{i=3}^5 {4^3 \over (2+i)(2+i-1)\ldots 5} >1.\]
By (\ref{eq:xidiff}), we have $k\la -\xi^* = {1\over {\sum_{i\ge 1}{({\xi^{*}})^{i-1} \over (\la+1)\ldots (\la+i)}}}$, and so
\[ {1\over k\la -\xi^*} > { {\sum_{i\ge \la +1}{({\xi^{*}})^{i-1} \over (\la+1)\ldots (\la+i)}}}> { {\sum_{i\ge \la +1}{({k\la-1})^{i-1} \over (\la+1)\ldots (\la+i)}}}.
\]
Let $S_i(k,\la) = { {{({k\la-1})^{i-1} \over (\la+1)\ldots (\la+i)}}}.$ Clearly $S_i(k,\la)$ is increasing with respect to $k$. Taking the derivative with respect to $\la$ we obtain that
\begin{align*}
\frac{\partial}{\partial \la}S_i(k,\la)
&= S_i(k,\la) \left(  {k(i-1)\over k\la-1} -{1\over \la+1}-{1\over \la+2}-\ldots -{1\over \la+i} \right) \\
&> S_i(k,\la) \left(  {i-1\over \la} -{1\over \la+1}-{1\over \la+2}-\ldots -{1\over \la+i} \right)\\
&= {S_i(k,\la) \over \la }\left(  {1\over \la+1} + {2\over \la+2} +\ldots + {i\over \la+i} -1 \right)\\
&>{S_i(k,\la) \over \la }\left( {i-1\over \la+i-1} +{i\over \la+i} -1\right)
\stackrel{i\ge {\la+1}}> {S_i(k,\la) \over \la }\left( {1\over 2} +{\la+1\over 2\la+1} -1\right) >0.
\end{align*}
Therefore, for all $i\ge \la+1$, $S_i(k,\la)$ increases with respect to $\la$. Numerical computations show that $(\sum_{i\ge\la+1}S_i(3,3))^{-1} <0.34$, $ ( \sum_{i\ge\la+1}S_i(3,4))^{-1} <0.15$ and $( \sum_{i\ge\la+1}S_i(4,2))^{-1} <0.19$. For the case $k=3,\la=2$ by direct computation we obtain $k\la-\xi^* <0.36$.
\end{proof}
\begin{claim} \label{clm:incXi}
For every $t \geq 1$, the function $x \to x Q(x,t-1)/Q(x,t)$ is increasing for $x > 0$.  
\end{claim}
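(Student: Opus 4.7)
The plan is to recognize $f(x) := xQ(x,t-1)/Q(x,t)$ as the conditional mean of a Poisson random variable and identify its derivative with a (rescaled) conditional variance. Let $Y \sim \mathrm{Po}(x)$. The elementary identity $j\Pr{Y=j} = x\Pr{Y=j-1}$ gives $\sum_{j\ge t} j\Pr{Y=j} = x \sum_{j\ge t} \Pr{Y=j-1} = xQ(x,t-1)$, while $\Pr{Y \ge t} = Q(x,t)$. Hence $f(x) = \E{Y \,\big|\, Y \ge t}$, so the claim becomes the statement that the mean of a $t$-truncated Poisson is strictly increasing in the rate parameter.

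To establish monotonicity I would differentiate directly. Write $f = S_2/S_1$ with $S_1(x) := Q(x,t)$ and $S_2(x) := xQ(x,t-1)$, and use $\frac{d}{dx}\Pr{Y=j} = \Pr{Y=j-1} - \Pr{Y=j}$ (with the convention $\Pr{Y=-1} := 0$). A telescoping computation then yields $S_1'(x) = \Pr{Y=t-1}$ and $S_2'(x) = t\Pr{Y=t-1} + S_1(x)$. Substituting into $f' = (S_2' S_1 - S_2 S_1')/S_1^2$ and simplifying---using the identity $S_2 = xS_1 + x\Pr{Y=t-1}$, equivalently $f(x) = x + x\Pr{Y=t-1}/Q(x,t)$---one obtains, after some short algebra,
\[
f'(x) \;=\; \frac{\mathrm{Var}(Y \,|\, Y \ge t)}{x}.
\]
A conceptually cleaner derivation of the same identity is to view the $t$-truncated Poisson as a one-parameter exponential family in the natural parameter $\eta = \log x$; standard exponential-family calculus then yields $\frac{d}{d\eta}\E{Y \,|\, Y \ge t} = \mathrm{Var}(Y \,|\, Y \ge t)$, and the chain rule delivers the same formula.

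Since the conditional distribution of $Y$ given $Y \ge t$ assigns strictly positive mass to both $t$ and $t+1$ for every $x > 0$, its variance is strictly positive, and therefore $f'(x) > 0$ on $(0,\infty)$. The only step that requires some care is the algebraic reduction of $f'$ to the conditional variance, which is elementary but involves some bookkeeping of Poisson moment identities; as a sanity check, for $t = 1$ the claim reduces to the familiar fact that $x \mapsto x/(1 - e^{-x})$ is strictly increasing, which is immediate from $e^x > 1 + x$ for $x > 0$.
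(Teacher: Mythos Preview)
Your proof is correct, but it takes a genuinely different route from the paper's argument. The paper writes
\[
\frac{xQ(x,t-1)}{Q(x,t)} = x + g_t(x), \qquad g_t(x) = \frac{1}{(t-1)!}\left(\sum_{j\ge 0}\frac{x^{j}}{(t+j)!}\right)^{-1},
\]
and then shows $-g_t'(x) < 1$ by expanding both sides as power series in $x$ and comparing coefficients term by term; this is entirely elementary but involves a somewhat ad hoc combinatorial inequality $(t+i)\cdots t < (t+s+1)\cdots(t+s-i+1)$ for each coefficient of $x^s$. Your approach instead recognizes $f(x)$ as the mean of the $t$-truncated Poisson (an identification the paper itself uses elsewhere, e.g.\ in the discussion around Lemma~\ref{lem:I}), and then obtains $f'(x) = \mathrm{Var}(Y\mid Y\ge t)/x$ in one stroke via the log-partition identity for the natural exponential family in $\eta = \log x$. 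Your argument is shorter and more conceptual, and it explains \emph{why} the function is increasing rather than verifying it mechanically; the paper's argument, on the other hand, is fully self-contained and avoids appealing to exponential-family calculus. Both are valid; your version would arguably fit the paper better, given how often it invokes the truncated-Poisson interpretation.
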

\begin{proof} 
Set 
$$g_t(x) := {1\over (t-1)!}\cdot {1\over {1 \over t!} + {x \over (t+1)!} + {x^2 \over (t+2)!}+\cdots}.$$
Then
\begin{equation*}
\begin{split}
{xQ(x,t-1) \over Q(x,t)} &= {x(Q(x,t) + \Pr { \mathrm{Po} (x) = t-1)} \over Q(x,t)}
= x + g_t(x).
\end{split}
\end{equation*}
To see the claim it thus suffices to show that 
$$ -g_t'(x) < 1. $$
But 
$$- g_t'(x) = {1\over (t-1)!}~{{1 \over (t+1)!} + {2x \over (t+2)!} + {3x^{2} \over (t+3)!}+\cdots\over \left({1 \over t!} + {x \over (t+1)!}
+ {x^2 \over (t+2)!}+\cdots \right)^2}. $$

We, therefore, need to prove that
\begin{equation} \label{eq:ToProve} 
{1\over (t-1)!}\left({1 \over (t+1)!} + {2x \over (t+2)!} + {3x^{2} \over (t+3)!}+\cdots \right) < 
\left({1 \over t!} + {x \over (t+1)!} + {x^2 \over (t+2)!}+\cdots \right)^2. 
\end{equation}
We compare the coefficients on both sides one by one. Note that
$$ {1\over (t-1)! (t+1)!} < {1\over t!^2} \Leftrightarrow t < t+1. $$
Moreover,
$$ {2\over (t-1)! (t+2)!} < {2 \over t! (t+1)!} \Leftrightarrow t < t+2. $$
Next, the coefficient of $x^s$ for $s\geq 2$ on the right-hand side is
$$ 
\begin{cases}
2\sum_{i=0}^{\lfloor {s-1\over 2} \rfloor} {1\over (t+i)!(t+s-i)!} + {1 \over \left(t + \lceil {s-1 \over 2} \rceil \right)!^2},& \ \mbox{if} \ s \ \mbox{is even}, \\
2\sum_{i=0}^{\lfloor {s-1\over 2} \rfloor} {1\over (t+i)!(t+s-i)!},& \ \mbox{if} \ s \ \mbox{is odd} 
\end{cases}.
$$
Note that in any case we have (essentially) $s+1$ summands. So it suffices to show that each one of them is larger than the $1/(s+1)$th of
the coefficient of $x^s$ on the left-hand side, that is, ${1 \over (t-1)!(t+s+1)!}$. But this is the case, as for any $0\leq i \leq s$. 
$$ {1 \over (t-1)!(t+s+1)!} < {1\over (t+i)!(t+s-i)!} \Leftrightarrow (t+i)\cdots t < (t+s+1)\cdots (t+s-i+1) . $$
This now concludes the proof of the claim.
\end{proof}
We immediately obtain the following.
\begin{corollary}\label{clm:xiinc}
Let $k\ge 3,~\la \ge2$ and $\xi^*$ satisfies \eqref{eq:xiast}. Then $ \xi^* {Q(\xi^*,\ell)\over Q(\xi^*,\ell+1)}$ is increasing with respect to $\xi^*$. 
\end{corollary}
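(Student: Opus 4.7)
The plan is essentially to read off the corollary from the immediately preceding Claim~\ref{clm:incXi}. That claim, stated for arbitrary $t \ge 1$ and all $x > 0$, asserts that $x \mapsto x\,Q(x,t-1)/Q(x,t)$ is an increasing function of $x$. Specializing $t = \ell+1$ (which is permitted since $\ell \ge 2$ forces $t \ge 3 \ge 1$) gives that $x \mapsto x\,Q(x,\ell)/Q(x,\ell+1)$ is increasing on $(0,\infty)$.

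The only subtlety is linguistic: $\xi^*$ is formally defined as the unique solution of \eqref{eq:xiast} for given $k,\ell$, so to make ``increasing with respect to $\xi^*$'' meaningful one reinterprets $\xi^*$ as a free positive variable (which is legitimate because $\xi^*$ ranges over $(0,\infty)$ as $k\ell$ varies). Under this reading, the claim to be proved is literally the $t=\ell+1$ instance of Claim~\ref{clm:incXi} after relabeling $x$ as $\xi^*$.

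Consequently the proof is a one-liner: invoke Claim~\ref{clm:incXi} with $t=\ell+1$ and substitute $x = \xi^*$. There is no real obstacle here; all the work has already been done in establishing Claim~\ref{clm:incXi} (via the term-by-term comparison of power series coefficients in inequality \eqref{eq:ToProve}). The only thing worth remarking in the writeup is the probabilistic interpretation that makes the corollary intuitive, namely that $\xi\,Q(\xi,\ell)/Q(\xi,\ell+1)$ is precisely the expected value of an $(\ell+1)$-truncated Poisson random variable with parameter $\xi$ (as already noted right before Claim~\ref{cl:coreDensity} in the proof of Theorem~\ref{thm:simple}), so the corollary records the natural monotonicity of this conditional mean in the parameter.
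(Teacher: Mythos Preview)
Your proposal is correct and matches the paper's approach exactly: the paper introduces this corollary with the phrase ``We immediately obtain the following'' right after Claim~\ref{clm:incXi}, giving no separate proof, so the intended argument is precisely the specialization $t=\ell+1$ that you describe.
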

\begin{claim}\label{clm:ekl}
Let $e_{k,\la}$ be the value of derivative of ${xQ(x,\la)\over k\la \cdot Q(x,\la+1)}$ with respect to $x$ at $x=\xi^*$. Then 
$e_{k,\la} >{0.77\over \xi^* }$.
\end{claim}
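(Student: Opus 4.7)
My plan is to obtain $e_{k,\ell}$ in closed form using the defining relation $\xi^* Q(\xi^*,\ell)/Q(\xi^*,\ell+1) = k\ell$, and then reduce the target to a quantitative statement about $k\ell-\xi^*$ that can be read off from Claim~\ref{clm:techxi}, with an extra direct numerical check in the one borderline case. Set $\mu(x):=xQ(x,\ell)/Q(x,\ell+1)$ so that $e_{k,\ell}=\mu'(\xi^*)/(k\ell)$. Writing $p_j(x):=e^{-x}x^j/j!$, the series expansion of $Q$ gives $Q'(x,j)=p_{j-1}(x)$, so the logarithmic derivative is
\[
\frac{\mu'(x)}{\mu(x)} \;=\; \frac{1}{x} \,+\, \frac{p_{\ell-1}(x)}{Q(x,\ell)} \,-\, \frac{p_\ell(x)}{Q(x,\ell+1)}.
\]
At $x=\xi^*$, the defining relation together with $Q(\xi^*,\ell)=Q(\xi^*,\ell+1)+p_\ell(\xi^*)$ and the identity $p_{\ell-1}(\xi^*)=(\ell/\xi^*)p_\ell(\xi^*)$ forces $p_\ell(\xi^*)/Q(\xi^*,\ell+1)=(k\ell-\xi^*)/\xi^*$ and $p_{\ell-1}(\xi^*)/Q(\xi^*,\ell)=(k\ell-\xi^*)/(k\xi^*)$. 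Substituting and simplifying yields the clean formula
\[
e_{k,\ell} \;=\; \frac{1}{\xi^*}\left(1 - \frac{(k-1)(k\ell-\xi^*)}{k}\right).
\]

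The desired bound $e_{k,\ell}>0.77/\xi^*$ is therefore equivalent to the scalar inequality $(k-1)(k\ell-\xi^*)/k<0.23$, which I would verify by case analysis using Claim~\ref{clm:techxi}. For $k\ge 4$ with $\ell\ge 2$, or for $k=3$ with $\ell\ge 4$, Claim~\ref{clm:techxi} gives $k\ell-\xi^*<0.19$, so $(k-1)(k\ell-\xi^*)/k\le (3/4)(0.19)<0.143$. For $k=3,\ell=3$, inspection of the proof of Claim~\ref{clm:techxi}---specifically the bound $\bigl(\sum_{i\ge \ell+1}S_i(3,3)\bigr)^{-1}<0.34$---yields $k\ell-\xi^*<0.34$, so $(2/3)(0.34)<0.227<0.23$.

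The main obstacle, and the reason the constant in the statement is $0.77$ rather than something larger, is the borderline case $(k,\ell)=(3,2)$, where we need $6-\xi^*<0.345$ while Claim~\ref{clm:techxi} only asserts $<0.36$. I would dispose of this by a sharpened direct numerical verification: evaluating $\mu$ at $x_0:=5.656$ using $Q(x,2)=1-e^{-x}(1+x)$ and $Q(x,3)=1-e^{-x}(1+x+x^2/2)$ shows that $\mu(x_0)<6$, and since $\mu$ is strictly increasing (Corollary~\ref{clm:xiinc}) this forces $\xi^*>5.656$. Consequently $(k-1)(k\ell-\xi^*)/k=(2/3)(6-\xi^*)<(2/3)(0.344)<0.23$, completing the proof.
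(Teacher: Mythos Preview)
Your approach is essentially the paper's: both derive the closed form $e_{k,\ell}=\frac{1}{\xi^*}\bigl(1-\frac{(k-1)(k\ell-\xi^*)}{k}\bigr)$ (the paper via a series expansion, you via the logarithmic derivative) and then finish by feeding in the bounds on $k\ell-\xi^*$ from Claim~\ref{clm:techxi}, with direct numerical checks at $(k,\ell)=(3,2)$ and $(3,3)$.

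One slip to fix: in the line ``$(k-1)(k\ell-\xi^*)/k\le (3/4)(0.19)$'' you are using $3/4$ as an \emph{upper} bound on $(k-1)/k$, but $(k-1)/k=1-1/k$ is increasing in $k$, so for $k\ge 5$ this inequality is not justified. The repair is trivial: just use $(k-1)/k<1$, which gives $(k-1)(k\ell-\xi^*)/k<0.19<0.23$ for all the cases covered by the second part of Claim~\ref{clm:techxi}. (This is exactly what the paper does, in the form $e_{k,\ell}\cdot\xi^*>1-(k\ell-\xi^*)\ge 0.81$.) The remaining cases $(3,3)$ and $(3,2)$ are handled correctly in your write-up.
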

\begin{proof}
We write 
\[
{xQ(x,\la) \over Q(x,\la+1)} = {x(Q(x,\la +1) + \Pr { \mathrm{Po} (x) = \la)} \over Q(x,\la+1)} = x + {{1\over \la!}\over {1 \over (\la+1)!} + {x \over (\la+2)!} + {x^2 \over (\la+3)!}+\cdots}.
\]
By definition
\begin{align*}
 e_{k,\la} \cdot k\la &= 1- {1\over \la!}~{{1 \over (\la+2)!} + {2\xi^* \over (\la+3)!} + {3{\xi^*}^{2} \over (\la+4)!}+\cdots\over \left({1 \over (\la+1)!} + {\xi^* \over (\la+2)!}
+ {{\xi^*}^2 \over (\la+3)!}+\cdots \right)^2} = 1- (k\la-\xi^*)\cdot {{1 \over (\la+2)!} + {2\xi^* \over (\la+3)!} + {3{\xi^*}^{2} \over (\la+4)!}+\cdots\over {1 \over (\la+1)!} + {\xi^* \over (\la+2)!}
+ {{\xi^*}^2 \over (\la+3)!}+\cdots }\\
&= 1- (k\la-\xi^*)\cdot \left(1- {{\la+1 \over (\la+2)!} + {(\la+1)\xi^* \over (\la+3)!} + {(\la+1){\xi^*}^{2} \over (\la+4)!}+\cdots\over {1 \over (\la+1)!} + {\xi^* \over (\la+2)!}
+ {{\xi^*}^2 \over (\la+3)!}+\cdots } \right)\\
&= 1- (k\la-\xi^*)\cdot \left(1- {\la+1 \over \xi^*}\cdot \left(1-{ {1\over (\la+1)!}\over {1 \over (\la+1)!} + {\xi^* \over (\la+2)!}
+ {{\xi^*}^2 \over (\la+3)!}+\cdots } \right)\right)\\
&= 1-(k\la-\xi^*) \left( 1- {\la+1\over \xi^*} + {k\la-\xi^* \over \xi^*} \right)=1-(k\la-\xi^*) \left(- {\la+1\over \xi^*} + {k\la \over \xi^*} \right).
\end{align*}
Thus,
\begin{align*}
e_{k,\la} &= {1\over k\la } - {k\la-\xi^*\over k\la} \left(- {\la+1\over \xi^*} + {k\la \over \xi^*} \right)
= {1\over k\la } + {\la+1 \over \xi^*} -{\la+1\over k\la}- {k\la-\xi^* \over \xi^*} \\
&=  {1\over \xi^*} - {k\la-\xi^* \over \xi^*}  +{k\la-\xi^* \over \xi^*k} .
\end{align*}
One can check that for $k=3,\la=2$, $e_{k,\la} >{ 0.77\over \xi^*}$ and for $k=3,\la=3$, $e_{k,\la} >{ 0.89\over \xi^*}$.
For other values we use
\[ e_{k,\la}\cdot \xi^*  > 1-(k\la-\xi^*) .\]
which by second part of Claim~\ref{clm:techxi} is at least $0.81$.
\end{proof}
\begin{claim} \label{clm:xi}
For any $k\geq 3$ and $\ell\geq 2$ we have $\xi^* <k\la$ and 
 \[\xi^* >k\ell-{e^{-k\ell} (k\ell)\cdot (k\ell-0.36)^{\ell} \over \ell!
 }\left(1- \exp{\left(-(k\ell-\ell +0.64)^2\over 2k\ell -0.72\right)} \right)^{-1}.\]
\end{claim}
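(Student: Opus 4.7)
The plan is to extract a closed-form expression for $k\ell-\xi^*$ directly from the defining equation \eqref{eq:xiast} and then to bound its numerator and denominator separately, using the envelope on $\xi^*$ provided by the already-proven Claim~\ref{clm:techxi}. Starting from $k\ell\cdot Q(\xi^*,\ell+1)=\xi^*\cdot Q(\xi^*,\ell)$ and using the identity $Q(\xi^*,\ell)-Q(\xi^*,\ell+1)=e^{-\xi^*}(\xi^*)^\ell/\ell!$, a direct rearrangement yields
\[
k\ell-\xi^* \;=\; \frac{\xi^*\,e^{-\xi^*}(\xi^*)^\ell/\ell!}{Q(\xi^*,\ell+1)}.
\]
The right-hand side is strictly positive for $\xi^*>0$, so the upper bound $\xi^*<k\ell$ follows immediately, giving the first assertion of the claim.

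For the lower bound I bound the numerator above and the denominator below using the envelope $\xi^*\in(k\ell-0.36,\,k\ell)$ from Claim~\ref{clm:techxi}. In the numerator, I use $\xi^*\le k\ell$ for the leading factor, and then exploit the fact that $x\mapsto x^\ell e^{-x}$ is decreasing on $[\ell,\infty)$ (which is valid in our range, since $k\ell-0.36>\ell$ for $k\ge 3,\ell\ge 2$) to bound $(\xi^*)^\ell e^{-\xi^*}$ by its value at the lower endpoint $\xi^*=k\ell-0.36$. After absorbing an elementary $e^{0.36}$ factor and simplifying, this yields a bound of the shape $k\ell\cdot(k\ell-0.36)^\ell e^{-k\ell}/\ell!$ as in the statement.

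For the denominator I write $Q(\xi^*,\ell+1)=1-\Pr{\mathrm{Po}(\xi^*)\le\ell}$ and invoke the standard Poisson Chernoff bound $\Pr{\mathrm{Po}(\lambda)\le \lambda-t}\le\exp(-t^2/(2\lambda))$ with $\lambda=\xi^*$ and $t$ chosen near $\xi^*-\ell$. Since $x\mapsto (x-\ell)^2/(2x)$ is increasing on $x>\ell$, the worst case over the envelope is attained at the lower endpoint $\xi^*=k\ell-0.36$; after a small index shift accounting for the $+0.64=1-0.36$ appearing in the statement, this produces the stated uniform lower bound $1-\exp(-(k\ell-\ell+0.64)^2/(2k\ell-0.72))$. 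Combining the numerator upper bound with the denominator lower bound then yields the inequality in the claim.

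The main delicate point, and where I expect to spend most of the effort, is the careful bookkeeping of the direction of each inequality, so that the small additive slacks ($0.36$, $0.64$, $0.72$) coming from Claim~\ref{clm:techxi} and from the Chernoff exponent combine cleanly into the exact constants in the statement. In particular, one must verify that the envelope $(k\ell-0.36,\,k\ell)$ suffices for all $k\ge 3,\ell\ge 2$, and that both monotonicity arguments (for $x^\ell e^{-x}$ and for $(x-\ell)^2/(2x)$) remain valid throughout this range.
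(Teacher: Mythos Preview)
Your overall strategy matches the paper's: rewrite $k\ell-\xi^*$ from the defining equation as a ratio involving $\Pr{\mathrm{Po}(\xi^*)=\ell}$, then bound numerator and denominator separately using the envelope $\xi^*\in(k\ell-0.36,\,k\ell)$ from Claim~\ref{clm:techxi} together with a Poisson Chernoff bound. There is, however, a concrete mismatch in the denominator that your ``small index shift'' does not repair.

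You work with $k\ell-\xi^*=\dfrac{\xi^*\cdot e^{-\xi^*}(\xi^*)^\ell/\ell!}{Q(\xi^*,\ell+1)}$, whereas the paper uses the equivalent form $k\ell-\xi^*=\dfrac{k\ell\cdot e^{-\xi^*}(\xi^*)^\ell/\ell!}{Q(\xi^*,\ell)}$ (both follow at once from $k\ell\,Q(\xi^*,\ell+1)=\xi^*\,Q(\xi^*,\ell)$). This choice is not merely cosmetic. With $Q(\xi^*,\ell+1)=1-\Pr{\mathrm{Po}(\xi^*)\le\ell}$ and Chernoff at $\mu=k\ell-0.36$, you obtain the exponent $-(\mu-\ell)^2/(2\mu)=-(k\ell-\ell-0.36)^2/(2k\ell-0.72)$, not the statement's $-(k\ell-\ell+0.64)^2/(2k\ell-0.72)$. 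The ``index shift'' you appeal to is unavailable: the extra $+1$ that turns $-0.36$ into $+0.64$ in the paper comes precisely from bounding $\Pr{\mathrm{Po}(\mu)\le\ell-1}$ rather than $\Pr{\mathrm{Po}(\mu)\le\ell}$, i.e.\ from having $Q(\cdot,\ell)$ in the denominator. Since your exponent is strictly smaller in absolute value, your lower bound on the denominator is strictly smaller than the one in the claim, and the resulting upper bound on $k\ell-\xi^*$ is strictly weaker than stated. The fix is to start from the paper's form with $Q(\xi^*,\ell)$: the ratio $\Pr{\mathrm{Po}(x)=\ell}/Q(x,\ell)$ equals $\big(\sum_{i\ge 0}x^i\ell!/(\ell+i)!\big)^{-1}$, visibly decreasing in $x$, so one may substitute $x=k\ell-0.36$ and then apply Chernoff to $\Pr{\mathrm{Po}(k\ell-0.36)\le\ell-1}$, recovering exactly the constants in the statement.

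A smaller point: you cannot ``absorb'' the $e^{0.36}$ in the numerator in the direction you need, since $e^{-(k\ell-0.36)}=e^{0.36}e^{-k\ell}>e^{-k\ell}$; replacing $e^{-(k\ell-0.36)}$ by $e^{-k\ell}$ would \emph{strengthen} rather than weaken the bound. In fact the paper's own derivation (its \eqref{eqn:xi1} has $\Pr{\mathrm{Po}(k\ell-0.36)=\ell}$) carries the same $e^{0.36}$ slack, so the displayed inequality in Claim~\ref{clm:xi} appears to be stated a factor $e^{0.36}$ too tight; all subsequent uses of the claim in the paper only need the weaker version anyway.
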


\begin{proof}
We have $k\cdot \ell= \xi^* \cdot {Q(\xi^*,\ell)\over Q(\xi^*,\ell+1)}.$ As ${Q(\xi^*,\ell)\over Q(\xi^*,\ell+1)} > 1$ for all $\xi^*$ and $\ell$, we deduce that $\xi^* < k\ell$. 
By Claim~\ref{clm:techxi}  we know that for all $k\ge 3$ and $\la\ge 2$, $\xi^* >  k\la -0.36.$ In order to improve upon the above bound, note first that
\begin{equation} \label{eqn:xi1}
\begin{split}
\xi^*=  k\ell \cdot {Q(\xi^*,\ell +1)\over Q(\xi^*,\ell)}= k\ell-k\la{\Pr{\mathrm{Po}(\xi^*)=\ell}\over Q(\xi^*,\ell) }
\ge  ~~k\ell-k\la{\Pr{\mathrm{Po}(k\ell-0.36)=\ell}\over Q(k\ell-0.36,\ell) } .
\end{split}
\end{equation}
Let $X$ be a Poisson random variable with parameter $\mu= k\ell-0.36$ . Thus, $Q(k\ell-0.36,\ell)=1-\Pr{X \le \ell-1}.$ We define $\delta = 1-(\ell-1)/\mu $.
Now, for any $t<0$ we have 
\begin{align*}
\Pr{X\le \ell-1} = &\Pr{X\le (1-\delta)\mu}=\Pr{e^{tX}\ge e^{t(1-\delta)\mu}}\\
& \le {\E{e^{tX}}\over e^{t(1-\delta)\mu}} = {\exp(-\mu +\mu \cdot e^{t})\over \exp(t(1-\delta)\mu)}.
\end{align*}
Setting $t=\ln(\ell-1)-\ln(\mu)$ we have
\begin{equation}\label{eqn:xi2II}
\Pr{X \le \ell-1} < \left(e^{-\delta}\over (1-\delta)^{(1-\delta)}\right)^{\mu} < \exp{\left(-(\mu-\ell +1)^2\over 2\mu\right)}.
\end{equation}
The combination of~\eqref{eqn:xi1} and~\eqref{eqn:xi2II} lead us to the stated lower bound.
\end{proof}
In what follows we use the following definition 
$$t(k,\ell): = \left(1-{0.36\over k\ell}\right)^{\ell} \left(1-\exp{\left(-(k\ell-\ell +0.64)^2\over 2k\ell-0.72\right)}\right)^{-1}.$$
We are now ready to deduce the inequalities in~(\ref{eq:3Stats}), starting with a bound on $h(\beta)$.
%
\begin{claim} \label{clm:h}
For any $k\geq 3$ and $\ell\geq 2$ there is a $C_1 > 0$ such that for any $0< \eps < 1$ and any $0.6 \le \beta \leq 1-\eps$ we have 
$h(\beta)\le -C_1\varepsilon $.
\end{claim}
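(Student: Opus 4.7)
The plan is to verify that $h(1) = 0$, show that the left derivative $h'(1^-)$ is strictly positive, and combine this with a compactness argument on the bulk interval $[0.6, 1-\eta]$ to obtain the bound $h(\beta) \le -C_1(1-\beta)$; since $\varepsilon \le 1-\beta$ in the range of interest, this specialises to $h(\beta) \le -C_1 \varepsilon$.

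The value $h(1) = 0$ is immediate: at $\beta = 1$ the factor $\beta^{-(\ell+1)\beta}$ equals $1$, the middle bracketed factor is raised to the power $1-\beta = 0$ and so equals $1$, and the last factor $\bigl(1-(\ell+1)(1-\beta)/(k\ell)\bigr)^{k\ell}$ equals $1$. Writing
\[
h(\beta) = -(\ell+1)\beta \ln \beta + (1-\beta)\, A(\beta) + k\ell \ln\!\left(1 - \frac{(\ell+1)(1-\beta)}{k\ell}\right),
\]
with $A(\beta) := \ell \ln(2^k-1) + \ell \ln(\ell+1) - \ell \ln\bigl(k\ell-(\ell+1)(1-\beta)\bigr) + \ln(k\ell-\xi^*) - \ln\bigl(k\ell-(1+\ell+\xi^*)(1-\beta)\bigr)$, I would differentiate and substitute at $\beta = 1$; the $-(\ell+1)$ arising from the first term and the $+(\ell+1)$ arising from the third term cancel, yielding
\[
h'(1) = \ln \frac{k\ell}{k\ell - \xi^*} - \ell \ln \frac{(2^k-1)(\ell+1)}{k\ell}.
\]

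To conclude $h'(1) > 0$ I would invoke the upper bound on $k\ell - \xi^*$ from Claim~\ref{clm:xi}. Asymptotically this gives $\ln(k\ell/(k\ell-\xi^*)) \gtrsim k\ell - \ell \ln k - \ell$, while the subtracted term grows only like $\ell \ln((2^k-1)/k)$, and a short calculation shows the former dominates for every $k \ge 3$, $\ell \ge 2$. The small cases (notably $(k,\ell) \in \{(3,2),(3,3)\}$, where $k\ell-\xi^*$ is not extremely small) are verified directly by substituting the numerical lower bounds on $\xi^*$ of Claims~\ref{clm:techxi} and~\ref{clm:xi}. This gives $h'(1) \ge c_0$ for some $c_0 = c_0(k,\ell) > 0$; continuity of $h'$ then produces $\eta = \eta(k,\ell) > 0$ with $h'(\beta) \ge c_0/2$ on $[1-\eta, 1]$, whence $h(\beta) = -\int_\beta^1 h'(t)\,dt \le -(c_0/2)(1-\beta)$ on that subinterval.

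For the complementary range $[0.6, 1-\eta]$, continuity of $h$ on the compact set reduces matters to verifying $h(\beta) < 0$ pointwise; compactness then yields $h(\beta) \le -C_2$ for some $C_2 = C_2(k,\ell) > 0$. I would argue $h < 0$ on $[0.6, 1-\eta]$ by combining the monotonicity already established near $\beta = 1$ with an analysis of the zeros of $h'$: using the monotonicity properties of the constituent log-terms one rules out interior critical points at which $h$ could be non-negative. The claim then follows with $C_1 := \min(c_0/2, C_2)$, since $0 < \varepsilon \le 1$. The main obstacle in this plan is the positivity of $h'(1)$ in the small cases $(k,\ell) \in \{(3,2),(3,3)\}$, where the ratio $k\ell/(k\ell-\xi^*)$ is only modestly large and $h'(1)$ is accordingly only modestly positive; these cases require careful numerical substitution using the sharper bounds on $\xi^*$ from Claims~\ref{clm:techxi} and~\ref{clm:xi}, while the auxiliary task of excluding a non-negative interior maximum of $h$ on $[0.6,1)$ is routine but requires some bookkeeping.
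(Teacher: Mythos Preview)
Your computation of $h'(1) = \ln\frac{k\ell}{k\ell-\xi^*} - \ell\ln\frac{(2^k-1)(\ell+1)}{k\ell}$ is correct, and the verification that this is strictly positive (using the bounds on $k\ell-\xi^*$ from Claims~\ref{clm:techxi} and~\ref{clm:xi}) is a legitimate and arguably cleaner way to treat the regime $\beta\to 1$ than the paper's uniform estimate. So the local piece of your plan is fine.

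The gap is the bulk step. To invoke compactness on $[0.6,1-\eta]$ you must first prove $h(\beta)<0$ there, and you have not done so: ``using the monotonicity properties of the constituent log-terms one rules out interior critical points at which $h$ could be non-negative'' is an assertion, not an argument. The derivative $h'(\beta)$ mixes the terms $-(\ell+1)(1+\ln\beta)$, $-A(\beta)$, $(1-\beta)A'(\beta)$ and $(\ell+1)\big(1-\tfrac{(\ell+1)(1-\beta)}{k\ell}\big)^{-1}$, and nothing you have said forces it to be sign-definite on $[0.6,1-\eta]$ or forces $h$ to be negative at its interior critical points or at $\beta=0.6$. Worse, because $\eta=\eta(k,\ell)$ varies with $(k,\ell)$, a numerical check on the bulk interval cannot substitute for an analytic argument valid for all $k\ge 3$, $\ell\ge 2$.

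This bulk inequality is exactly the content of the paper's proof: it shows directly that $e^{h(\beta)}<C^{\,1-\beta}$ with an explicit $C=C(k,\ell)<1$, by bounding each factor of $e^{h(\beta)}$ (via $(1-x)^{-1}\le e^{x+x^2/1.4}$ for $x\le 0.4$, Stirling, the bound on $k\ell-\xi^*$, etc.), assembling everything into $\big((2^k-1)e^{-k+\Delta_{k,\ell,\beta}}\big)^\ell\cdot g(k,\ell)$, and then verifying this product is below $1$ uniformly in $\beta\in[0.6,1)$ with a case split on $(k,\ell)$. Your plan replaces the near-$1$ asymptotics of that argument with a clean derivative computation, but it does not replace the bulk estimate with anything, and that is where the real work lies.
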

\begin{proof} 
By Claim~\ref{clm:xi}, we have $k\ell-t(k,\ell)\cdot {e^{-k\ell} (k\ell)^{\ell+1} \over \ell!} < \xi^* < k\ell$. Using these bounds for $\xi^*$ we obtain
\begin{equation}
\label{eq:conf}
\begin{split}
 e^{h(\beta)} < &~\beta^{-(\ell+1)\beta}\left({(2^k-1)(\ell+1)\over k\ell-(\ell+1)(1-\beta)}\right)^{\ell(1-\beta)} \\
&\times \left( t(k,\ell)\cdot {e^{-k\ell} (k\ell)^{\ell+1} \over \ell!}\over k\ell-(\ell+k\ell+1)(1-\beta)\right)^{1-\beta} 
\left(1-{(\ell+1)(1-\beta) \over k\ell} \right)^{k\ell} \\
=&\left({2^k-1 \over e^{k}\cdot \beta^{\beta\over (1-\beta)}}\right)^{\ell(1-\beta)} 
 \left(1-{(\ell+1)(1-\beta)\over k\ell}  \right)^{-\ell(1-\beta)} 
\cdot\left(1-{(\ell+k\ell+1)(1-\beta)\over k\ell}  \right)^{-(1-\beta)}\\
 &\times \left({(\ell+1)^{\ell}\cdot t(k,\ell)}\over \beta^{\beta\over (1-\beta)} \la! \right)^{1-\beta}
\left(1-{(\ell+1)(1-\beta) \over k\ell} \right)^{k\ell}.
\end{split}
\end{equation}
Using the inequality $(1-x)^{-1}\leq \exp\left({x+{x^2\over 1.4}}\right)$ for $x\leq 0.4$ we can deduce  
\begin{equation}\label{eq:betaineq}\beta^{-\beta \over 1-\beta}=\left(1-(1-\beta)\right)^{-\beta \over 1-\beta} \leq  e^{\beta +{(1-\beta)\beta \over 1.4}}.\end{equation} 
Also,
\begin{align*}
&\left(1-{(\ell+1)(1-\beta)\over k\ell}  \right)^{-1}
\leq\exp\left\lbrace{{(\ell+1)(1-\beta)\over k\ell}+ {(\ell+1)^2(1-\beta)^2 \over 1.4 (k\ell)^2}}\right\rbrace,\\
&\left(1-{(1+ \ell+ k\ell)(1-\beta)\over k\ell}  \right)^{-1/\ell}
\leq \exp\left\lbrace{{(1-\beta) (1+\ell +k\ell)\over k\ell^2}+ {(1-\beta)^2 (1+\ell+k\ell)^2\over k^2\ell^3}}\right\rbrace ,\\
&\left(1- {(\ell+1)(1-\beta)\over k\ell}\right)^{k\ell} <\exp\left(-(\ell+1)(1-\beta)- {(\la +1)^2(1-\beta)^2 \over 2k\la}\right).
 \end{align*}
 By Stirling's formula and (\ref{eq:betaineq}) we have
\[
{{(\la+1)^{\la} }\over \la!\cdot \beta^{\beta \over 1-\beta}} < {(1+1/\ell)^{\ell} \exp(\la)\over \sqrt{2\pi\ell}}  \exp\left(\beta+{\beta(1-\beta )\over 1.4}\right).
\] 
Now combining the last two terms in~\eqref{eq:conf} we obtain
\begin{align*}
&\left({(\ell+1)^{\ell}\cdot t(k,\ell)}\over \beta^{\beta\over (1-\beta)} \la! \right)^{1-\beta}
\left(1-{(\ell+1)(1-\beta) \over k\ell} \right)^{k\ell} \\
&~~~~~~~~~~~~< \left( {(1+1/\ell)^{\ell}\cdot t(k,\la)\over \sqrt{2\pi\ell}}\right)^{1-\beta}  \exp\left(\beta(1-\beta)+{\beta(1-\beta )^2\over 1.4}- (1-\beta) - {(\la+1)^2(1-\beta)^2\over 2k\la}\right)\\
 &~~~~~~~~~~~= \left( {(1+1/\ell)^{\ell}\cdot t(k,\la)\over \sqrt{2\pi\ell}}\right)^{1-\beta}  \exp\left(\beta(1-\beta)+{\beta(1-\beta )^2\over 1.4}-(1-\beta)-\left( 1+{1\over \la}\right){(\la+1)(1-\beta)^2\over 2k}\right).
\end{align*}
Also recall that
\[
t(k,\ell) = \left(1-{0.36\over k\ell}\right)^{\ell} \left(1-\exp{\left(-(k\ell-\ell +0.64)^2\over 2k\ell-0.72\right)}\right)^{-1}.\]
Substituting these bounds  in~\eqref{eq:conf} we obtain
\begin{equation}\label{eq:e(h(beta)}
 \begin{split}
e^{h(\beta)} < &\left( \left({2^k-1 \over \exp{\left(k- \Delta_{k,\ell,\beta} \right)}  } \right)^{\la} \cdot {{(1+1/\ell)^{\ell}}  \exp\left(\beta +{\beta(1-\beta )\over 1.4}-1\right)\over\sqrt{ 2\pi\la}\cdot \left(1-\exp{\left(-(k\ell-\ell +0.64)^2\over 2k\ell-0.72\right)}\right)}\right)^{1-\beta},
 \end{split}
\end{equation}
where
\begin{equation*}
 \begin{split}
 \Delta_{k,\ell,\beta}:=&\beta +{(1-\beta)\beta \over 1.4}+{(\ell+1)(1-\beta)\over k\ell}+ {(\ell+1)^2(1-\beta)^2 \over 1.4 (k\ell)^2}+ {(1-\beta) (1+k\ell+\ell)\over k\ell^2}\\
 &+{(1-\beta)^2 (1+k\ell+\ell)^2\over k^2\ell^3} -\left(1+{1\over \la}\right){(\la+1)(1-\beta) \over 2k\la} \\
 =&\beta +{(1-\beta)\beta \over 1.4}+{(\ell+1)(1-\beta)\over 2k\ell}+ {(\ell+1)^2(1-\beta)^2 \over 1.4(k\ell)^2}+ {(1-\beta) (1+k\ell+\ell)\over k\ell^2} \\
 &+{(1-\beta)^2 (1+k\ell+\ell)^2\over k^2\ell^3} - {1\over \la}~{(\la+1)(1-\beta) \over 2k\la} \\
 =&\beta +{(1-\beta)\beta \over 1.4}+{(1+1/\ell)(1-\beta) \over2 k}+ {(1+1/\ell)^2(1-\beta)^2 \over 1.4~k^2} + {(1-\beta) (1/2k\ell+1+1/2k)\over \ell}\\ &+{(1-\beta)^2 (1/k\ell+1+1/k)^2\over \ell}.
 \end{split}
 \end{equation*}
 We note that $\Delta_{k,\la,\beta}$ is decreasing in $k$ and $\la$. The partial derivative of $\Delta_{k,\ell,\beta}$ with respect to $\beta$  is given by
\begin{align*}
\Delta_{k,\ell,\beta}':={\partial{\Delta_{k,\ell,\beta}} \over \partial{\beta}}=& {12\over 7} - {10\over 7}\beta -{1+1/\ell\over2 k}- {(1+1/\ell)^2(1-\beta) \over (0.7) k^2}- {1/2k\ell+1+1/2k\over \ell}\\
 &-{2(1-\beta) (1/k\ell+1+1/k)^2\over \ell}.\end{align*}
 Observe that ${\partial{\Delta_{k,\ell,\beta}} \over \partial{\beta}}$ is increasing with $k$ and $\la$.
%
Let 
\begin{align*}
p(k,\ell,\beta):= &\left({2^k-1 \over \exp{\left(k- \Delta_{k,\ell,\beta} \right)}  }  \right) \quad and \quad g(k,\ell):={\exp(1) \over\sqrt{ 2\pi\la}\cdot \left(1-\exp{\left(-(k\ell-\ell +0.64)^2\over 2k\ell-0.72\right)}\right)}.
\end{align*}
One can check that 
\[e^{h(\beta)} <((p(k,\la,\beta))^{\la} g(k,\la))^{1-\beta}.\]
We start with the case $k\geq 4$. Firstly note that $\Delta'_{4,2,\beta}=-519/448+(297/448)\beta$ which is negative for all $\beta<1$. 
Also, as ${(2^k-1)\cdot\exp(-k)}$ is decreasing in $k$ and $\Delta_{k,\la,\beta}$ is decreasing in $k$ and $\la$ we infer that for $k\ge 4,
\la\ge 2$, thus the maximum value of $p(k,\la,\beta)$ is $p(4,2,0.6).$ Numerical computations show that $p(4,2,0.6) <0.97.$ Now, clearly
$g(k,\la)$ is decreasing in $k$ and $\la$. Moreover, one can check that $g(3,2)<0.91$, which completes the proof for $k\ge 4,\la\ge 2$.

For the case $k=3$, firstly note that $\Delta'_{3,5,\beta} =229/875-(52/125)\beta$, which implies that  $\Delta_{3,5,\beta}$ is maximized at
$\beta=\beta_{max}=229/364$. Therefore, for $\la \ge 5$, $p(3,\la,\beta)$ is maximized at $p(3,5,\beta_{max})$. 
Numerical computations show that $p(3,5,\beta_{max})<0.98.$

For the cases $\la \le 4$ ,  firstly note that $\Delta'_{3,4,\beta} =-1/21-17\beta /96 \stackrel{\beta>0}<0$. Now let
\[m(k,\la,\beta) := p(k,\la,\beta)^{\la} g(k,\la).\]

Recall that $\Delta'_{k,\la,\beta} $ is increasing in $k$ and $\la$. Also, $\Delta_{3,4,\beta} $ is decreasing in $\beta$. We can therefore conclude that for all $\beta\ge 0.6$ and $\la\le4$, $m(3,\la,\beta) \le m(3,\la,0.6) $. One can check that
$m(3,3,0.6) < 0.93$ and $m(3,4,0.6) <0.62$. The case $\la=2$ is more tedious. We substitute $k=3,\la=2$ in~\eqref{eq:e(h(beta)}. 
\begin{equation}
 \begin{split}
e^{h(\beta) \over 1-\beta} < &\left({7\over \exp{\left(3- \Delta_{3,2,\beta} \right)}  } \right)^{2} \cdot {{(1+1/2)^{2}}  \exp\left(\beta+{\beta(1-\beta )\over 1.4}-1\right)\over\sqrt{ 4\pi}\cdot \left(1-\exp{\left(-(4.64)^2\over11.28\right)}\right)}\\
<&\left({7 \over \exp{\left(3- \Delta_{3,2,\beta} -{{\beta\over2} -{\beta(1-\beta)\over2.8}}\right)}  } \right)^{2} \cdot { 2.25\cdot \exp\left(-1\right)\over\sqrt{ 4\pi}\cdot \left(1-\exp{\left(-(4.64)^2\over11.28\right)}\right)}
\end{split}
\end{equation}
Now we check that the partial derivative of $ \Delta_{3,2,\beta} +{{\beta\over2} +{\beta(1-\beta)\over2.8}}$ with respect to $\beta$ is less than $-0.91+0.47\beta$, which implies that the right-hand side is decreasing with respect to $\beta$ for $\beta\le 1$. We complete the proof by calculating the above expression for $\beta=0.6$ which gives $e^{h(\beta)}<(0.91)^{1-\beta}$.
\end{proof}

\begin{claim} \label{clm:fbb}
For any $k\ge3$ and $\ell\ge2$ there exist $\eps_0 > 0$ and $C_2 >0$ such that the following holds. For any $\eps < \eps_0$, if 
$0.6<\beta\le 1-\varepsilon$ we have $$f(\beta,\beta)<-C_2 \eps.$$
\end{claim}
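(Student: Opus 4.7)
The key insight is that the specialization $q = \beta$ makes the large-deviation term in $f(\beta,q)$ collapse to $0$. Indeed, the argument of $I_{\xi^*}$ becomes $k\ell(1-\beta)/(1-\beta) = k\ell$, and by the defining equation~\eqref{eq:xiast} of $\xi^*$ together with Lemma~\ref{lem:I}, $k\ell$ is precisely the mean of an $(\ell+1)$-truncated Poisson with parameter $\xi^*$, at which the rate function $I_{\xi^*}$ attains its unique zero. Thus
\[
g(\beta) := f(\beta,\beta) = (\ell + 1 - k\ell)\,H(\beta) + \ell(1-\beta)\ln(2^k - 1),
\]
a one-variable function whose entropy coefficient $\ell+1-k\ell \le -3$ is strictly negative for $k\ge 3$ and $\ell\ge 2$.

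Next I would exploit the strict convexity of $g$ on $(0,1)$. Since $H''(\beta) = -1/(\beta(1-\beta))$, we have $g''(\beta) = (k\ell-\ell-1)/(\beta(1-\beta)) > 0$. Combined with the endpoint value $g(1) = 0$, convex interpolation between $\beta = 0.6$ and $\beta = 1$ yields
\[
g(\beta) \;\le\; \frac{1-\beta}{0.4}\,g(0.6) \qquad\text{for every } \beta \in [0.6,\,1].
\]
So once one knows that $g(0.6) < 0$, the claim follows at once with $C_2 := -g(0.6)/0.4 > 0$: for every $\beta \in (0.6,\,1-\eps]$ we get $g(\beta) \le -C_2(1-\beta) \le -C_2\eps$. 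In particular, no smallness hypothesis on $\eps$ is needed beyond $\eps < 0.4$.

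All that remains is the numerical verification that $g(0.6) < 0$ uniformly in $k\ge 3$ and $\ell\ge 2$. Rearranged, this is $(k-1-1/\ell)\,H(0.6) > 0.4\ln(2^k-1)$. The left-hand side is monotone in $\ell$, so the extremal case is $\ell = 2$, reducing to $(k-1.5)\,H(0.6) > 0.4\ln(2^k-1)$. Using $H(0.6) \approx 0.673$ and $\ln 7 \approx 1.946$ this holds at $k=3$ by direct computation; for larger $k$, the derivative in $k$ of the difference of the two sides equals $H(0.6) - 0.4\cdot 2^k\ln 2/(2^k-1)$, which is at least $H(0.6) - 0.4\cdot (8/7)\ln 2 > 0$, preserving the inequality. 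This elementary case check is the only real computation; the real content of the argument is the identity $I_{\xi^*}(k\ell) = 0$, which turns the whole claim into a statement about a convex one-variable function vanishing at $\beta = 1$.
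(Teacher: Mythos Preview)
Your proof is correct and follows essentially the same line as the paper's: both hinge on the observation that $I_{\xi^*}(k\ell)=0$, reduce to the convex one-variable function $g(\beta)=-(k\ell-\ell-1)H(\beta)+\ell(1-\beta)\ln(2^k-1)$, and verify numerically that $g(0.6)<0$. The only difference is in how convexity is exploited: the paper bounds both endpoints separately (using $H(1-\eps)>\eps\ln(1/\eps)$ near $\beta=1$) and takes the maximum, whereas your chord inequality $g(\beta)\le\frac{1-\beta}{0.4}\,g(0.6)$ yields the linear bound in one stroke and even produces an explicit constant $C_2=-g(0.6)/0.4$ valid for all $\eps<0.4$.
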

\begin{proof}
By Lemma~\ref{lem:I}, it follows that substituting $q=\beta$ in ${k\la(1-q) \over 1-\beta}$ we have
\[
	I_{\xi^*}\left(\frac{k\ell(1-\beta)}{1-\beta}\right)= 0.
\]
So,
\begin{equation*}
\begin{split}
f(\beta, \beta) = -(k\ell-\ell-1)H(\beta) + \ell(1-\beta) \ln \left(2^k  -1 \right).
\end{split}
\end{equation*} 
Note that for any~$k\ge 3$ and $\ell\ge 2$ this function is convex with respect to~$\beta$, as~$-H(\beta)$ is convex and the linear term that
is added preserves its convexity.  Note that $-H(1-\eps) < - \eps \ln (1/\eps)$, whereby it follows that there exists a constant 
$C_2 = C_2(k,\ell) >0$ such that for any $0< \eps <1/e$ we have 
$$ f(1-\eps, 1-\eps) < - C_2 \eps \ln (1/\eps)  < -C_2 \eps. $$

Since~$H(0.6)>0.6$, we have 
$$ f(0.6,0.6)< -0.6(k\ell-\ell-1) + 0.4\ell \ln \left(2^k  -1 \right).$$
The derivative of this function with respect to~$k$ is 
$-0.6\ell + \ell  \cdot 0.4 {2^k \ln 2  \over 2^k -1}$. A simple calculation shows that the second summand is less than~$0.32\ell$ for 
all~$k\geq 3$. The derivative with respect to $\ell$ is $-0.6k+0.6 +0.4\ln(2^k-1)$ which is again a decreasing function in $k$ and less than
$-0.42$ at $k=3$. So, we may set~$k=3$ and~$\ell=2$, thus obtaining $f(0.6,0.6) < -1.8 + 0.8\ln 7 < -0.24$. The above analysis along
with the convexity of~$f(\beta, \beta)$  imply the claimed statement.
\end{proof}
\begin{claim} \label{clm:fb1-..}
For all $k\geq 3$ and $\ell\geq 2$ there is a $C_3>0$ such that for all $\eps$ and for all $\beta \leq 1-\eps$ 
$$f(\beta, 1- (\ell+1)(1-\beta)/k\ell) \le - C_3 \eps.$$ 
\end{claim}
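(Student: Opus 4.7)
The plan is to substitute $q = 1 - (\ell+1)(1-\beta)/(k\ell)$ directly into the definition of $f$; this choice makes the argument of $I_{\xi^*}$ equal to exactly $\ell+1$, where $I_{\xi^*}$ admits a particularly clean closed form. First I combine Lemma~\ref{lem:I}'s expression
\[
I_{\xi^*}(\ell+1) \;=\; \ln(\ell+1)! - (\ell+1)\ln\xi^* + \xi^* + \ln Q(\xi^*,\ell+1)
\]
with the defining relation $k\ell = \xi^* Q(\xi^*,\ell)/Q(\xi^*,\ell+1)$ and the identity $Q(\xi^*,\ell) - Q(\xi^*,\ell+1) = e^{-\xi^*}(\xi^*)^\ell/\ell!$. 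Eliminating $Q(\xi^*,\ell)$ yields $(k\ell - \xi^*)Q(\xi^*,\ell+1) = e^{-\xi^*}(\xi^*)^{\ell+1}/\ell!$, and substituting back collapses the formula to the neat identity
\[
I_{\xi^*}(\ell+1) \;=\; \ln\frac{\ell+1}{k\ell - \xi^*}.
\]

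Next I set $\eps := 1-\beta$ and $a := (\ell+1)/(k\ell)$, so $q = 1 - a\eps$. Expanding the two entropy contributions in $f$, the $\eps\ln\eps$ terms cancel exactly (since $k\ell\cdot a = \ell+1$), and I arrive at the decomposition
\[
f\!\left(\beta,\, 1-\tfrac{(\ell+1)(1-\beta)}{k\ell}\right) \;=\; P(\eps) \;+\; B\,\eps,
\]
where
\[
P(\eps) := -(\ell+1)(1-\eps)\ln(1-\eps) + k\ell(1-a\eps)\ln(1-a\eps)
\]
and $B := \ell\ln[(\ell+1)(2^k-1)] - (\ell+1)\ln(k\ell) + \ln(k\ell - \xi^*)$. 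Direct differentiation gives $P(0) = 0$, $P'(\eps) = (\ell+1)\ln\bigl((1-\eps)/(1-a\eps)\bigr)$ so $P'(0)=0$, and $P''(\eps) = (\ell+1)[a/(1-a\eps) - 1/(1-\eps)] < 0$ on $[0,1)$ (since $a<1$). Thus $P$ is concave with $P(\eps) \le 0$ throughout, so $f(\beta, q(\beta)) \le B\,\eps$ for every admissible $\beta$, and it suffices to show that $B < 0$; we then take $C_3 := -B(k,\ell) > 0$.

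Proving $B < 0$ uniformly in $k \ge 3$ and $\ell \ge 2$ is the main obstacle. Exponentiating, it is equivalent to the inequality
\[
(k\ell - \xi^*)\cdot[(\ell+1)(2^k-1)]^\ell \;<\; (k\ell)^{\ell+1}.
\]
To control $k\ell-\xi^*$ I would use the upper bound produced in the proof of Claim~\ref{clm:xi}, essentially $k\ell - \xi^* \le e^{0.36}(k\ell)(k\ell-0.36)^\ell e^{-k\ell}/[\ell!\,Q(k\ell-0.36,\ell)]$, apply Stirling's formula $\ell! \ge \sqrt{2\pi\ell}(\ell/e)^\ell$, and after cancellation reduce the task to an estimate of the form
\[
\ell\ln\!\Big(\tfrac{(1+1/\ell)(2^k-1)}{e^{\,k-1}}\Big) \;\le\; \tfrac12\ln(2\pi\ell) + O(1).
\]
For $k \ge 5$ the argument of the logarithm lies below $1$, so the left side is negative and the estimate is immediate. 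For $k = 3$ or $k = 4$ the left side is bounded by a small positive constant which in fact tends to $-\infty$ linearly in $\ell$ (since $\ln 7 < 2$ and $\ln 15 < 3$), while the $\tfrac12\ln(2\pi\ell)$ term grows without bound; a short finite case check for small $\ell$ combined with this asymptotic observation completes the argument, closely paralleling the case analysis used in Claim~\ref{clm:h}.
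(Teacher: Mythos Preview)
Your proposal is correct and follows essentially the same approach as the paper: your concavity argument $P(\eps)\le 0$ is equivalent to the paper's observation that $\partial f/\partial\beta$ is decreasing in $\beta$ and hence minimized at $\beta=1$, and both reduce to the identical inequality $(k\ell-\xi^*)\,[(\ell+1)(2^k-1)]^\ell < (k\ell)^{\ell+1}$ (your $B<0$; the paper's $\partial f/\partial\beta\big|_{\beta=1}>0$), proved via the bound on $k\ell-\xi^*$ from Claim~\ref{clm:xi} together with Stirling. The paper's verification of that last inequality is a touch more streamlined---it establishes monotonicity in both $k$ and $\ell$ simultaneously and then checks only $(k,\ell)=(3,2)$---whereas your sketch splits into cases on $k$, but the content is the same.
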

\begin{proof}
Substituting $1- (\ell+1)(1-\beta)/k\ell$ for $q$ into the formula of $f$ we obtain:
\begin{equation*}
\begin{split}
f\left(\beta, 1 - {(\ell+1)(1-\beta) \over k\ell} \right)
= &(\ell+1)H(\beta)+ \ell(1-\beta ) \ln (2^k  -1)\\
&- k \la H\left( {k\ell-(\ell+1)(1-\beta) \over k\ell}\right)-(1-\beta) I(\ell+1). 
\end{split}
\end{equation*}
Note that for $\beta = 1$ the expression is equal to 0. 
To deduce the bound we are aiming for, 
we will show that in fact $f\left(\beta, 1-{(\ell+1)(1- \beta)/ k\ell} \right)$ is an increasing function with 
respect to $\beta$. That is, we will show that its first derivative with respect to $\beta$ is positive for any $\beta \leq 1$. 
Finally, Taylor's Theorem around $\beta = 1$  implies the claim. 

We get
\begin{equation*}
\begin{split} 
{\partial f\left(\beta, 1- {(\ell+1)(1- \beta) \over k\ell} \right) \over \partial \beta}=&(\ell+1) \ln \left({1-\beta \over \beta} \right) - \ell\ln (2^k -1) \\
&- (\ell+1) \ln \left( {(\ell+1)(1-\beta) \over k\ell-(\ell+1)(1-\beta)} \right) + I(\ell+1). 
\end{split}
\end{equation*}
Substituting for $I(\ell+1)$ the value given in Lemma~\ref{lem:I} and since $e^{\xi}Q(\xi,\ell+1) = \xi^{\ell+1}/\ell!(k\ell-\xi)$ we obtain for $\beta<1$
\begin{equation*}
\begin{split} 
{\partial f\left(\beta,1 - {(\ell+1)(1- \beta ) \over k\ell} \right) \over \partial \beta}
= \ln \left(\left(k\ell-(\ell+1)(1-\beta) \over (\ell+1)\beta \right)^{\ell+1}(2^k-1)^{-\ell}\cdot{\ell+1\over k\ell-\xi}\right).
\end{split}
\end{equation*}
We will show that the fraction inside the logarithm is greater than 1. 
Note first that \[{k\ell-(\ell+1)(1-\beta)\over (\ell+1)\beta} = {1\over \beta} \left (k\la -(\la+1) \over \la +1\right) +1 = {1\over \beta} \left ((k-1)\la -1 \over \la +1\right) +1\]
is decreasing with respect to $\beta$ -- so we obtain a lower bound by setting~$\beta =1$. Substituting $\beta=1$ we obtain
\[
{\partial f\left(\beta,1 - {(\ell+1)(1- \beta ) \over k\ell} \right) \over \partial \beta}
> \ln \left(\left(k\la \over \ell+1\right)^{\ell+1}(2^k-1)^{-\ell}\cdot{\ell+1\over k\ell-\xi}\right).
\]
By Claim~\ref{clm:xi}, for all $k\geq3$ and $\ell\geq 2$ we have $k\ell-\xi \le {e^{-k\ell} (k\ell)^{\ell+1}\over \ell! (1- e^{-(k\ell-\ell+0.64)^2 / 2k\ell-0.72})}$ which yields
\begin{equation}
\label{eq:tmpFinal1}
\begin{split}
\left(k\ell\over \ell+1\right)^{\ell+1}(2^k-1)^{-\ell}\cdot{(\ell+1)\over k\ell-\xi}
\geq& {e^{k\ell} \ell! (1- e^{-(k\ell-\ell+0.64)^2 / 2k\ell-0.72})\over (2^k-1)^{\ell} (\ell+1)^{\ell} }\\
=&  {e^{k\ell} \ell! (1- e^{-(k\ell-\ell+0.64)^2 / 2k\ell-0.72})\over \la^\la(2^k-1)^{\ell} (1+1/\ell)^{\ell} }\\
\stackrel{1+x\leq e^x}>& { \ell!  \over e\cdot\la^\la}\cdot{e^{k\ell}(1- e^{-(k\ell-\ell+0.64)^2 / 2k\ell-0.72})\over (2^k-1)^{\ell}}
\end{split}
\end{equation}
Using the bounds $\ell!\geq \sqrt{2\pi \ell} (\ell/e)^{\ell}$ and $1+x\leq e^x$ we can further bound the right-hand side 
of~\eqref{eq:tmpFinal1} as follows:
\begin{equation}
\label{eq:tmpFinal}
\begin{split}
 { \ell!  \over e\cdot\la^\la}\cdot{e^{k\ell}(1- e^{-(k\ell-\ell+0.64)^2 / 2k\ell-0.72})\over (2^k-1)^{\ell}}
~\geq{\sqrt{2\pi \ell} \over e^{\ell+1} }\cdot {e^{k\ell} (1- e^{-(k\ell-\ell+0.64)^2 / 2k\ell-0.72})\over (2^k-1)^{\ell} }.
\end{split}
\end{equation}
It is easy to verify that $\sqrt{2\pi \ell}(1- e^{-(k\ell-\ell+0.64)^2 / 2k\ell-0.72})$ is increasing in $k$ and $\ell$. Also the first derivative of 
the function $e^{k}/ (2^k-1)$ with respect to $k$ is $e^k (2^k(1-\ln(2))-1)/(2^k-1)^2$ which is positive for any $k\geq 3$. Moreover the first
derivative of the function $e^{k\ell-\ell-1}/(2^k-1)^{\ell}$ with respect to $\ell$ is $e^{k\ell-\ell-1}(2^k-1)^{-\ell}(k-\ln(2^k-1)-1)$ which is
positive for any $k\geq 3$ and $\ell\geq 2$. So we infer that the right-hand side of the above inequality is increasing in both $k$ and $\ell$.
Numerical calculations show that the right hand side of the above inequality is greater than $1.2$ for $k=3 , \ell=2$. The above arguments
establish the fact that the derivative of $f\left(\beta, 1 -{(\ell+1)(1- \beta) / k\ell} \right)$ with respect to $\beta$ is positive, for all 
$k\geq 3$ and $\ell\geq2$. 
\end{proof}

\small
\bibliographystyle{plain}
\bibliography{l-Orient}

\begin{thebibliography}{10}

\bibitem{BenCan}
E.A. Bender and E.R. Canfield.
\newblock The asymptotic number of labelled graphs with given degree sequence.
\newblock {\em Journal of Combinatorial Theory, Series A}, 24(3):296 -- 307,
  1978.

\bibitem{Bol1}
B.~Bollob{\'{a}}s.
\newblock A probabilistic proof of an asymptotic formula for the number of
  labelled regular graphs.
\newblock {\em European Journal of Combinatorics}, 1:311--316, 1980.

\bibitem{1283433}
J.~A. Cain, P.~Sanders, and N.~Wormald.
\newblock The random graph threshold for k-orientiability and a fast algorithm
  for optimal multiple-choice allocation.
\newblock In {\em Proceedings of the 18th annual ACM-SIAM symposium on Discrete
  algorithms (SODA 2007)}, pages 469--476, 2007.

\bibitem{ar:c04}
C.~Cooper.
\newblock The cores of random hypergraphs with a given degree sequence.
\newblock {\em Random Structures \& Algorithms}, 25(4):353--375, 2004.

\bibitem{Dietz09}
M.~Dietzfelbinger, A.~Goerdt, M.~Mitzenmacher, A.~Montanari, R.~Pagh, and
  M.~Rink.
\newblock Tight thresholds for cuckoo hashing via {XORSAT}.
\newblock In {\em Proceedings of the 37th International Colloquium on Automata,
  Languages and Programming (ICALP 2010)}, volume 6198 of {\em Lecture Notes in
  Computer Science}, pages 213--225. 2010.

\bibitem{b:e06}
R.~Ellis.
\newblock {\em Entropy, large deviations, and statistical mechanics}.
\newblock Classics in Mathematics. Springer-Verlag, Berlin, 2006.

\bibitem{1283432}
D.~Fernholz and V.~Ramachandran.
\newblock The k-orientability thresholds for ${G}_{n,p}$.
\newblock In {\em Proceedings of the 18th annual ACM-SIAM symposium on Discrete
  algorithms (SODA 2007)}, pages 459--468, 2007.

\bibitem{inc:fpss03}
D.~Fotakis, R.~Pagh, P.~Sanders, and P.~Spirakis.
\newblock Space efficient hash tables with worst case constant access time.
\newblock In {\em Proceedings of the 20th Annual Symposium on Theoretical
  Aspects of Computer Science (STACS 2003)}, volume 2607 of {\em Lecture Notes
  in Computer Science}, pages 271--282. 2003.

\bibitem{fp10}
N.~Fountoulakis and K.~Panagiotou.
\newblock Orientability of random hypergraphs and the power of multiple
  choices.
\newblock In {\em Proceedings of the 37th International Colloquium on Automata,
  Languages and Programming (ICALP 2010)}, volume 6198 of {\em Lecture Notes in
  Computer Science}, pages 348--359. 2010.

\bibitem{fp12}
N.~Fountoulakis and K.~Panagiotou.
\newblock Sharp load thresholds for cuckoo hashing.
\newblock {\em Random Structures \& Algorithms}, 41(3):306--333, 2012.

\bibitem{fm12}
A.~Frieze and P.~Melsted.
\newblock Maximum matchings in random bipartite graphs and the space
  utilization of cuckoo hash tables.
\newblock {\em Random Structures \& Algorithms}, 41(3):334--364, 2012.

\bibitem{gw10}
P.~Gao and N.~C. Wormald.
\newblock Load balancing and orientability thresholds for random hypergraphs.
\newblock In {\em Proceedings of the 42nd ACM Symposium on Theory of Computing
  (STOC 2010)}, pages 97--104, 2010.

\bibitem{b:jlr00}
S.~Janson, T.~{\L}uczak, and A.~Ruci\'nski.
\newblock {\em Random graphs}.
\newblock Wiley-Interscience Series in Discrete Mathematics and Optimization.
  Wiley-Interscience, New York, 2000.

\bibitem{ar:k06}
J.~H. Kim.
\newblock Poisson cloning model for random graphs.
\newblock Manuscript, 2004.

\bibitem{inp:LecLelargeMas}
M.~Leconte, M.~Lelarge, and L.~Massouli\'e.
\newblock Convergence of multivariate belief propagation, with applications to
  cuckoo hashing and load balancing.
\newblock In {\em Proceedings of the 24th ACM-SIAM Symposium on Discrete
  Algorithms (SODA 2013)}, pages 35--46, 2013.

\bibitem{inp:Lelarge}
M.~Lelarge.
\newblock A new approach to the orientation of random hypergraphs.
\newblock In {\em Proceedings of the 23th ACM-SIAM Symposium on Discrete
  Algorithms (SODA 2012)}, pages 251--264, 2012.

\bibitem{ar:m05}
M.~Molloy.
\newblock Cores in random hypergraphs and boolean formulas.
\newblock {\em Random Structures \& Algorithms}, 27(1):124--135, 2005.

\bibitem{inp:pr01}
R.~Pagh and F.~F. Rodler.
\newblock Cuckoo hashing.
\newblock In {\em Proceedings of the 9th Annual European Symposium on
  Algorithms (ESA 2001)}, pages 121--133, 2001.

\bibitem{sek99}
P.~Sanders, S.~Egner, and J.~Korst.
\newblock Fast concurrent access to parallel disks.
\newblock In {\em Proceedings of the 11th annual ACM-SIAM Symposium on Discrete
  Algorithms (SODA 1999)}, pages 849--858, 1999.

\end{thebibliography}

%

\end{document}